\newtheorem{theorem}{Theorem}[section]
\newtheorem{lemma}[theorem]{Lemma}
\newtheorem{definition}[theorem]{Definition}
\newtheorem{proposition}[theorem]{Proposition}
\newtheorem{remark}[theorem]{Remark}
\newcommand{\C}{\mathbb{C}}
\newcommand{\R}{\mathbb{R}}
\newcommand{\Z}{\mathbb{Z}}
\renewcommand{\AA}{\mathcal A}
\renewcommand{\L}{\mathfrak L}
\newcommand{\al}{\alpha}
\newcommand{\be}{\beta}
\newcommand{\ga}{\gamma}
\newcommand{\de}{\delta}
\newcommand{\epsi}{\varepsilon}
\newcommand{\F}{\mathcal{F}}
\newcommand{\sgn}{\operatorname{sgn}}
\newcommand{\rr}{\mathbf r}
\newcommand{\bfc}{\mathfrak{c}}
\newcommand{\bfd}{\mathfrak{d}}
\newcommand{\bfz}{\mathfrak{z}}
\newcommand{\thq}{\theta_q}
\newcommand{\z}{\zeta}
\newcommand{\zm}{\zeta_-}
\newcommand{\zp}{\zeta_+}
\newcommand{\D}{\mathcal{D}}
\newcommand{\Y}{\mathfrak{Y}}
\newcommand{\K}{\mathfrak{K}}
\newcommand{\tilP}{\widetilde{P}}
\newcommand{\tilQ}{\widetilde{Q}}
\newcommand{\hatK}{\widehat{\mathfrak{K}}}
\newcommand{\hatk}{\widehat{K}}
\newcommand{\tilK}{\widetilde{K}}
\newcommand{\X}{\mathfrak{X}}
\newcommand{\Om}{\Omega}
\newcommand{\om}{\omega}
\newcommand{\Res}{\text{Res}}
\newcommand{\Conf}{\operatorname{Conf}}
\numberwithin{equation}{section}
\begin{document}

\title{The elliptic tail kernel}

\author{Cesar Cuenca, Vadim Gorin and Grigori Olshanski}


\maketitle

\begin{abstract}
We introduce and study a new family of $q$-translation-invariant determinantal point processes on the two-sided $q$-lattice.
We prove that these processes are limits of the $q$--$zw$ measures, which arise in the $q$-deformation of harmonic analysis on $U(\infty)$, and express their correlation kernels in terms of Jacobi theta functions.
As an application, we show that the $q$--$zw$ measures are diffuse. Our results also hint at a link between the two-sided $q$-lattice and rows/columns of Young diagrams.
\end{abstract}

\setcounter{tocdepth}{1}
\tableofcontents

\section{Introduction}

\subsection{Preface}

The subject of this paper is the study of a family of ``tail processes'' associated to certain random point processes of representation-theoretic origin.
All the point processes in this article are determinantal and thus our study will focus on their correlation kernels.

Recall that a random point process on a locally compact space $\X$ is defined by a probability measure on the space $\operatorname{Conf}(\X)$ of locally-finite point configurations.
Informally, its associated tail process describes the random point configuration near infinity.
A simple example is provided by the well-known family of \emph{Poisson--Dirichlet distributions} $PD(\tau)$, $\tau > 0$, which describe asymptotics of random ranked relative frequencies in various contexts, see \cite{Ki}, \cite{Pit} and references therein.
In this case, the space is $\X = (0, 1]$.
Locally-finite point configurations accumulate near $0$, which is excluded from $\X$ and plays the role of the boundary point at infinity.
Under the change of variables $x = e^{-t}$, the interval $(0, 1]$ is transformed into $[0, +\infty)$, so that the boundary point is transferred to $+\infty$.
Here the tail processes turn out to be stationary Poisson processes: they arise in the limit transition when we set $t = A + s$ and take the limit $A \to +\infty$.

Another example comes from the problem of harmonic analysis on the infinite symmetric group, \cite{BO_1998}, \cite{BO_2000}. That problem leads to a two-parameter family of measures on $\operatorname{Conf}(\R\setminus\{0\})$, called the \emph{$z$--measures} (the two parameters are usually labeled $z, z'$, hence the terminology).
Again, we are interested in the behavior of random point configurations near the origin.
Notice that there is an important difference with the previous example, namely that points on $\X$ can approximate the point $0$ from the right and from the left. As explained in \cite{BO_1998}, see also \cite{O}, the tail process of a $z$--measure lives on the space $\operatorname{Conf}(\R \sqcup \R)$ -- where $\R \sqcup \R$ is the disjoint union of two copies of the real line -- and is stationary in the sense that it is invariant with respect to simultaneous shifts on both of the lines.

Both the $z$--measures and their tail processes belong to the class of \emph{determinantal measures}. A detailed discussion of this notion can be found in \cite{So}; here we only point out that a determinantal measure on $\operatorname{Conf}(\X)$ is uniquely determined by a complex-valued function on $\X\times\X$ called a \emph{correlation kernel}.

In the case of the tail process of a $z$--measure, the correlation kernel can be treated as a certain stationary $2\times 2$ matrix kernel on $\R$.
It is expressed through trigonometric functions and we call it the \emph{matrix trigonometric kernel}.

The matrix trigonometric kernel has some resemblance with the famous sine kernel from random matrix theory. Like the sine kernel, the matrix trigonometric kernel possesses a universality property: it serves as the tail kernel for not only the $z$--measures, but also for other determinantal measures of representation-theoretic origin, see \cite{BO_2005}.
Those other measures arise in the context of harmonic analysis on infinite-dimensional classical Lie groups: unitary, orthogonal and symplectic, see \cite{BO_2005_2} for the unitary picture and \cite{C} for the orthogonal and symplectic pictures.

As shown in \cite{GO}, the problem of harmonic analysis on the infinite-dimensional unitary group admits a kind of quantization, which leads to a new family of determinantal point processes. These processes live on a two-sided $q$-lattice --- a countable subset of $\R\setminus\{0\}$ accumulating near $0$ from both sides (see below). We use the name \emph{$q$--$zw$ measures} for these objects, as the analogous measures in the $q = 1$  case are called the \emph{$zw$ measures} (they depend on four parameters usually labeled $z, z', w, w'$).
The study of the tail processes of the $q$--$zw$ measures is the central topic of this text.
In this direction, we discover a two-parameter family of $q$-translation-invariant kernels on the two-sided $q$-lattice. They are expressed in terms of theta functions. We believe that these kernels are fundamental objects in the realm of determinantal processes on the two-sided $q$-lattice.

The question of quantizations of constructions in asymptotic representation theory has led to an extensive theory in recent years. In particular, \cite{Go} investigated the $q$-versions of the extreme characters of the infinite-dimensional unitary group, \cite{CG} produced such results for the infinite-dimensional orthogonal and symplectic groups, \cite{C_Mac} added Macdonald's parameter $t$ to the theory, \cite{BG1} studied the related Markov processes, and \cite{Sato1}, \cite{Sato2} linked the extreme $q$-characters to representations of inductive limits of compact quantum groups.

There have also been a few combinatorial developments related to the $q$--$zw$ measures: in \cite{O2}, the associated Markov processes and splines were investigated; in \cite{O3}, \cite{CO}, the $q$--$zw$ measures motivated the construction of new families of orthogonal symmetric functions.
In spite of these efforts, and in contrast to the $q=1$ case \cite{BO_2005_2}, the representation--theoretic meaning of the $q$--$zw$ measures is not fully understood at this moment. In particular, we do not know the proper role of the two-sided $q$-lattice in the representation--theoretic picture.

The study of the tail processes of $q$--$zw$ measures is, thus, also motivated by the attempt to better understand the $q$--$zw$ measures themselves.
We partially accomplish that goal --- along the path we show that the $q$--$zw$ measures are diffuse.
Furthermore, while investigating a $q\to 1$ limit of the tail processes, we manage to link the mysterious two-sided $q$-lattice to separate encodings of rows and columns of Young diagrams.

\subsection{The measures $M^{\al, \be, \ga, \de}$}

The two-sided $q$-lattice mentioned above has the form
$$
\L := \{ \ldots, \zm q^{-1}, \zm, \zm q, \ldots \} \sqcup \{ \ldots, \zp q, \zp, \zp q^{-1}, \ldots \},
$$
where $\zp > 0 > \zm$ and $q\in (0, 1)$ are fixed parameters.
The determinantal processes investigated in \cite{GO} are given by certain probability measures $M^{\al, \be, \ga, \de}$ on the space of point configurations $\operatorname{Conf}(\L)$. Here $(\al, \be, \ga, \de)$ is a quadruple of complex numbers subject to some constraints. We continue to use the name $q$--$zw$ measure for $M^{\al, \be, \ga, \de}$, although the origin of the name might not be transparent in our present notation.
The measure $M^{\al, \be, \ga, \de}$ is defined as the $N\to\infty$ limit of the measures
 \begin{equation} \label{eq_measure_def}
  M_N^{\al, \be, \ga, \de}(X)= \frac{1}{Z_N} \prod_{i=1}^N \left(|x_i|\frac{(\al x_i; q)_{\infty}(\be x_i; q)_{\infty}}{(\ga q^{1-N}x_i; q)_{\infty}(\de q^{1-N}x_i; q)_{\infty}}\right) \prod_{1\le i<j\le N}
  (x_i - x_j)^2,
 \end{equation}
on $N$--particle configurations $X=\{x_1,\dots,x_N\}$ on $\L$. Here $Z_N$ is an explicit normalization constant and $(u;q)_\infty=\prod_{i=0}^{\infty}(1-uq^i)$ is the infinite $q$-Pochhammer symbol.

The $q$--$zw$ measure $M^{\alpha, \beta, \gamma, \delta}$ possesses a reflection symmetry property: the transposition of the positive and negative parts of the two-sided $q$-lattice (together with the changes $\zp \to -\zm$ and $\zm \to -\zp$) is equivalent to changing the signs of the parameters $\al, \be, \ga, \de$.
We shall later see that a similar property holds for the tail process, but that the symmetry is destroyed in the limit transition $q \to 1$.

 There exists a function $K^{\al, \be, \ga, \de}(x,y)$ on $\L\times\L$ such that for any $n = 1, 2, \dots$, and any given $n$-element set $\{x_1,\dots,x_n\}$ of $\L$, one has
$$
\text{Prob}\left( \{x_1, \ldots, x_n\} \subset \text{ $M^{\al, \be, \ga, \de}$-- random subset of }\L \right) = \det_{1 \leq i, j \leq n}\left[ K^{\al, \be, \ga, \de}(x_i, x_j) \right].
$$
This means that $M^{\al, \be, \ga, \de}$ is a determinantal measure and  $K^{\al, \be, \ga, \de}(x,y)$ serves as its correlation kernel.
An explicit expression for the kernel $K^{\al, \be, \ga, \de}(x,y)$ was obtained in \cite{GO}; we reproduce it below, see \eqref{basickernel}. The formula involves $q$-factorials, theta functions and the basic hypergeometric function $_2\phi_1$ (a $q$-version of Gauss' hypergeometric function). We call $K^{\al, \be, \ga, \de}(x,y)$ the \emph{basic hypergeometric kernel}.

The reader is referred to \cite{GO} for more detailed information about the measures  $M^{\al, \be, \ga, \de}$ and their connection to the problem of harmonic analysis on the infinite-dimensional unitary group.
Several other kernels of representation-theoretic origin are known in the literature, see \cite{BO_1998}, \cite{BO_2000}, \cite{BO_2005}, \cite{BO_2005_2}, \cite{C}. They involve various hypergeometric functions (up to ${}_4F_3$), but not $q$-hypergeometric ones.

\subsection{Summary of results}

\subsubsection{The tail process}

Note that the  $q$-lattice $\L$ is invariant under the homotheties $T^{\pm1}: x\mapsto q^{\pm1}x$. Given a probability measure $M$ on $\Conf(\L)$, define the transformed measure $T^{-k}M$ by
$$
(T^{-k}M)(A):= M(q^kA), \qquad A\subset \Conf(\L), \quad k\in\Z.
$$

The next result describes the tail processes of the $q$--$zw$ measures, namely the behavior of a random point configuration of $\L$ (distributed according to the $q$--$zw$ measure $M^{\al, \be, \ga, \de}$) near the origin.

\begin{theorem}\label{thm1}[see Theorem \ref{thm:mainlimit} below]
As $k\to +\infty$, the measures $T^{-k}M^{\al,\be,\ga,\de}$ weakly converge to a probability measure $M^{\ga,\de}$, which depends on $\ga$ and $\de$ only.
The measure $M^{\ga, \de}$ is determinantal and a correlation kernel $K^{\ga,\de}(x,y)$ for it is given in \eqref{thetakernel}.
\end{theorem}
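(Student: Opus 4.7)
The plan is to reduce weak convergence of the measures $T^{-k}M^{\al,\be,\ga,\de}$ to convergence of their correlation kernels. Since $M^{\al,\be,\ga,\de}$ is determinantal with kernel $K^{\al,\be,\ga,\de}$, the rescaled measure $T^{-k}M^{\al,\be,\ga,\de}$ is determinantal with kernel $(x,y) \mapsto K^{\al,\be,\ga,\de}(q^k x, q^k y)$, because the correlation functions transform by substitution of arguments. The task therefore splits into: (i) computing the pointwise limit of these scaled kernels on $\L \times \L$ and identifying it with a theta-function expression; (ii) upgrading pointwise kernel convergence to weak convergence of the point processes.

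For step (i), the starting point is the explicit formula \eqref{basickernel} for $K^{\al,\be,\ga,\de}$, whose building blocks are $q$-Pochhammer symbols, Jacobi theta functions, and the basic hypergeometric ${}_2\phi_1$. The prefactors of the form $(\al q^k x; q)_\infty$ and $(\be q^k x; q)_\infty$ each tend to $1$ for fixed $x \in \L$, which at the formal level is what eliminates the $\al, \be$-dependence in the limit. The substantive work is the $k \to +\infty$ asymptotics of the ${}_2\phi_1$ factors whose arguments are rescaled by $q^k$: combining Heine's transformations with the Jacobi triple product, one should be able to rewrite these ${}_2\phi_1$'s in a form whose limit is an explicit ratio of theta functions evaluated at $\ga x$ and $\de x$. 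I expect the resulting limit kernel to retain a Christoffel--Darboux-type structure $(A(x)B(y)-B(x)A(y))/(x-y)$, now with $A$ and $B$ being theta-function combinations, matching \eqref{thetakernel}. This step is the main technical obstacle.

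For step (ii), I would invoke the standard principle that, for determinantal processes on a discrete locally compact space, pointwise convergence of correlation kernels together with a uniform local bound on their diagonal values implies convergence of all correlation functions, and hence weak convergence of the measures (the discrete analog of the Lenard-type reconstruction theorem). On $\L$, the required tightness reduces to verifying that $\sum_{x \in B} K^{\al,\be,\ga,\de}(q^k x, q^k x)$ stays bounded as $k \to \infty$ for each finite $B \subset \L$, which is immediate once the pointwise limit in step (i) has been computed, via Fatou's lemma. Finally, the limiting object is automatically a probability measure rather than a sub-probability, because on each finite subset of $\L$ the full family of inclusion-exclusion identities for $T^{-k}M^{\al,\be,\ga,\de}$ passes to the limit by the pointwise convergence of determinants of the kernel.
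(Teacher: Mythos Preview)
Your overall architecture---reduce to kernel asymptotics, then upgrade to weak convergence---matches the paper's. But there are two concrete gaps in step (i).

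First, the pointwise limit $\lim_{k\to\infty}K^{\al,\be,\ga,\de}(q^kx,q^ky)$ does \emph{not} exist when $x$ and $y$ have opposite signs. What the paper actually proves (Theorem~\ref{thm:mainlimit}) is
\[
\lim_{M\to\infty}\bigl(\sgn(x)\sgn(y)\bigr)^M K^{\al,\be,\ga,\de}(q^Mx,q^My)=K^{\ga,\de}(x,y),
\]
so a $k$-dependent gauge factor $(\sgn x)^k(\sgn y)^{-k}$ must be inserted before any limit can be taken. This does not affect the correlation functions (the gauge cancels in determinants), but your plan as written---``computing the pointwise limit of these scaled kernels''---would simply fail for mixed-sign pairs without this correction.

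Second, your sketch ``Heine's transformations plus the Jacobi triple product'' is not enough. After Heine, the ${}_2\phi_1$ in $\F_{\rr}(q^kx)$ has argument $q/(\be q^kx)$, which blows up as $k\to\infty$; you cannot read off the asymptotics from that form. The paper's key step is to follow Heine with \emph{Watson's connection formula} \cite[(4.3.2)]{GR}, which rewrites the ${}_2\phi_1$ as a sum of two ${}_2\phi_1$'s with argument $\al q^kx\to 0$, each multiplied by a ratio of theta functions. It is precisely these theta factors---coming from Watson, not from the triple product---that survive in the limit and produce the elliptic tail kernel. Your comment that $(\al q^kx;q)_\infty\to 1$ explains the disappearance of $\al,\be$ is also too quick: $\al$ and $\be$ appear throughout the ${}_2\phi_1$ parameters and in the constant $\mathfrak{C}$, and their cancellation is only visible after the Watson step and after combining the cross terms in the numerator of \eqref{basickernel}.

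Your step (ii) is fine, and in fact more explicit than the paper, which treats kernel convergence as equivalent to weak convergence without comment.
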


The kernel $K^{\ga, \de}(x, y)$ is expressed in terms of theta functions, and we call it the  \emph{elliptic tail kernel}.
By its very definition, the measure $M^{\ga,\de}$ is stationary, i.e., it is invariant under the transformations $T^{\pm 1}$.
After a simple gauge transformation, which does not affect the measure $M^{\ga, \de}$, the kernel $K^{\ga,\de}$ becomes stationary too --- see Proposition \ref{Kperiodic}.

The measures $M^{\ga,\de}$ are responsible for the limit behavior of the random configuration near the origin.
The proof of Theorem \ref{thm1} relies on the analysis of the asymptotics of the basic hypergeometric kernel $K^{\al,\be,\ga,\de}(x,y)$ at $(0, 0)\in\L^2$.

The reader might ask why we associate our tail processes exclusively with the asymptotics at $0$ and do not examine the limit behavior of random configurations at infinity; the reason is that the configurations are almost surely bounded away from $\pm\infty$, so that $0$ is the only accumulation point (see \cite{GO}).

\subsubsection{Absence of atoms} Recall that a measure is said to be \emph{diffuse} if it has no atoms.

\begin{theorem}\label{thm2}[see Theorem \ref{thm:diffuse} below]
The measures $M^{\al,\be,\ga,\de}$ are diffuse.
\end{theorem}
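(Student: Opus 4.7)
The plan is to bootstrap from Theorem \ref{thm1}: any atom of $M^{\al,\be,\ga,\de}$ produces, via scaling by $T^{-k}$ and passage to the limit, an atom of the tail measure $M^{\ga,\de}$; and the $T^{\pm 1}$-invariance of $M^{\ga,\de}$ then reduces the absence of atoms to a one-point check on the elliptic tail kernel.

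Suppose, toward a contradiction, that $M^{\al,\be,\ga,\de}(\{X_0\}) = c > 0$ for some $X_0 \in \Conf(\L)$. Fix an exhaustion $W_1 \subset W_2 \subset \cdots$ of $\L$ by finite sets, e.g.\ $W_n = \{\zp q^j, \zm q^j : |j| \le n\}$. Each cylinder $\{X : X \cap W_n = Y\}$, for $Y \subset W_n$, is clopen in the vague topology on $\Conf(\L)$, and the atom at $X_0$ forces
\[
(T^{-k} M^{\al,\be,\ga,\de})\bigl(\{X : X \cap W_n = (q^{-k} X_0) \cap W_n\}\bigr) \ge c \quad \text{for all } k,n.
\]
A diagonal argument, using that each $W_n$ has only finitely many subsets, produces a subsequence $k_j \to +\infty$ and a single configuration $Y_\infty \in \Conf(\L)$ such that $(q^{-k_j} X_0) \cap W_n = Y_\infty \cap W_n$ eventually in $j$, for every $n$. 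Passing to the weak limit of Theorem \ref{thm1} through the clopen cylinders and then intersecting the nested events yields $M^{\ga,\de}(\{Y_\infty\}) \ge c$.

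By the $T^{\pm 1}$-invariance of $M^{\ga,\de}$, every translate $T^j Y_\infty$ is an atom of the same mass, so finiteness of the total measure forces $Y_\infty$ to be $T$-fixed. The only such configurations in $\Conf(\L)$ are $\emptyset$, $\L_+ := \{\zp q^j : j \in \Z\}$, $\L_- := \{\zm q^j : j \in \Z\}$, and $\L$. Each is excluded via the elementary correlation bounds
\[
M^{\ga,\de}(\{X = \emptyset\}) \le 1 - K^{\ga,\de}(x,x), \qquad M^{\ga,\de}(\{X = \L\}) \le K^{\ga,\de}(x,x),
\]
with the analogous one-sided bounds for $\L_\pm$. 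Since the stationary gauge of Proposition \ref{Kperiodic} preserves the diagonal, $K^{\ga,\de}(x,x)$ takes only two values, constant on each of $\L_+$ and $\L_-$, and the task reduces to confirming from the explicit formula \eqref{thetakernel} that both of these constants lie strictly between $0$ and $1$.

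The main subtlety is the diagonalization in Step 1: one must arrange the pigeonhole so that the truncations $Y_\infty \cap W_n$ remain consistent as $n$ grows, and verify that the weak convergence of Theorem \ref{thm1} transfers the lower bound $c$ unchanged to clopen cylinder events in the vague topology. Once this is in place, the rest is immediate from stationarity together with a direct inspection of the theta-function formula for $K^{\ga,\de}$.
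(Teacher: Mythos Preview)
Your approach differs from the paper's and, as written, has two gaps.

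First, ``finiteness of the total measure forces $Y_\infty$ to be $T$-fixed'' is not correct: it only forces the $T$-orbit of $Y_\infty$ to be finite, i.e.\ $Y_\infty$ to be $T^p$-periodic for some $p\ge 1$. For each such $p$ there are $4^p$ configurations (a $p$-periodic subset of $\Z$ on each of $\L_+,\L_-$), not just the four $T$-fixed ones. This is a minor issue, since whatever exclusion argument handles $\emptyset,\L_\pm,\L$ would handle the periodic ones as well --- once such an argument is in place.

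Second, and more seriously, the one-point bounds you quote do \emph{not} exclude anything: $M^{\ga,\de}(\{\emptyset\})\le 1-K^{\ga,\de}(x,x)$ together with $K^{\ga,\de}(x,x)\in(0,1)$ yields only $M^{\ga,\de}(\{\emptyset\})<1$, not $=0$, and likewise for $\L,\L_\pm$. To force the mass to zero you need a product bound over growing windows, say $M^{\ga,\de}(\{\emptyset\})\le\det\bigl(I-K^{\ga,\de}|_W\bigr)\le\prod_{x\in W}\bigl(1-K^{\ga,\de}(x,x)\bigr)\to 0$; but the Hadamard step here requires $K^{\ga,\de}$ to be a positive contraction (equivalently, Theorem~\ref{thm:projection}), which is an independent result proved only later. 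With that input your argument becomes essentially the dichotomy criterion Theorem~\ref{thm:generaldiffuse} applied to $M^{\ga,\de}$, at which point the whole transfer-of-atoms step is no longer doing work.

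The paper's route is more direct: it applies the dichotomy Theorem~\ref{thm:generaldiffuse} to $M^{\al,\be,\ga,\de}$ itself. Theorem~\ref{thm:mainlimit} is used only to see that $\rho_1(x)=K^{\al,\be,\ga,\de}(x,x)\to K^{\ga,\de}(\zeta_\pm,\zeta_\pm)$ as $x\to 0^\pm$, so that $\rho_1^*(x)=\min(\rho_1(x),1-\rho_1(x))$ is bounded away from $0$ near the origin and $\sum_x\rho_1^*(x)=\infty$. The needed inequality $0<K^{\ga,\de}(\zeta_\pm,\zeta_\pm)<1$ is not read off the diagonal formula; instead, the paper argues by contradiction: if, say, $K^{\ga,\de}(\zp,\zp)=1$, then by stationarity $\L_+\subset X$ a.s., forcing all two-point correlations on $\L_+$ to equal $1$ and hence the off-diagonal values $K^{\ga,\de}(\zp q^m,\zp q^n)$ to vanish --- but Lemma~\ref{ppsimplifying} shows they do not.
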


This result is deduced from the existence of the tail process and a simple general criterion to determine whether a determinantal measure is diffuse, which is of some independent interest.

\subsubsection{The projection property.} A kernel on a discrete space $\X$ is said to be a \emph{projection kernel} if it corresponds to a projection operator on the Hilbert space $\ell^2(\X)$ (that is, the operator of orthogonal projection onto a subspace). With a suitable modification, the definition can be extended to nondiscrete spaces too.
In many concrete examples of determinantal processes, the correlation kernels turn out to be projection kernels. The projection property seems to be extremely important: many strong results about determinantal processes rely on it, see e.g. \cite{Bu}, \cite{Lyons}.

\begin{theorem}\label{thm3}[see Theorem \ref{thm:projection} below]
The elliptic tail kernel $K^{\ga, \de}(x,y)$, $x, y\in\L$, is a projection kernel.
\end{theorem}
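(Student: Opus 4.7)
The plan is to exploit the $q$-translation invariance of the elliptic tail kernel and reduce the projection property to a pointwise condition on a $2\times 2$ matrix symbol on the circle.

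First, I would apply Proposition \ref{Kperiodic} to replace $K^{\ga,\de}(x,y)$ by its gauge-transformed version $\widetilde K^{\ga,\de}(x,y)=g(x)K^{\ga,\de}(x,y)g(y)^{-1}$ with $|g|\equiv 1$; since $g$ is unimodular, multiplication by $g$ is a unitary operator on $\ell^2(\L)$, so the operators attached to $K^{\ga,\de}$ and $\widetilde K^{\ga,\de}$ are unitarily equivalent, and it suffices to prove that $\widetilde K^{\ga,\de}$ defines an orthogonal projection. By that proposition, the new kernel is stationary: $\widetilde K^{\ga,\de}(qx,qy)=\widetilde K^{\ga,\de}(x,y)$. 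Parameterize $\L$ by $(\epsi,n)\in\{-,+\}\times\Z$ via $\z_\epsi q^n\leftrightarrow(\epsi,n)$, so that $T$ acts as the shift $n\mapsto n+1$ on each branch and $\ell^2(\L)=\ell^2(\Z)\oplus\ell^2(\Z)$. The discrete Fourier transform $\mathcal F\colon\ell^2(\Z)\to L^2(\mathbb T)$, applied to each summand, intertwines the shift with multiplication by $e^{i\theta}$, so the stationary operator associated with $\widetilde K^{\ga,\de}$ is conjugated to multiplication on $L^2(\mathbb T;\C^2)$ by the Hermitian matrix symbol
\[
\Sigma(\theta)=\bigl(\Sigma_{\epsi\epsi'}(\theta)\bigr)_{\epsi,\epsi'\in\{-,+\}},\qquad \Sigma_{\epsi\epsi'}(\theta)=\sum_{n\in\Z}\widetilde K^{\ga,\de}(\z_\epsi q^n,\z_{\epsi'})\,e^{in\theta}.
\]
The operator is an orthogonal projection if and only if $\Sigma(\theta)^2=\Sigma(\theta)$ for almost every $\theta$; in the $2\times 2$ Hermitian case this is equivalent to the eigenvalues of $\Sigma(\theta)$ lying in $\{0,1\}$ pointwise.

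The remaining task is to compute $\Sigma(\theta)$ from the explicit formula \eqref{thetakernel} and to verify this idempotence; this is the main technical hurdle. I expect that, after invoking the Jacobi triple product and related theta-function identities to unfold the explicit formula, the symbol factors as $\Sigma(\theta)=v(\theta)v(\theta)^{*}$ for a unit $\C^2$-valued function $v(\theta)$, so that $\Sigma(\theta)$ is a rank-one Hermitian idempotent. Equivalently, this amounts to exhibiting an orthonormal family $\{\phi_m\}_{m\in\Z}\subset\ell^2(\L)$ parameterized by frequency such that
\[
\widetilde K^{\ga,\de}(x,y)=\sum_{m\in\Z}\phi_m(x)\overline{\phi_m(y)},
\]
and then $\widetilde{\K}$ is the orthogonal projection onto the closed span of the $\phi_m$. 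An alternative route I would keep in reserve is to try to inherit the projection property directly from the finite-$N$ kernels $K_N^{\al,\be,\ga,\de}$ of \eqref{eq_measure_def}, which are rank-$N$ Christoffel--Darboux projections, by passing idempotence through the successive limits $N\to\infty$ and then $k\to\infty$ of Theorem~\ref{thm1}; but preserving idempotence under limits of non-monotone projection sequences is delicate, so the Fourier-symbol approach looks more direct and self-contained.
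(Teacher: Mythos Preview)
Your proposal is correct and follows essentially the same route as the paper: gauge-transform to the $q$-translation-invariant kernel, identify $\L$ with $\Z\sqcup\Z$, pass to the $2\times2$ Fourier symbol, and verify pointwise that the symbol is a rank-one Hermitian idempotent. The paper carries out exactly this computation (Proposition~\ref{prop:fourier} and Lemma~\ref{lem:fourier}), summing the Fourier series via Ramanujan's ${}_1\psi_1$ identity and a related bilateral sum, and then checking $\det\hatK(\eta)=0$ and $\operatorname{tr}\hatK(\eta)=1$ using the Weierstrass three-term theta relation; your anticipated factorization $\Sigma(\theta)=v(\theta)v(\theta)^*$ is equivalent to this.
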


One may ask whether the basic hypergeometric kernel $K^{\al, \be, \ga, \de}(x, y)$ is a projection kernel too.
We believe that this is true, based on computer experiments --- one plausible approach to prove it is to show that certain orthogonal elements of $\ell^2(\L)$ form a basis of the range of the projection and that $K^{\al, \be, \ga, \de}$ can be expressed in terms of them, see \cite[Rem. 5.8]{GO}.
However, even if we knew that $K^{\al, \be, \ga, \de}(x, y)$ is a projection kernel, Theorem \ref{thm1} does not imply that $K^{\ga, \de}(x, y)$ is also a projection kernel, because the projection property is not necessarily preserved under weak limits.
Thus we had to find a different approach.

The plan of our proof of Theorem \ref{thm3} is as follows.
Using a natural identification of $\L$ with $\Z \sqcup \Z$ one can treat $K^{\ga,\de}(x,y)$ as a kernel on $\Z$ with values in $2\times2$ matrices.
After a simple gauge transformation, that kernel becomes translation--invariant, so the Fourier transform of the gauge-transformed operator corresponds to an operator on the Hilbert space $L^2(\mathbb T;\C^2)$ ($\C^2$-valued functions on the unit circle) given by multiplication by a $2\times2$ matrix-valued function. Then we check that the values of this function are projection matrices. The idea is simple, but its realization required a lot of laborious computations with elliptic functions.

\subsubsection{Degeneration to the matrix trigonometric kernel}
The \emph{particle/hole involution} on $\Z$ is the involutive map $\Conf(\Z)\to \Conf(\Z)$ defined as $X\mapsto \Z\setminus X$ (cf. \cite[Appendix, \S A.3]{BOO}).
Given a measure $M$ on $\Conf(\Z\sqcup\Z)$, let $\widehat M$ stand for the pushforward of $M$ under the particle/hole involution on the first copy of $\Z$. Let us call it the \emph{partial} particle/hole involution. As above, identify the two-sided $q$-lattice $\L$ with $\Z\sqcup\Z$ (the positive part of $\L$ is identified with the first copy of $\Z$ and the negative part of $\L$ with the second copy), which makes it possible to apply the partial particle/hole involution to the measure $M^{\ga,\de}$. Let us denote the resulting measure by $\widehat{M}^{\ga, \de}$.

\begin{theorem}\label{thm:continuous}[see Theorem \ref{thmlimitII} below]
Rescale the lattice $\Z$ by a factor of $\ln{(1/q)}$, so that in the limit $q \to 1^-$ it becomes the real line $\R$. In the limit regime
$$
\z_- = -q^{\bfz_-},\quad \z_+ = q^{\bfz_+},\quad \ga = q^{\bfc - \bfz_+},\quad \de = q^{\bfd - \bfz_+},\quad q \to 1^-,
$$
where $\bfz_-, \bfz_+$ (resp. $\bfc, \bfd$) are fixed real (resp. complex) parameters satisfying certain constraints, the point process given by the measure $\widehat{M}^{\ga, \de}$ weakly converges to the tail process of the $z$--measure with appropriately chosen parameters.
\end{theorem}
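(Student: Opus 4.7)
The plan is to take the $q\to 1^-$ limit directly at the level of the correlation kernel. Identifying $\L\leftrightarrow \Z\sqcup\Z$ so that the positive half of $\L$ is enumerated by $\zp q^{-k}=q^{\bfz_+-k}$ and the negative half by $\zm q^{k}=-q^{\bfz_-+k}$ for $k\in\Z$, I rescale each copy of $\Z$ by $h:=\ln(1/q)$; as $q\to 1^-$ with $s:=kh$ kept fixed, the two copies of $\Z$ become two copies of $\R$, so that $\Conf(\L)$ degenerates to $\Conf(\R\sqcup\R)$, the natural state space of the tail process of the $z$--measure. Writing $K^{\ga,\de}$ from \eqref{thetakernel} as a $2\times 2$ matrix kernel in this splitting, a standard calculation shows that the partial particle/hole involution replaces it (up to a gauge inessential to the determinantal measure) by
\[
\widehat K^{\ga,\de}=\begin{pmatrix}\mathbf 1-K^{\ga,\de}_{11}&-K^{\ga,\de}_{12}\\ K^{\ga,\de}_{21}&K^{\ga,\de}_{22}\end{pmatrix}.
\]

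Each block of $\widehat K^{\ga,\de}$ is a rational expression in values of $\thq$ at arguments of the form $q^u$, where $u$ is a linear function of $\bfz_\pm$, $\bfc$, $\bfd$, and the integer coordinates. The relevant degeneration is
\[
\thq(q^u)=\frac{(q;q)_\infty^2\,(1-q)\,\sin(\pi u)}{\pi}\bigl(1+o(1)\bigr)\qquad (q\to 1^-),
\]
which follows from $\thq(x)=(x;q)_\infty(q/x;q)_\infty$ together with the convergence of $q$-gamma functions to $\Gamma$ and the reflection formula $\Gamma(u)\Gamma(1-u)=\pi/\sin(\pi u)$. Under our substitution $\z_\pm=\pm q^{\bfz_\pm}$, $\ga=q^{\bfc-\bfz_+}$, $\de=q^{\bfd-\bfz_+}$ and $k=s/h$, every argument $q^u$ appearing inside $\thq$ in the explicit formula for $\widehat K^{\ga,\de}$ becomes of this form. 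The common $(q;q)_\infty^2(1-q)$ factors cancel between numerator and denominator of each block entry, and after including the standard prefactor $h^{-1}$ that absorbs the lattice spacing in the discrete-to-continuous scaling of determinantal processes, each entry of $h^{-1}\widehat K^{\ga,\de}$ converges to an explicit combination of sines of linear expressions in $\bfc,\bfd,\bfz_\pm$ and the rescaled spatial coordinates. A direct comparison with the formulas of \cite{BO_1998} identifies the limit with the matrix trigonometric kernel of the tail of the $z$--measure for an explicit choice of $z,z'$ determined by $\bfc,\bfd,\bfz_\pm$.

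Weak convergence of the point processes then follows from the locally uniform convergence of the rescaled correlation kernel together with its local boundedness on $\R\sqcup\R$, via the standard criterion for weak convergence of determinantal point processes on locally compact spaces. The main difficulty we anticipate is not conceptual but bookkeeping: one has to simultaneously track several sign conventions, the gauge induced by the partial particle/hole involution, the periodicity shifts absorbed into the kernel by the stationarizing gauge transformation of Proposition \ref{Kperiodic}, and the theta-to-sine degeneration for each of the four $2\times 2$ blocks, and then recognize the result as the specific matrix trigonometric kernel with the intended parameters. The asymmetric role of $\zp$ versus $\zm$ in the definition of $\ga,\de$ is precisely what makes the \emph{partial} (rather than full) involution necessary in order to restore the $\R\sqcup\R$ symmetry of the limit tail kernel.
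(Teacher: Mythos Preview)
Your overall strategy---prove pointwise convergence of the (particle/hole--modified) correlation kernel after the rescaling $m=\lfloor u/h\rfloor$, then invoke the standard criterion for weak convergence of determinantal processes---is exactly what the paper does. The gap is in the analytic heart of the argument: the single degeneration formula you invoke,
\[
\thq(q^{u})\sim\frac{(q;q)_\infty^{2}(1-q)}{\pi}\,\sin(\pi u)\qquad(q\to1^{-}),
\]
does \emph{not} cover all the theta values that actually appear in $K^{\ga,\de}$. Under the regime $\zm=-q^{\bfz_-}$, $\zp=q^{\bfz_+}$, $\ga=q^{\bfc-\bfz_+}$, $\de=q^{\bfd-\bfz_+}$, the constant \eqref{Cconst} and the cross block of Lemma~\ref{pmsimplifying} contain $\thq$ evaluated at \emph{negative} arguments such as $\ga\zm=-q^{\bfc+\bfz_--\bfz_+}$, $\zm/\zp=-q^{\bfz_--\bfz_+}$, $\ga\de\zm\zp=-q^{\bfc+\bfd+\bfz_--\bfz_+}$. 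For these, the $q$-Gamma asymptotic is simply false: one has (Lemma~\ref{thetacomplex}) $\thq(-q^{a})\sim e^{\pi^{2}/(6r)}$, which is exponentially \emph{large} as $r=-\ln q\to0$, whereas $\thq(q^{a})\sim 2e^{-\pi^{2}/(3r)}\sin(\pi a)$ is exponentially \emph{small}. Hence your assertion that ``the common $(q;q)_\infty^{2}(1-q)$ factors cancel between numerator and denominator'' fails: the numerator and denominator of $C(\ga,\de)$ contain different numbers of positive and negative arguments, and the actual limit $C\sim(r/\pi)\sin(\pi\bfc)\sin(\pi\bfd)/\sin(\pi(\bfd-\bfc))$ (equation \eqref{Cprincipal}) only emerges after the exponential factors $e^{\pm\pi^{2}/(6r)}$ are correctly tracked via Jacobi's imaginary transformation (Appendix~\ref{app:jacobi}).

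A second point you will need to confront is the diagonal: on the $(+,+)$ block the particle/hole term $\delta_{m,n}$ in $\widehat K^{\ga,\de}$ must cancel a divergent $1/r$ contribution coming from $K^{\ga,\de}(\zp q^{m},\zp q^{m})$ itself. In the paper this is Step~3 of the proof of Theorem~\ref{thmlimitII}, where one computes $\zp\bigl(\de\,\thq'(\de\zp)/\thq(\de\zp)-\ga\,\thq'(\ga\zp)/\thq(\ga\zp)\bigr)\sim(\bfc-\bfd)-(\pi/r)\sin(\pi(\bfc-\bfd))/(\sin(\pi\bfc)\sin(\pi\bfd))$; the second term, of order $1/r$, combines with $C\sim r\cdot(\dots)$ to produce exactly the $1$ that is subtracted. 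Your sketch (``the common factors cancel'') does not account for this derivative computation, and without it the diagonal limit is indeterminate.
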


Note that the limit transition in question does not exist for the measures $M^{\ga,\de}$: the reason is that the density of particles on the first copy of $\Z$ tends to $1$.

As pointed out in the preface, the tail process of a $z$--measure has the matrix trigonometric kernel as a correlation kernel. Note that the matrix trigonometric kernel is not a projection kernel and is not even symmetric.
However, it possesses a different symmetry property, called \emph{$J$-symmetry}, see \cite{BO_1998}, \cite{BO_2000}.  The origin of the $J$-symmetry property is just the partial particle/hole involution (see \cite{O}) --- this explains why we have replaced $M^{\ga,\de}$ by $\widehat M^{\ga,\de}$. In the context of $z$--measures, the involution is a natural operation, whose origin is the parameterization of Young diagrams (i.e.\ labels of irreducible representations of symmetric groups) via their rows and columns. The particle/hole involution then becomes the transposition of the diagram interchanging rows and columns.

Simultaneously, the positive and negative semi-axes for the $z$--measures and two copies of $\mathbb R$ for their tail processes correspond precisely to the rows and columns of the Young diagrams: as explained in \cite{BO_1998}, the particles on the positive semi-axis encode rows and the particles on the negative semi-axis encode columns. Together with Theorem \ref{thm:continuous} this creates a connection between  
the positive and negative parts of the two-sided $q$-lattice on one side and rows and columns of the Young diagrams on the other side. Hence, one might expect a relation between the two-sided $q$-lattice and some separate quantizations for the rows and columns of Young diagrams, parameterizing irreducible representations. We hope to further develop this point of view in future publications.

\subsubsection{Degeneration to the discrete sine process}
The \emph{discrete sine process} is a stationary determinantal process on $\Z$, depending on a parameter $\phi\in(0,\pi)$. It is given by the correlation kernel
$$
K_{\textrm{sine}}^\phi(m, n) := \begin{cases}
	\displaystyle \frac{\sin(\phi(m - n))}{\pi(m - n)}, &\text{ if }m \neq n,\\
	\displaystyle \frac{\phi}{\pi}, &\text{ if }m = n.
\end{cases}
$$
Note that $K_{\textrm{sine}}^\phi(m, n)$ is a projection kernel. The discrete sine process first appeared as a limit of the Plancherel measure on partitions in \cite{BOO}. It also possesses some universality property, see \cite{BKMM}, \cite{Ok}.

\begin{theorem}\label{thm:discrete}[see Theorem \ref{thmlimitI} below]
Let $\varphi\in (0, \pi)$ and $a\in\R\setminus\{0\}$ be arbitrary parameters.
As $q \to 1^-$, the two-sided $q$-lattice $\L$ fills the entire real line.
Zooming in near $a$, the point process $M^{\ga, \de}$ converges weakly to a discrete sine process in the regime
$$\frac{\ln{\de} - \ln{\ga}}{2i} = \varphi,\quad q \to 1^-.$$
Moreover, the parameter of such discrete sine process depends only on the sign of $a$: if $a > 0$, the parameter is $\pi - \varphi$ and if $a < 0$, the parameter is $\varphi$.
\end{theorem}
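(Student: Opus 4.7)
Weak convergence of determinantal point processes on a discrete space follows from pointwise convergence of correlation kernels (the kernel values are uniformly bounded, so this implies convergence of all cylindrical probabilities), so the task reduces to computing the pointwise $q \to 1^-$ limit of $K^{\ga,\de}(x,y)$ for $x,y$ at sites of $\L$ near $a$, after gauge-transforming via Proposition~\ref{Kperiodic} so that $K^{\ga,\de}$ is $q$-translation-invariant. For $a>0$, I label the points of $\L$ near $a$ by $x_m := \zp q^{-k(q)-m}$, $m \in \Z$, with $k(q)\in\Z$ chosen so that $\zp q^{-k(q)}\to a$; for $a<0$, I use $x_m := \zm q^{k(q)+m}$ with $\zm q^{k(q)}\to a$. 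Under these identifications, the goal is $K^{\ga,\de}(x_m,x_n) \to K_{\textrm{sine}}^{\pi-\varphi}(m,n)$ or $K_{\textrm{sine}}^{\varphi}(m,n)$ respectively, up to a gauge that does not affect the determinantal measure.

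\textbf{Christoffel--Darboux reduction via quasi-periodicity.} The formula \eqref{thetakernel} for $K^{\ga,\de}$ is a Christoffel--Darboux-type combination of theta factors $\theta_q(\ga\cdot)$, $\theta_q(\de\cdot)$ in the numerator, divided by a denominator involving further theta factors and the weight function that makes the kernel $q$-translation-invariant. Iterating $\theta_q(qu) = -u^{-1}\theta_q(u)$, each $\theta_q(c x_m)$ unfolds as an explicit sign, a power of $q$ that is quadratic in $k(q)+m$ (hence symmetric in $m,n$ upon antisymmetrization), a power of $c\zp$ or $c\zm$, and a residual $\theta_q(c\zp)$ or $\theta_q(c\zm)$ at a bounded argument. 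Crucially, on the positive side the sign contribution of each theta is $(-1)^{k(q)+m}$, combining into $(-1)^{m-n}$ in the antisymmetric Christoffel--Darboux combination; on the negative side, the sign $(-1)^{k(q)+m}$ from the quasi-periodicity is cancelled by the sign $(-1)^{k(q)+m}$ emerging from $(c\zm)^{-(k(q)+m)}$ (since $\zm<0$), yielding no net sign in the antisymmetric combination. All symmetric-in-$(m,n)$ factors (symmetric $q$-powers, the fixed constants $\theta_q(\ga\zp)\theta_q(\de\zp)$, etc.), together with the weight-function contribution, assemble into a gauge factor $f(x_m)/f(x_n)$ that does not affect the process.

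\textbf{Extracting the sine.} Writing the constraint $\de/\ga = e^{2i\varphi}$ as $\ga\zp = \rho e^{-i\varphi}$, $\de\zp = \rho e^{i\varphi}$ (and analogously for $\zm$), the residual algebraic expression $(\ga\zp)^{m-n}-(\de\zp)^{m-n}$ equals $-2i\rho^{m-n}\sin(\varphi(m-n))$. With $q=e^{-\epsilon}$, the denominator theta factors and the remaining $k(q)$-independent theta factors are evaluated via Jacobi's imaginary (modular) transformation; the $\epsilon^{-1/2}$ and $e^{-\pi^2/(6\epsilon)}$ prefactors from each modular transformation depend only on the total multiplicity of theta factors and cancel exactly in the balanced ratio defining $K^{\ga,\de}$, leaving a linear factor $\pi(m-n)$ in the denominator in the limit. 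For $a<0$ this yields directly $\sin(\varphi(m-n))/(\pi(m-n)) = K_{\textrm{sine}}^{\varphi}(m,n)$. For $a>0$, the extra sign $(-1)^{m-n}$ combined with the integer-$k$ identity $(-1)^k\sin(\varphi k) = \pm\sin((\pi-\varphi)k)$, together with the gauge $f(x_m)=(-1)^m$, converts the answer to $\sin((\pi-\varphi)(m-n))/(\pi(m-n)) = K_{\textrm{sine}}^{\pi-\varphi}(m,n)$. The principal technical obstacle is the detailed bookkeeping: verifying the exact cancellation of the $\epsilon$-dependent prefactors from the numerous applications of Jacobi's transformation in the ratio defining $K^{\ga,\de}$, and tracking all signs (from $\theta_q$ quasi-periodicity and from $\zm < 0$) carefully enough to see the $\pi$-shift appearing for $a>0$ but not for $a<0$. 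The calculation is in the same spirit as the derivation of the discrete sine kernel from the Plancherel measure in \cite{BOO}, with the modular behavior of $\theta_q$ playing the role that the Gamma function reflection formula played there.
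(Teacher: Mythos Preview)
Your approach is essentially the paper's: reduce the theta factors via quasi-periodicity (which the paper packages as Lemmas~\ref{ppsimplifying}--\ref{lem:zero}), estimate the constant $C(\ga,\de)$ via Jacobi's imaginary transformation (the paper's Step~1, yielding $C \sim ri/(2\pi)$ with $r = -\ln q$), and then read off the sine kernel from the resulting ratio. Your sign analysis for why $a>0$ produces the shift $\varphi \mapsto \pi-\varphi$ is correct and matches the paper's computation in Step~2.

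One genuine omission: you treat only the off-diagonal case $m\neq n$. Your argument produces the indeterminate $0/0$ on the diagonal, and the sine kernel's value $K^{\phi}_{\textrm{sine}}(m,m)=\phi/\pi$ does not follow by continuity in $m,n\in\Z$. The paper handles this separately (Step~3) using the formula of Lemma~\ref{lem:zero}, which expresses $K^{\ga,\de}(\zeta_\pm,\zeta_\pm)$ through logarithmic derivatives $\thq'/\thq$; one then differentiates the asymptotic from Lemma~\ref{thetacomplex} to extract $\de\zp\,\thq'(\de\zp)/\thq(\de\zp) - \ga\zp\,\thq'(\ga\zp)/\thq(\ga\zp) \sim 2i(\varphi-\pi)/r$ on the positive side and the analogue on the negative side. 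This is not difficult but is a distinct computation, and your plan should flag it. A minor expository point: the factor $\pi$ in the denominator does not emerge from the modular prefactors cancelling, but from the product of the estimate $C \sim ri/(2\pi)$ with the denominator $q^{(m-n)/2}-q^{(n-m)/2} \sim r(n-m)$.
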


\medskip

Finally, let us mention that in all the proofs of this article, we ignore the case $\de = \ga$.
The reason is that this case requires special attention, but the proofs go through, either by employing the formulas at the end of Section \ref{app:elliptic} or by analytic continuation.

\subsection{Organization of the article}

In Sections \ref{sec:qzw} and \ref{sec:ell}, we introduce the basic hypergeometric kernel and the elliptic tail kernel, respectively.
In Section \ref{sec:limit} we prove Theorem \ref{thm1} on the limit from the former kernel to the latter one.
In Sections \ref{sec:diffusity} and \ref{sec:projection}, we prove Theorems \ref{thm2} and \ref{thm3}, respectively.
The \emph{continuous} limit of the elliptic tail kernel, as in Theorem \ref{thm:continuous}, is discussed in Section \ref{sec:continuous}.
The \emph{discrete} limit of Theorem \ref{thm:discrete} occupies Section \ref{sec:discrete}.
Finally, in Appendix \ref{app:jacobi}, we recall Jacobi's imaginary transformation and apply it to deduce some technical estimates.

\subsection*{Acknowledgments}
C.C. and V. G. were partially supported by NSF Grant DMS-1664619.
V. G. was also supported by the NEC Corporation Fund for Research in Computers and Communications and NSF Grant DMS-1855458.

\section{$q$--$zw$ measures and the basic hypergeometric kernel}\label{sec:qzw}

We review some definitions and results from \cite{GO}.

Fix parameters $q \in (0, 1)$ and $\zp > 0 > \zm$ for the rest of the paper.
Define the \emph{two-sided $q$-lattice}
$$\L := \zm q^{\Z} \sqcup \zp q^{\Z} = \{ \ldots, \zm q^{-1}, \zm, \zm q, \ldots \} \sqcup \{ \ldots, \zp q, \zp, \zp q^{-1}, \ldots \}.$$

\begin{definition}\label{admissiblepair}
We say that $(x, y) \in \C^2$ is an \emph{admissible pair} if one of the following holds:
\begin{itemize}
	\item $y = \overline{x} \in \C\setminus \R$ (principal series), or
	\item $\zm^{-1}q^{m} < x, y < \zm^{-1}q^{m+1}$ for some $m\in\Z$, or $\zp^{-1}q^{n+1} < x, y < \zp^{-1}q^n$ for some $n\in\Z$ (complementary series).
\end{itemize}
Note that $xy\in\R$ whenever $(x, y)$ is an admissible pair.

We also say that $(\al, \be, \ga, \de)\in\C^4$ is an \emph{admissible quadruple} if both $(\al, \be)$ and $(\ga, \de)$ are admissible pairs, and additionally $\al\beta < q^2\ga\de$.
\end{definition}

The $q$--$zw$ measures $M^{\al, \be, \ga, \de}$ are probability measures on the space of point configurations $\Conf(\L)$ that depend on admissible quadruples of parameters $(\al, \be, \ga, \de)$.
Equivalently, they are point processes on $\L$ (\cite{DVJ}).
It is known that $M^{\al, \be, \ga, \de}$ (just like any point process on a discrete space) is determined by the sequence $\{ \rho^{\al, \be, \ga, \de}_n(x_1, \ldots, x_n) \}_{n \geq 1}$ of correlation functions:
$$\rho^{\al, \be, \ga, \de}_n(x_1, \ldots, x_n) := \text{Prob}\left( \{x_1, \ldots, x_n\} \subset \text{ $M^{\al, \be, \ga, \de}$-- random subset of }\L  \right), \ x_1, \dots, x_n  \in \L.$$

In \cite[Thm. 5.2]{GO}, for each admissible quadruple $(\al, \be, \ga, \de)$, the $q$--$zw$ measure $M^{\al, \be, \ga, \de}$ is defined as the unique point process on $\L$ whose correlation functions are given by\footnote{In \cite{GO}, there was a sign mistake in the correlation kernel --- the version here is correct.}
\begin{equation*}
\rho^{\al, \be, \ga, \de}_n(x_1, \ldots, x_n) = \det_{1 \leq i, j \leq n}\left[ K^{\al, \be, \ga, \de}(x_i, x_j) \right], \quad n \geq 1,
\end{equation*}
where $x_1, \ldots, x_n \in \L$ are pairwise disjoint, and
\begin{equation}\label{basickernel}
K^{\al, \be, \ga, \de}(x, y) = \mathfrak{C}(\al, \be, \ga, \de)\cdot \frac{\mathfrak{F}_1(x)\mathfrak{F}_0(y) - \mathfrak{F}_1(y)\mathfrak{F}_0(x)}{x - y}, \quad x, y\in\L,
\end{equation}
the constant is
\begin{equation}\label{Cabcd}
\mathfrak{C}(\al, \be, \ga, \de) := \frac{\thq(\ga\zm, \ga\zp, \de\zm, \de\zp)}{\zp\cdot\thq\left( \dfrac{\zm}{\zp}, \ga\de\zm\zp \right)}
\cdot \frac{\left( \dfrac{\al\be}{\ga\de}, \dfrac{\al\be}{q\ga\de}; q \right)_{\infty}}{\left( \dfrac{\al}{\ga}, \dfrac{\al}{\de}, \dfrac{\be}{\ga}, \dfrac{\be}{\de}, q, q; q \right)_{\infty}}
\end{equation}
and the functions $\mathfrak{F}_{0}(x), \mathfrak{F}_{1}(x)$ on $\L$ are
\begin{equation}\label{Fdef}
\mathfrak{F}_{\rr}(x) := \sqrt{|x| \dfrac{ (x\al,x\be;q)_\infty}{\thq(x\ga, x\de)}}  \cdot
(-x)^{1-\rr}\, \frac{\left(\dfrac{\be q^{\rr-1}}{\ga}, \dfrac{q^\rr}{\de x};q\right)_\infty}{\left( \dfrac{\al\be q^{2\rr-2}}{\ga\de} ; q\right)_{\infty}}
\cdot \, {}_2\phi_1\left(\begin{matrix}
\dfrac{\al q^{\rr-1}}\de, \, \dfrac q{\be x}\\
\dfrac{q^\rr}{\de x}\end{matrix}\Bigg| \frac{\be q^{\rr-1}}\ga\right), \ \rr = 0, 1.
\end{equation}
Above we used traditional $q$-calculus notation (\cite{GR}) for the $q$-Pochhammer symbols, theta functions and $q$-hypergeometric functions:
\begin{gather}
(z; q)_{\infty} := \prod_{i=1}^{\infty}{(1 - zq^{i - 1})}, \quad (z_1, \dots, z_m; q)_{\infty} := (z_1; q)_{\infty}\cdots (z_m; q)_{\infty},\nonumber\\
\thq(z) := (z, q/z; q)_{\infty}, \quad \thq(z_1, \dots, z_m) := \thq(z_1)\cdots \thq(z_m),\nonumber\\
_2\phi_1(a_1, a_2; b \mid z)  =  \, {}_2\phi_1\left(\begin{matrix} a_1, \, a_2 \\ b \end{matrix} \Bigg|  z  \right) :=
1 + \sum_{n = 1}^{\infty}{ z^n \prod_{i=1}^n{\frac{(1 - a_1q^{i-1})(1 - a_2q^{i-1})}{(1 - bq^{i-1})(1 - q^i)}} }.\label{def:hypergeo}
\end{gather}
A few remarks are in order:

\smallskip

1. The $q$-Pochhammer symbol $(z; q)_{\infty}$ is an entire function of $z$, whose set of zeroes is $q^{\Z_{\leq 0}}$.
Likewise, the theta function $\thq(z)$ is holomorphic on $\C^* = \C\setminus\{0\}$ and its zeroes are the points of $q^{\Z}$.
It follows that the denominator of $\mathfrak{C}(\al, \be, \ga, \de)$ does not vanish:
$\zm/\zp , \ga\de\zm\zp < 0$, so $\zm/\zp , \ga\de\zm\zp \notin q^{\Z}$, whereas if $\al/\ga \in q^{\Z_{\leq 0}}$, then $\al\be < q^2\ga\de$ is impossible (similarly for the other fractions $\al/\de, \be/\ga, \be/\de$).
Also, the denominator $\sqrt{\thq(x\ga, x\de)}\cdot (\al\be q^{2\rr - 2}/(\ga\de); q)_{\infty}$ of $\F_{\rr}(x)$ is nonzero for $x\notin\ga^{-1}q^{\Z} \cup \de^{-1}q^{\Z}$, in particular for $x\in\mathfrak{L}$.

\smallskip

2. The $q$-hypergeometric function $_2\phi_1(a_1, a_2; b \mid z)$ is analytic on the unit disk $|z| < 1$.
As a function of $z$ -- denote $F(z) := {}_2\phi_1(a_1, a_2; b \mid z)$ -- it satisfies the $q$-difference equation (see e.g. the survey \cite{K})
\begin{equation}\label{qdiff}
(b - a_1a_2qz) F(q^2z) + (-b-q+(a_1+a_2)qz)F(qz) + q(1 - z) F(z) = 0, \qquad |z| < 1.
\end{equation}
This implies that the value of $F(z)$ can be obtained as a linear combination of $F(qz)$ and $F(q^2z)$.
The expressions for the coefficients of the linear combination are rational functions with, at most, simple poles at $z = 1$.
Consequently the $q$-hypergeometric function can be analytically continued to a meromorphic function with simple poles at the points $1, q^{-1}, q^{-2}, \ldots$.
The analytic continuation will also be denoted by $_2\phi_1(a_1, a_2; b \mid z)$.
The meromorphicity of the $q$-hypergeometric function is not present when $q = 1$ --- in that case, the hypergeometric function $_2F_1(a_1, a_2; b \mid z)$ is often defined only on $\C\setminus [1, \infty)$.

\smallskip

3. The previous remark implies that $(z; q)_{\infty}\cdot{}_2\phi_1(a_1, a_2; b \mid z)$ is an entire function of $z$.
As a function of $b$, the $q$-hypergeometric function $_2\phi_1(a_1, a_2; b \mid z)$ is also meromorphic with simple poles at the points of $q^{\Z_{\leq 0}}$ and therefore $(b; q)_{\infty}\cdot{}_2\phi_1(a_1, a_2; b \mid z)$ is an entire function of $b$.
Both these points imply that $(b, z; q)_{\infty}\cdot {}_2\phi_1(a_1, a_2; b \mid z)$ is an entire function on $b$ and $z$.
As a result (see also the first remark), the right hand side of \eqref{Fdef} is well-defined whenever $x\notin \ga^{-1}q^{\Z} \cup \de^{-1}q^{\Z}\cup \{0\}$, i.e., the function $\F_{\rr}(x)$ of $x$ can be continued from $\L$ to $\C\setminus (\ga^{-1}q^{\Z} \cup \de^{-1}q^{\Z}\cup \{0\})$.

\smallskip

4. When $x = y$, we make sense of \eqref{basickernel} by using L'H\^opital's rule:
$$K^{\al, \be, \ga, \de}(x, x) = \mathfrak{C}(\al, \be, \ga, \de) \cdot (\mathfrak{F}'_1(x)\mathfrak{F}_0(x) - \mathfrak{F}_1(x)\mathfrak{F}'_0(x)), \qquad x\in\L,$$
where $\mathfrak{F}_{\rr}'$ is the derivative of $\mathfrak{F}_{\rr}$ (when $\mathfrak{F}_{\rr}$ is now seen as a function on $\C \setminus (\ga^{-1}q^{\Z} \cup \de^{-1}q^{\Z} \cup \{0\})$).

\smallskip

5. From the first and third remarks above, the functions $\F_0(x), \F_1(x)$ admit analytic continuations to the domains $\mathcal{D}_{\pm} := \C_{\pm} \setminus (\ga^{-1}q^{\Z} \cup \de^{-1}q^{\Z})$, where $\C_{\pm} := \{ z\in\C : \pm\Re z > 0 \}$ (replace $\sqrt{|x|}$ by $(\pm x)^{1/2}$, for $x\in\mathcal{D}_{\pm}$).
The previous remark and Cauchy's integral formula then show
\begin{equation}\label{basicdiagonal}
K^{\al, \be, \ga, \de}(x, x) = \oint_{|z - x| = \epsilon}{\frac{K^{\al,\be,\ga,\de}(z, x)}{z - x}dz}, \qquad x\in\L,
\end{equation}
for a sufficiently small $\epsilon > 0$ (depending on $x$).
If we let $\mathcal{D} := \mathcal{D}_+ \sqcup \mathcal{D}_- = \C \setminus (\ga^{-1}q^{\Z} \cup \de^{-1}q^{\Z} \cup \{ \Re z = 0\})$, then $K^{\al, \be, \ga, \de}(x, y)$ admits an analytic continuation to the domain $(x, y) \in \mathcal{D}^2$.

\medskip

The point processes $M^{\al, \be, \ga, \de}$ on $\L$ (or the corresponding probability measures on the space of point configurations on $\L$) are called the \emph{$q$--$zw$ measures}.
They are determinantal point processes with correlation kernels $K^{\al, \be, \ga, \de}(x, y)$.
We call $K^{\al, \be, \ga, \de}(x, y)$ the \emph{basic hypergeometric kernel}.

\section{The elliptic tail kernel}\label{sec:ell}

\subsection{Theta functions}
One of the basic properties of the theta function $\thq(z) = (z, q/z; q)_{\infty}$ is the quasi--periodicity
\begin{equation}\label{quasiperiodicity}
\thq(q^n z) = (-1)^n q^{-\frac{n(n-1)}{2}}z^{-n} \thq(z), \ n\in\Z,
\end{equation}
which we often use without mention.
They also satisfy the following fundamental identity.
\begin{lemma}\label{lem:qelliptic}
For any $X, Y, Z, W\in\C^*$,
\begin{multline}\label{sumthetas}
\thq(qYZ, \ Z/Y, \ qXW, \ W/X) - \thq(qYW, \ W/Y, \ qXZ, \ Z/X)\\
= -\frac{Z}{Y}\cdot\thq(qXY, \ Y/X, \ qZW, \ W/Z).
\end{multline}
\end{lemma}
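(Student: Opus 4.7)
The plan is to fix $X, Y, Z \in \C^*$ and view the identity as an equality between two holomorphic functions of the single variable $W \in \C^*$. Define
\[
F(W) := \thq(qYZ, Z/Y, qXW, W/X) - \thq(qYW, W/Y, qXZ, Z/X) + \frac{Z}{Y}\thq(qXY, Y/X, qZW, W/Z);
\]
the identity is equivalent to $F \equiv 0$, and $F$ is manifestly holomorphic on $\C^*$.

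The first step is to note that each summand of $F$ has identical quasi-periodic behavior in $W$. The $W$-dependence of the three summands is carried by the products $\thq(qXW)\thq(W/X)$, $\thq(qYW)\thq(W/Y)$, and $\thq(qZW)\thq(W/Z)$ respectively, and two applications of \eqref{quasiperiodicity} show that each of these transforms by the factor $q^{-1}W^{-2}$ under $W \mapsto qW$. Consequently $F(qW) = q^{-1}W^{-2}F(W)$. Setting $H(W) := \thq(qYW)\thq(W/Y)$, the ratio $G(W) := F(W)/H(W)$ is therefore $q$-periodic, hence a meromorphic elliptic function on $\C^*/q^{\Z}$ with potential poles only at the images of $Y$ and $1/Y$---so at most two poles on a fundamental domain.

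The next step is to show that $F$ vanishes at the four points $W \in \{X,\, 1/X,\, Z,\, 1/Z\}$. At $W = Z$ the first and second summands both equal $\thq(qYZ, Z/Y, qXZ, Z/X)$ and cancel, while the third contains the factor $\thq(W/Z)|_{W = Z} = \thq(1) = 0$. At $W = 1/Z$ the third summand vanishes through $\thq(qZW)|_{W = 1/Z} = \thq(q) = 0$, and the first two agree once their factors containing $1/Z$ are rewritten via the symmetry $\thq(q/u) = \thq(u)$, an immediate consequence of the product formula $\thq(u) = (u, q/u; q)_{\infty}$. At $W = X$ the first summand contains $\thq(1) = 0$, and the remaining two combine to zero after applying the inversion identity $\thq(1/u) = -u^{-1}\thq(u)$ (again immediate from the product formula) to the factors of the form $\thq(X/\bullet)$, converting them into multiples of $\thq(\bullet/X)$. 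The case $W = 1/X$ is analogous, invoking both $\thq(q/u) = \thq(u)$ and $\thq(1/u) = -u^{-1}\thq(u)$.

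To conclude: for generic $X, Y, Z$ the six points $X^{\pm 1}, Y^{\pm 1}, Z^{\pm 1}$ are pairwise distinct modulo $q^{\Z}$, so $G$ has at least four zeros and at most two poles on a fundamental domain. The elliptic-function principle that a nonzero such function has equally many zeros as poles (counted with multiplicity) forces $G \equiv 0$, hence $F \equiv 0$; the remaining, nongeneric configurations of $(X,Y,Z)$ follow by analytic continuation. The main obstacle will be the careful symbolic verification of the four vanishings, especially at $W = X$ and $W = 1/X$, where several ratios $X/Y, Y/X, X/Z, Z/X$ must be interconverted via the inversion symmetries of $\thq$; the elliptic-function counting that closes the argument is then immediate.
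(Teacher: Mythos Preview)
Your argument is correct. The quasi-periodicity computation, the four vanishings at $W \in \{X, 1/X, Z, 1/Z\}$, and the zero/pole count on the torus $\C^*/q^{\Z}$ all check out; the analytic continuation to nongeneric $(X,Y,Z)$ is the appropriate way to finish.

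The paper does not actually prove this identity: it simply identifies \eqref{sumthetas} as the Weierstrass three-term relation and points to an external reference, noting that the version there matches via \eqref{quasiperiodicity} and $\thq(z)=\thq(q/z)$. Your approach is therefore genuinely different in that it is self-contained---you supply the standard elliptic-function proof (fix three variables, compare quasi-periods, overcount zeros versus poles) rather than invoking the literature. What you gain is independence from the reference and a transparent explanation of \emph{why} the identity holds; what the paper's one-line citation gains is brevity. Your method is in fact essentially how the cited references prove the relation, so the two are compatible rather than in tension.
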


\begin{proof}
This is known as the three-term relation due to Weierstrass, see e.g. \cite[(1.12)]{R}.
The formula in this reference is seen to be equivalent to \eqref{sumthetas} by use of the identities \eqref{quasiperiodicity} and $\thq(z) = \thq(q/z)$.
\end{proof}

\subsection{Definition of the elliptic tail kernel}

For any admissible pair $(\ga, \de)\in\C^2$, as in Definition \ref{admissiblepair}, let
\begin{equation}\label{Cconst}
C = C(\ga, \de) := \frac{\thq(\ga\zm, \ga \zp, \de \zm, \de \zp)}{\zp \cdot\thq\left( \dfrac{\zm}{\zp}, \ga\de\zm\zp \right)}
\cdot \frac{(\de - \ga)}{\ga\de\cdot \left( \dfrac{\de}{\ga}, \dfrac{\ga}{\de}, q, q; q \right)_{\infty}}.
\end{equation}

The \emph{elliptic tail kernel} associated to $(\ga, \de)$ is the function on $\L \times \L$ given by
\begin{equation}\label{thetakernel}
K^{\ga, \de}(x, y) := C\cdot\frac{P(x)Q(y) - Q(x)P(y)}{x - y}, \quad x, y\in\L,
\end{equation}
where
\begin{equation}\label{PQdef}
P(x) := \sqrt{|x|}\frac{\thq(x\de)}{\sqrt{\thq(x\ga, x\de)}},\quad
Q(x) := \sqrt{|x|}\frac{\thq(x\ga)}{\sqrt{\thq(x\ga, x\de)}}, \quad x\in\L.
\end{equation}
When $x = y$, the kernel is given by L'H\^opital's rule:
\begin{equation}\label{thetakernelequal}
K^{\ga, \de}(x, x) := C\cdot (P'(x)Q(x) - Q'(x)P(x)), \quad x\in\L.
\end{equation}
The formulas in \eqref{PQdef} give analytic continuations for the functions $P(x), Q(x)$ to the domains $\mathcal{D}_{\pm} := \C_{\pm} \setminus (\ga^{-1}q^{\Z} \cup \de^{-1}q^{\Z})$, where $\C_{\pm} := \{ z\in\C : \pm\Re z > 0 \}$ (replace $\sqrt{|x|}$ by $(\pm x)^{1/2}$ if $x\in\mathcal{D}_{\pm}$).
Let $\mathcal{D} := \mathcal{D}_+ \sqcup \mathcal{D}_- = \C \setminus (\ga^{-1}q^{\Z} \cup \de^{-1}q^{\Z} \cup \{ \Re z = 0\})$.
Note that $\L \subset \D$ because $(\ga, \de)$ is an admissible pair.
Thus \eqref{thetakernelequal} implies
\begin{equation}\label{ellipticdiagonal}
K^{\ga, \de}(x, x) = \oint_{|z - x| = \epsilon}{\frac{K^{\ga,\de}(z, x)}{z - x}dz}, \quad x\in\L,
\end{equation}
for a sufficiently small $\epsilon > 0$ (depending on $x$).
It follows that $K^{\ga, \de}(x, y)$ admits an analytic continuation to $\mathcal{D}^2$.

\subsection{The $q$-translation-invariance property}

Let $\epsilon: \L \rightarrow \{+1, -1\}$ be defined by $\epsilon(x) := 1$ if $x = \zp q^m$, and $\epsilon(x) := (-1)^n$ if $x = \zm q^n$.
Consider the gauge-transformed kernel
\begin{equation}\label{dfgauge}
\tilK^{\ga, \de}(x, y) := \epsilon(x)\epsilon(y)^{-1} K^{\ga, \de}(x, y) = \epsilon(x)\epsilon(y) K^{\ga, \de}(x, y).
\end{equation}

\begin{proposition}\label{Kperiodic}
The kernel $\tilK^{\ga, \de}$ is \emph{$q$-translation-invariant}, i.e.,
$\tilK^{\ga, \de}(qx, qy) = \tilK^{\ga, \de}(x, y)$, for any $x, y\in\L$.
\end{proposition}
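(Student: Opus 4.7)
The plan is a direct computation based solely on the quasi-periodicity \eqref{quasiperiodicity} of $\thq$. Applying $\thq(qz)=-z^{-1}\thq(z)$ to the two relevant theta factors yields
$$\thq(qx\de)=-(x\de)^{-1}\thq(x\de),\qquad \thq(qx\ga,qx\de)=(x^2\ga\de)^{-1}\thq(x\ga,x\de).$$

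Rather than track $P(qx)$ and $Q(qx)$ individually --- which would require choosing a branch of $\sqrt{\ga/\de}$, awkward in the principal series case --- I would work directly with the bilinear products $P(qx)Q(qy)$ and $Q(qx)P(qy)$ appearing in \eqref{thetakernel}. From the definition \eqref{PQdef},
$$P(qx)Q(qy)=\sqrt{q|x|\cdot q|y|}\cdot\frac{\thq(qx\de)\thq(qy\ga)}{\sqrt{\thq(qx\ga,qx\de)\thq(qy\ga,qy\de)}}.$$
Substituting the quasi-periodicity relations, the numerator acquires a factor $(-1)^2/(xy\ga\de)$ and the denominator a factor $1/(|x||y|\cdot\ga\de)$. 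Two facts, both immediate from admissibility of $(\ga,\de)$, are needed to interpret the positive real square roots unambiguously: (i) $\ga\de>0$ (equal to $|\ga|^2$ in the principal series, and a product of two reals of the same sign in the complementary series), and (ii) $\thq(x\ga,x\de)>0$ for $x\in\L$ (a short case check using that $\thq>0$ on $\R_{<0}$ and that $\thq$ has a definite sign between consecutive positive zeros). After cancelling the factors of $\ga\de$, the surviving ratio is $|x||y|/(xy)=\sgn(x)\sgn(y)$, so
$$P(qx)Q(qy)=q\,\sgn(x)\sgn(y)\,P(x)Q(y),$$
and the analogous identity holds with $P$ and $Q$ interchanged.

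Dividing by $qx-qy=q(x-y)$ cancels the factor of $q$, giving $K^{\ga,\de}(qx,qy)=\sgn(x)\sgn(y)\,K^{\ga,\de}(x,y)$. A case check against the definition of $\epsilon$ shows $\epsilon(qx)=\sgn(x)\epsilon(x)$: for $x=\zp q^m$ both $\epsilon(qx)$ and $\epsilon(x)$ equal $1$, while for $x=\zm q^n$ one has $\epsilon(qx)=(-1)^{n+1}=-\epsilon(x)$. Using $\sgn(x)^2=1$,
$$\tilK^{\ga,\de}(qx,qy)=\epsilon(qx)\epsilon(qy)\,K^{\ga,\de}(qx,qy)=\sgn(x)^2\sgn(y)^2\,\epsilon(x)\epsilon(y)\,K^{\ga,\de}(x,y)=\tilK^{\ga,\de}(x,y),$$
which is the desired $q$-translation-invariance.

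The main (and essentially only) obstacle is the sign bookkeeping: confirming $\ga\de>0$ and the positivity of $\thq(x\ga,x\de)$ on $\L$ from admissibility, so that every square root in the definitions of $P$ and $Q$ is an honest positive real. Notably, the three-term elliptic identity of Lemma \ref{lem:qelliptic} is not invoked here; only the one-term quasi-periodicity \eqref{quasiperiodicity} is needed.
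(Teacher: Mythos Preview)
Your approach is essentially the same as the paper's: both rely purely on the quasi-periodicity $\thq(qz)=-z^{-1}\thq(z)$ to compare $K^{\ga,\de}(qx,qy)$ with $K^{\ga,\de}(x,y)$. The paper tracks $\tilP(qx)$ and $\tilQ(qx)$ individually (obtaining $\tilP(qx)=-(\sqrt{q\ga\de}/\de)\tilP(x)$, etc.), whereas you work with the bilinear products $P(qx)Q(qy)$ to avoid the square root $\sqrt{\ga/\de}$. Both routes lead to $K^{\ga,\de}(qx,qy)=\sgn(x)\sgn(y)\,K^{\ga,\de}(x,y)$ for $x\neq y$, and your sign bookkeeping (positivity of $\ga\de$ and of $\thq(x\ga,x\de)$, and $\epsilon(qx)=\sgn(x)\epsilon(x)$) is correct.

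There is, however, one genuine gap: the diagonal case $x=y$. Your derivation of $K^{\ga,\de}(qx,qy)=\sgn(x)\sgn(y)\,K^{\ga,\de}(x,y)$ explicitly divides by $qx-qy=q(x-y)$, which is illegitimate when $x=y$; and on the diagonal the kernel is defined via L'H\^opital's rule \eqref{thetakernelequal}, not by the ratio \eqref{thetakernel}. The paper closes this gap by invoking the analytic continuation of $K^{\ga,\de}$ to $\mathcal{D}_+^2\sqcup\mathcal{D}_-^2$ (see the paragraph following \eqref{thetakernelequal}): since $K^{\ga,\de}(qx,qy)=K^{\ga,\de}(x,y)$ holds for $x\neq y$ of the same sign, and both sides are analytic on the path-connected domains $\mathcal{D}_\pm^2$, the identity extends to the diagonal, giving $K^{\ga,\de}(qx,qx)=K^{\ga,\de}(x,x)$ and hence $\tilK^{\ga,\de}(qx,qx)=\tilK^{\ga,\de}(x,x)$. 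You should add this (or an equivalent limiting argument) to complete the proof.
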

\begin{proof}
From the definitions \eqref{thetakernel} and \eqref{dfgauge}, we can write
$$\tilK^{\ga, \de}(x, y) := C\cdot\frac{\tilP(x)\tilQ(y) - \tilQ(x)\tilP(y)}{x - y}, \quad x, y\in\L,$$
where
\begin{equation*}
\tilP(x) := \epsilon(x)\sqrt{|x|}\frac{\thq(x\de)}{\sqrt{\thq(x\ga, x\de)}},\quad
\tilQ(x) := \epsilon(x)\sqrt{|x|}\frac{\thq(x\ga)}{\sqrt{\thq(x\ga, x\de)}}, \quad x\in\L.
\end{equation*}
The quasi-periodicity property of the theta functions shows
$$\tilP(qx) = -\frac{|x|}{x} \epsilon(qx) \epsilon(x)^{-1} \frac{\sqrt{q\ga\de}}{\de} \tilP(x), \quad
\tilQ(qx) = -\frac{|x|}{x}\epsilon(qx) \epsilon(x)^{-1} \frac{\sqrt{q\ga\de}}{\ga} \tilQ(x), \quad x\in\L.$$
If $x>0$, then $\epsilon(x) = \epsilon(qx) = 1$, so $\tilP(qx) = -(\sqrt{q\ga\de}/\de) \tilP(x)$ and $\tilQ(qx) = -(\sqrt{q\ga\de}/\ga) \tilQ(x)$.
If $x<0$, then $\epsilon(qx) = -\epsilon(x)$, so again $\tilP(qx) = -(\sqrt{q\ga\de}/\de) \tilP(x)$ and $\tilQ(qx) = -(\sqrt{q\ga\de}/\ga) \tilQ(x)$.
It follows that $\tilK^{\ga, \de}(qx, qy) = \tilK^{\ga, \de}(x, y)$ whenever $x \neq y$ are points in $\L$.
It remains to prove the equality when $x = y$ is a point in $\L$.

Similarly as above, one has $K^{\ga, \de}(qx, qy) = K^{\ga, \de}(x, y)$ whenever $x \neq y$ are points in $\L$ of the same sign (note that this equality is for $K^{\ga, \de}$ and not $\tilK^{\ga, \de}$).
Since $K^{\ga, \de}(x, y)$ admits an analytic continuation to the $q$-invariant domain $\mathcal{D}_+^2 \sqcup \mathcal{D}_-^2 \subset \mathcal{D}^2$ and both $\D_-, \D_+$ are path-connected, then also $K^{\ga, \de}(qx, qx) = K^{\ga, \de}(x, x)$, for all $x\in\mathcal{D} = \mathcal{D}_+ \sqcup \mathcal{D}_-$, in particular for all $x\in\L$.
Hence, $\tilK^{\ga, \de}(qx, qx) = K^{\ga, \de}(qx, qx) = K^{\ga, \de}(x, x) = \tilK^{\ga, \de}(x, x)$, for all $x\in\L$.
\end{proof}

\subsection{Simplified formulas for the elliptic tail kernel}\label{app:elliptic}

We simplify $K^{\ga, \de}(x, y)$ when $x = \z_{\pm} q^{m}$, $y = \z_{\pm} q^{n}$, for $m, n \in \Z$.
The next two lemmas are simple consequences of the quasi-periodicity property of the theta function; we omit their proofs.
Recall that $C = C(\ga, \de)$ was defined in \eqref{Cconst}.

\begin{lemma}\label{ppsimplifying}
For any integers $m \neq n$,
\begin{align*}
K^{\ga, \de}(\zp q^m, \zp q^n) &= C (-1)^{m+n} \times \frac{\dfrac{\ga^m\de^n}{(\sqrt{\ga\de})^{m+n}} - \dfrac{\ga^n\de^m}{(\sqrt{\ga\de})^{m+n}}}{q^{\frac{m - n}{2}} - q^{\frac{n - m}{2}}},\\
K^{\ga, \de}(\zm q^m, \zm q^n) &= C \times \frac{\dfrac{\ga^n\de^m}{(\sqrt{\ga\de})^{m+n}} - \dfrac{\ga^m\de^n}{(\sqrt{\ga\de})^{m+n}}}{q^{\frac{m - n}{2}} - q^{\frac{n - m}{2}}}.
\end{align*}
\end{lemma}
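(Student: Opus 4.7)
The plan is a direct computation using the quasi-periodicity identity \eqref{quasiperiodicity} applied to each theta factor in the definition \eqref{PQdef} of $P$ and $Q$. I would begin by recording the normalizing identity $P(\zeta)Q(\zeta) = |\zeta|$ for $\zeta \in \{\zp, \zm\}$, which follows immediately from \eqref{PQdef} and the convention $\thq(x\ga, x\de) = \thq(x\ga)\thq(x\de)$. This identity will extract the factor $|\zeta|$ from the numerator $P(x)Q(y) - Q(x)P(y)$, which after division by $x - y = \zeta(q^m - q^n)$ produces the sign $|\zeta|/\zeta \in \{\pm 1\}$ responsible for the discrepancy between the two displayed formulas.

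The heart of the calculation is to express $P(\zeta q^m)$ and $Q(\zeta q^m)$ as scalar multiples of $P(\zeta)$ and $Q(\zeta)$. Applying \eqref{quasiperiodicity} to each theta factor in \eqref{PQdef} and choosing positive real branches of all square roots---which is legitimate because admissibility forces $\ga\de > 0$ (either $\de = \bar\ga$ in the principal series, or else $\ga,\de$ are real of the same sign in the complementary series), whence $\thq(\zeta\ga)\thq(\zeta\de) > 0$---I would derive
\[
P(\zp q^m) = (-1)^m q^{m/2}\bigl(\tfrac{\sqrt{\ga\de}}{\de}\bigr)^m P(\zp), \qquad P(\zm q^m) = q^{m/2}\bigl(\tfrac{\sqrt{\ga\de}}{\de}\bigr)^m P(\zm),
\]
together with the analogous formulas for $Q$ (with $\de$ replaced by $\ga$). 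The absence of $(-1)^m$ in the $\zm$ case comes from an extra $(-1)^m$ in $(\zm\de)^{-m} = (-1)^m|\zm|^{-m}\de^{-m}$ cancelling the one supplied by \eqref{quasiperiodicity}.

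Substituting back, the kernel numerator factors as $|\zeta|\cdot \eta^{m+n} q^{(m+n)/2}(\sqrt{\ga\de})^{m+n}\bigl[\de^{-m}\ga^{-n} - \ga^{-m}\de^{-n}\bigr]$ with $\eta = -1$ when $\zeta = \zp$ and $\eta = +1$ when $\zeta = \zm$. Dividing by $x - y$, the identity $q^{(m+n)/2}/(q^m - q^n) = 1/(q^{(m-n)/2} - q^{(n-m)/2})$ together with the elementary rewrite $(\sqrt{\ga\de})^{m+n}/(\de^m\ga^n) = \ga^m\de^n/(\sqrt{\ga\de})^{m+n}$ (and its counterpart with $\ga,\de$ swapped) then yield the claimed formulas. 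The sign $|\zm|/\zm = -1$ reverses the order of the two terms in the bracket in the $\zm$ case, which is precisely the visible difference between the two formulas in the statement.

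The only real obstacle is the careful bookkeeping of signs and square-root branches---in particular, isolating the source of the $(-1)^{m+n}$ present in the $\zp$ case but absent in the $\zm$ case, and verifying that every square root encountered remains unambiguously positive real. Once this is unraveled, the remaining algebra is routine.
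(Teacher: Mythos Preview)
Your proposal is correct and is precisely the computation the paper has in mind: the authors omit the proof entirely, stating only that the lemma is a ``simple consequence of the quasi-periodicity property of the theta function,'' and your argument carries out exactly that computation with careful tracking of signs and square-root branches.
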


\begin{lemma}\label{pmsimplifying}
For any $m, n\in\Z$,
\begin{multline*}
K^{\ga, \de}(\zp q^m, \zm q^n) = K^{\ga, \de}(\zm q^n, \zp q^m)\\
= \frac{C (-1)^{m}(\sqrt{\ga\de})^{-(m+n)}}{\sqrt{\thq(\zm\ga, \zm\de, \zp\ga, \zp\de)}} \cdot \frac{\ga^m\de^n \thq(\zm\ga, \zp\de) - \ga^n\de^m \thq(\zm\de, \zp\ga)}{|\zp/\zm|^{\frac{1}{2}}q^{\frac{m-n}{2}} + |\zm/\zp|^{\frac{1}{2}}q^{\frac{n-m}{2}}}.
\end{multline*}
\end{lemma}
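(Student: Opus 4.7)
The plan is to evaluate the two functions $P(x)$ and $Q(x)$ at the lattice points $\zp q^m$ and $\zm q^n$ using the quasi-periodicity formula \eqref{quasiperiodicity}, then plug into the definition \eqref{thetakernel} and simplify. Concretely, for $x = \zp q^m$ I would apply \eqref{quasiperiodicity} to both $\thq(x\de)$ and to $\thq(x\ga, x\de)$ under the square root in the denominator of \eqref{PQdef}; the two half-powers $q^{-m(m-1)/2}$ cancel, the factors of $\zp$ inside the theta arguments pull out as $(\zp\ga)^{-m}(\zp\de)^{-m/2}\cdot(\zp\de)^{-m/2}$-type expressions, and after collecting I expect to land on
\begin{equation*}
P(\zp q^m)= \sqrt{\zp}\,(-1)^m q^{m/2}\ga^{m/2}\de^{-m/2}\,\frac{\thq(\zp\de)}{\sqrt{\thq(\zp\ga,\zp\de)}},
\end{equation*}
with $Q(\zp q^m)$ obtained by swapping $\ga\leftrightarrow\de$. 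The analogous computation for $y = \zm q^n$ uses $\sqrt{|y|}=\sqrt{-\zm}\,q^{n/2}$ and is the one step where care is needed: the factor $(-1)^n$ from the quasi-periodicity combines with another $(-1)^n$ coming from $(\zm\de)^{-n} = (-1)^n(|\zm|\de)^{-n}$, so the two signs cancel and $P(\zm q^n),Q(\zm q^n)$ come out with no net sign, only powers of $\ga,\de,q$ and the theta function $\thq(\zm\de)$, resp.\ $\thq(\zm\ga)$, in the numerator.

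Next I would compute the bilinear combination $P(\zp q^m)Q(\zm q^n) - Q(\zp q^m)P(\zm q^n)$. The common prefactor is $\sqrt{\zp|\zm|}\,(-1)^m q^{(m+n)/2}$ divided by $\sqrt{\thq(\zm\ga,\zm\de,\zp\ga,\zp\de)}$, and the bracket becomes
\begin{equation*}
\ga^{(m-n)/2}\de^{(n-m)/2}\,\thq(\zm\ga,\zp\de) - \ga^{(n-m)/2}\de^{(m-n)/2}\,\thq(\zm\de,\zp\ga),
\end{equation*}
which I would rewrite by pulling out the common scalar $(\sqrt{\ga\de})^{-(m+n)}$, turning the two $\ga^{\bullet}\de^{\bullet}$ powers into the desired $\ga^m\de^n$ and $\ga^n\de^m$.

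For the denominator I would factor
\begin{equation*}
\zp q^m - \zm q^n = \sqrt{\zp|\zm|}\,q^{(m+n)/2}\bigl(|\zp/\zm|^{1/2}q^{(m-n)/2} + |\zm/\zp|^{1/2}q^{(n-m)/2}\bigr),
\end{equation*}
so that the $\sqrt{\zp|\zm|}\,q^{(m+n)/2}$ from the numerator cancels exactly. Multiplying by $C = C(\ga,\de)$ then yields the stated formula for $K^{\ga,\de}(\zp q^m, \zm q^n)$. The equality $K^{\ga,\de}(\zp q^m,\zm q^n) = K^{\ga,\de}(\zm q^n,\zp q^m)$ is automatic from the manifest symmetry of the expression $[P(x)Q(y)-Q(x)P(y)]/(x-y)$ under swapping $x\leftrightarrow y$.

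No single step is genuinely difficult — there is no need for the three-term identity of Lemma \ref{lem:qelliptic} here, since all theta function arguments are rescaled by powers of $q$ and can be handled via \eqref{quasiperiodicity} alone. The only real obstacle is keeping track of signs and square-root branches that arise from $\zm<0$ together with the $\sqrt{|x|}$ and $\sqrt{\thq(x\ga,x\de)}$ normalizations in \eqref{PQdef}; once those bookkeeping issues are settled, the identity falls out by inspection. This is presumably why the authors elected to omit the proof.
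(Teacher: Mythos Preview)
Your proposal is correct and follows exactly the approach the paper indicates: the authors omit the proof, stating only that it is a simple consequence of the quasi-periodicity relation \eqref{quasiperiodicity}, and your computation carries this out in full detail with the sign bookkeeping handled correctly.
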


\begin{lemma}\label{lem:zero}
For any $m\in\Z$,
\begin{align}
K^{\ga, \de}(\zp q^m, \zp q^m) = K^{\ga, \de}(\zp, \zp) &= C\zp \left\{ \de\frac{\thq'(\de\zp)}{\thq(\de\zp)} - \ga\frac{\thq'(\ga\zp)}{\thq(\ga\zp)} \right\},\label{pp0}\\
K^{\ga, \de}(\zm q^m, \zm q^m) = K^{\ga, \de}(\zm, \zm) &= C\zm \left\{ \ga\frac{\thq'(\ga\zm)}{\thq(\ga\zm)} - \de\frac{\thq'(\de\zm)}{\thq(\de\zm)} \right\}.\label{mm0}
\end{align}
\end{lemma}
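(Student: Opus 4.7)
My plan is to first reduce both identities to the case $m=0$ via the $q$-translation-invariance already established in Proposition \ref{Kperiodic}. Since $\epsilon(x)^2=1$, the gauge factor in \eqref{dfgauge} is trivial on the diagonal, so $K^{\ga,\de}(x,x)=\tilK^{\ga,\de}(x,x)$. Then the identity $\tilK^{\ga,\de}(qx,qy)=\tilK^{\ga,\de}(x,y)$ instantly gives $K^{\ga,\de}(\zp q^m,\zp q^m)=K^{\ga,\de}(\zp,\zp)$ and similarly for $\zm$, so only the base cases $m=0$ need a direct computation.

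For the base case, I would write $P(x)=|x|^{1/2}f(x)$ and $Q(x)=|x|^{1/2}g(x)$ with
\begin{equation*}
f(x):=\frac{\thq(x\de)}{\sqrt{\thq(x\ga,x\de)}},\qquad g(x):=\frac{\thq(x\ga)}{\sqrt{\thq(x\ga,x\de)}},
\end{equation*}
and plug into \eqref{thetakernelequal}. A short calculation shows that the derivative of the square-root prefactor cancels in the Wronskian, yielding
\begin{equation*}
P'(x)Q(x)-Q'(x)P(x)=|x|\bigl(f'(x)g(x)-g'(x)f(x)\bigr)
\end{equation*}
for $x>0$ (and the same identity with the opposite sign of $x$ for $x<0$, since then $|x|=-x$).

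The Wronskian of $f$ and $g$ is then handled by the trick $f'g-g'f=g^{2}(f/g)'$ together with the crucial observation that the common factor $1/\sqrt{\thq(x\ga,x\de)}$ cancels in the ratio: $f/g=\thq(x\de)/\thq(x\ga)$. Logarithmic differentiation gives
\begin{equation*}
f'g-g'f=\de\,\frac{\thq'(x\de)}{\thq(x\de)}-\ga\,\frac{\thq'(x\ga)}{\thq(x\ga)}.
\end{equation*}
Combining with the sign of $|x|=\pm x$ produces exactly the bracketed expressions in \eqref{pp0} and \eqref{mm0}, where the order of the two logarithmic-derivative terms is flipped on the negative branch.

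There is no real obstacle here; the only point requiring care is the bookkeeping of signs on the negative half-lattice, which is why the two formulas differ by an overall swap of $\ga\leftrightarrow\de$ inside the brackets. All steps rely only on the explicit definitions \eqref{Cconst}--\eqref{thetakernelequal}, on Proposition \ref{Kperiodic}, and on elementary calculus; no further theta-function identities beyond the quasi-periodicity used in Proposition \ref{Kperiodic} are needed.
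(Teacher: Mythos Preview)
Your proposal is correct and follows essentially the same route as the paper. Both arguments reduce to $m=0$ via Proposition~\ref{Kperiodic}, then eliminate the square-root factor $\sqrt{\thq(x\ga,x\de)}$ by forming a ratio (you use $f/g=\thq(x\de)/\thq(x\ga)$ and the Wronskian identity $f'g-g'f=g^{2}(f/g)'$; the paper squares $P$ to get $P^{2}=|x|\thq(x\de)/\thq(x\ga)$ and differentiates that) and finish by logarithmic differentiation; the sign bookkeeping on the negative branch is handled identically.
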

\begin{proof}
Proposition \ref{Kperiodic} proves the first equalities in \eqref{pp0} and \eqref{mm0}.
Let us prove the second equality in \eqref{pp0}; the proof of the second equality in \eqref{mm0} is similar and we omit it.

At the points $(x, x) \in \L^2$ of the diagonal, the elliptic tail kernel is
$$K^{\ga, \de}(x, x) = C\cdot (Q(x)P'(x) - P(x)Q'(x)),$$
where $P(x) := \sqrt{|x|\thq(x\de, x\ga)}/\thq(x\ga)$, $Q(x) := \sqrt{|x|\thq(x\ga, x\de)}/\thq(x\de)$.
Then $P(x)^2 = |x|\dfrac{\thq(x\de)}{\thq(x\ga)}$.
When $x > 0$, we can take derivatives and obtain
\begin{equation*}
2P(x)P'(x) = x \frac{d}{dx}\frac{\thq(x\de)}{\thq(x\ga)} + \frac{\thq(x\de)}{\thq(x\ga)}
= x \frac{\de\thq(x\ga)\thq'(x\de) - \ga\thq(x\de)\thq'(x\ga)}{\thq(x\ga)^2} + \frac{\thq(x\de)}{\thq(x\ga)}.
\end{equation*}
Multiply by $Q(x)/(2P(x)) = \thq(x\ga)/(2\thq(x\de))$ to get
\begin{equation*}
Q(x)P'(x) = \frac{1}{2} \left\{ x \frac{\de\thq(x\ga)\thq'(x\de) - \ga\thq(x\de)\thq'(x\ga)}{\thq(x\ga, x\de)} + 1 \right\};
\end{equation*}
similarly $P(x)Q'(x)$ is given by the same formula with the swap $\ga \leftrightarrow \de$.
Hence
\begin{equation*}
K^{\ga, \de}(\zp, \zp) = C\cdot (Q(\zp)P'(\zp) - P(\zp)Q'(\zp))
= \left. Cx \cdot\frac{\de\thq(x\ga)\thq'(x\de) - \ga\thq(x\de)\thq'(x\ga)}{\thq(x\ga, x\de)} \right|_{x = \zp}
\end{equation*}
which proves \eqref{pp0}.
\end{proof}

Finally, we comment on the elliptic tail kernel with parameters $\de = \ga \in \R$ (part of the complementary series).
In the limit $\de\to\ga$, we have
$$
 \frac{(\de - \ga)}{\ga\de \cdot ({\de}/{\ga},{\ga}/{\de}; q)_{\infty}}
= \frac{1}{(\ga - \de) \cdot ({q\de}/{\ga},{q\ga}/{\de}; q)_{\infty}}
\sim \frac{1}{(\ga - \de)\cdot (q; q)^2_{\infty}}
$$
and
$$
\frac{\thq(x\de)\thq(y\ga) - \thq(x\ga)\thq(y\de)}{\ga - \de} \to y\thq(x\ga)\thq'(y\ga) - x\thq'(x\ga)\thq(y\ga).
$$

Then, when $\de=\ga$, the elliptic tail kernel can be simplified to
\begin{equation}\label{gaequalde}
K^{\ga, \ga}(x, y) =
\frac{\thq(\ga \z_-,\; \ga  \z_+)^2}{\z_+ \cdot (q; q)^{4}_{\infty} \cdot \thq\left(\dfrac{\z_-}{\z_+}, \; \ga^2 \z_-\z_+\right)}  \cdot \frac{ \sqrt{|x y|} }{x - y} \cdot \left( y\frac{\thq'(y\ga)}{\thq(y\ga)} - x\frac{\thq'(x\ga)}{\thq(x\ga)} \right)
\end{equation}
in the case $x \neq y$, and it is analytically continued according to L'H\^opital's rule in the case $x = y$.

\section{Limit from the basic hypergeometric kernel to the elliptic tail kernel}\label{sec:limit}

This section is devoted to proving the following limit.

\begin{theorem}\label{thm:mainlimit}
For any $x,y\in\L$, we have
\begin{equation*}
\lim_{M\to\infty} (\sgn(x)\sgn(y))^M K^{\al, \be, \ga, \de}(q^Mx,q^My) = K^{\ga, \de}(x, y),
\end{equation*}
where $K^{\al, \be, \ga, \de}$ is defined by \eqref{basickernel} and $K^{\ga, \de}$ is defined by \eqref{thetakernel}.
\end{theorem}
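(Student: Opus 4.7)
The plan is to analyze the asymptotics of $\mathfrak{F}_\rr(q^M x)$ for $\rr = 0, 1$ as $M \to \infty$ and then form the Christoffel--Darboux-like combination appearing in \eqref{basickernel}. A direct approach runs into the following obstruction: if one substitutes $x \to q^Mx$ in \eqref{Fdef} and takes the pointwise limit of the series for ${}_2\phi_1$ via the $q$-binomial theorem, the resulting leading behaviors of $\mathfrak{F}_0(q^Mx)$ and $\mathfrak{F}_1(q^Mx)$ are \emph{both} proportional to $P(x)$ as functions of $x$. Consequently the antisymmetric numerator $\mathfrak{F}_1(q^Mx)\mathfrak{F}_0(q^My) - \mathfrak{F}_1(q^My)\mathfrak{F}_0(q^Mx)$ vanishes at leading order, and the nontrivial limit must come from a second independent asymptotic mode, responsible for $Q(x)$.

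To expose this second mode, the first step of the plan is to apply Watson's analytic continuation formula for ${}_2\phi_1$ (see, e.g., \cite[(4.3.2)]{GR}) to decompose the hypergeometric factor in $\mathfrak{F}_\rr$ as a sum of two transformed ${}_2\phi_1$'s whose parameters, upon substituting $x \to q^M x$, admit tractable limits via the $q$-binomial theorem. Combining this with the quasi-periodicity \eqref{quasiperiodicity} of $\thq$ and the identity
\[
(Aq^{-M};q)_\infty = (-A)^M q^{-M(M+1)/2}(q/A;q)_M (A;q)_\infty,
\]
applied to each large-index Pochhammer symbol appearing in the prefactors, should yield an expansion of the form
\[
\mathfrak{F}_\rr(q^M x) = \mu_M(x)\bigl[A_\rr\,(\ga/\de)^{M/2}\, P(x) + B_\rr\,(\de/\ga)^{M/2}\, Q(x)\bigr] + (\text{subleading}),
\]
where $\mu_M(x) := q^{M/2}(-\sgn(x))^M$ and $A_0, A_1, B_0, B_1$ are explicit scalars in $\al, \be, \ga, \de, q$ (independent of $x$ and $M$), built from $q$-Pochhammer symbols and values of $\thq$.

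The next step is to insert this expansion into the numerator of \eqref{basickernel}. The ``diagonal'' contributions proportional to $P(x)P(y)$ (scaling like $(\ga/\de)^M$) and to $Q(x)Q(y)$ (scaling like $(\de/\ga)^M$) vanish by antisymmetry in $x \leftrightarrow y$. Only the cross terms survive, and their scalings $(\ga/\de)^{M/2}(\de/\ga)^{M/2}$ collapse to $1$, giving
\[
\mathfrak{F}_1(q^Mx)\mathfrak{F}_0(q^My) - \mathfrak{F}_1(q^My)\mathfrak{F}_0(q^Mx) = \mu_M(x)\mu_M(y)\,(A_1 B_0 - A_0 B_1)\bigl[P(x)Q(y) - Q(x)P(y)\bigr] + o(1).
\]
Since $\mu_M(x)\mu_M(y) = q^M (\sgn(x)\sgn(y))^M$, dividing by $q^M x - q^M y = q^M(x-y)$, multiplying by $\mathfrak{C}(\al,\be,\ga,\de)$, and prepending the factor $(\sgn(x)\sgn(y))^M$ from the statement will give
\[
(\sgn(x)\sgn(y))^M K^{\al,\be,\ga,\de}(q^Mx,q^My) \longrightarrow \mathfrak{C}(\al,\be,\ga,\de)(A_1 B_0 - A_0 B_1) \cdot \frac{P(x)Q(y) - Q(x)P(y)}{x - y}.
\]

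The main obstacle is the final identification of the scalar factor: one must verify that $\mathfrak{C}(\al,\be,\ga,\de)(A_1 B_0 - A_0 B_1) = C(\ga,\de)$ as defined in \eqref{Cconst}. Notably, all dependence on the extremal parameters $\al, \be$ must cancel out of the left-hand side, leaving only $\ga, \de, \zp, \zm, q$. This cancellation is expected to come from a nontrivial application of the Weierstrass three-term theta identity (Lemma~\ref{lem:qelliptic}): with a judicious choice of $(X, Y, Z, W)$ in terms of ratios of $\al, \be, \ga, \de$, the theta-function part of $A_1 B_0 - A_0 B_1$ should collapse, and the remaining $q$-Pochhammer symbols should telescope via repeated use of $(aq^{-1};q)_\infty = (1 - aq^{-1})(a;q)_\infty$. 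Carrying out this simplification, together with the detailed bookkeeping in the preceding steps, is the most laborious part of the argument.
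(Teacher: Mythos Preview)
Your plan is essentially the paper's proof: the paper applies Heine's transformation followed by Watson's formula \cite[(4.3.2)]{GR} to rewrite $\mathfrak{F}_\rr(x)$ as a two-term sum whose ${}_2\phi_1$'s have argument $\al x$, then substitutes $x\to q^Mx$ and extracts the two modes $(\ga/\de)^{M/2}\,\thq(x\de)$ and $(\de/\ga)^{M/2}\,\thq(x\ga)$ exactly as you describe, with the diagonal terms cancelling by antisymmetry.

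Two corrections. First, the constant identification is \emph{not} the main obstacle and does not require Lemma~\ref{lem:qelliptic}: after Watson, the coefficients $A_\rr,B_\rr$ carry only $q$-Pochhammer factors in their $(\al,\be)$-dependence, and the relevant combination reduces to the elementary identity
\[
\ga^{-1}\Bigl(1-\tfrac{\be}{q\de}\Bigr)\Bigl(1-\tfrac{\al}{q\de}\Bigr)-\de^{-1}\Bigl(1-\tfrac{\be}{q\ga}\Bigr)\Bigl(1-\tfrac{\al}{q\ga}\Bigr)=\frac{(\de-\ga)(\ga\de q^2-\al\be)}{\ga^2\de^2q^2},
\]
after which the remaining Pochhammers cancel directly against those in $\mathfrak{C}(\al,\be,\ga,\de)$. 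Second, your outline omits the diagonal case $x=y$; the paper handles it by noting that the off-diagonal convergence is locally uniform on $\mathcal D^2\setminus\{x=y\}$ and then invoking the contour representations \eqref{basicdiagonal} and \eqref{ellipticdiagonal}.
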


\begin{proof}
We first transform the functions $\F_\rr(x)$ so that they are well-suited for the $x\to 0$ limit.

The Heine transformation formula for $_2\phi_1$ (cf.\ \cite[Sec. 1.4]{GR}) yields
$$
 _2\phi_1\left(\begin{matrix}
A, \, B\\
C\end{matrix}\Bigg| z\right)=\frac{(B,Az;q)_\infty}{(C,z;q)_\infty} \,
_2\phi_1\left(\begin{matrix}
C/B, \, z\\
Az\end{matrix}\Bigg| B\right).
$$
We use it with
$$
 A=\dfrac{\al q^{\rr-1}}\de,\quad B=\dfrac q{\be x},\quad C=\dfrac{q^\rr}{\de
 x},\quad  z=\frac{\be q^{\rr-1}}{\ga},
$$
to get

\begin{multline*}
\F_\rr(x) = \sqrt{|x| \dfrac{ (x\al,x\be;q)_\infty}{\thq(x\ga, x\de)}}  \cdot
(-x)^{1-\rr}\, \frac{\left(\dfrac{\be q^{\rr-1}}\ga, \dfrac{q^\rr}{\de x};q\right)_\infty}{\left( \dfrac{\al\be q^{2\rr-2}}{\ga\de} ; q\right)_{\infty}}
\cdot \, {}_2\phi_1\left(\begin{matrix}
\dfrac{\al q^{\rr-1}}\de, \, \dfrac q{\be x}\\
\dfrac{q^\rr}{\de x}\end{matrix}\Bigg| \frac{\be q^{\rr-1}}\ga\right)
\\=
\sqrt{|x| \dfrac{ (x\al,x\be;q)_\infty}{\thq(x\ga, x\de)}}  \cdot
(-x)^{1-\rr}\, \cdot \, \left(\dfrac q{\be x}; q \right)_\infty \cdot  \,_2\phi_1\left(\begin{matrix}
\dfrac{\be q^{\rr-1}}{\de}, \, \dfrac{\be q^{\rr-1}}\ga\\
\dfrac{\al\be}{\ga \de} q^{2\rr-2}\end{matrix}\Bigg| \dfrac q{\be x}\right).
\end{multline*}

Further, recall Watson's formula (see \cite[(4.3.2)]{GR})
\begin{multline*}
 _2\phi_1\left(\begin{matrix}
A, \, B\\
C\end{matrix}\Bigg| z\right) =\frac{(B,C/A,Az, q/Az; q)_\infty}{(C,B/A,z,q/z; q)_\infty}
\,_2\phi_1\left(\begin{matrix}
A, \, Aq/C\\
Aq/B\end{matrix}\Bigg| Cq/ABz\right)
\\
+\frac{(A,C/B,Bz, q/Bz; q)_\infty}{(C,A/B,z,q/z; q)_\infty} \,_2\phi_1\left(\begin{matrix}
B, \, Bq/C\\
Bq/A\end{matrix}\Bigg| Cq/ABz\right)
\end{multline*}
and apply it with
$$
 A=\dfrac{\be q^{\rr-1}}{\de},\quad
 B= \dfrac{\be q^{\rr-1}}{\ga},\quad
 C= \dfrac{\al\be q^{2\rr-2}}{\ga \de},\quad
 z=\dfrac {q}{\be x},
$$
to get
\begin{multline*}
\F_\rr(x)= \sqrt{|x| \dfrac{ (x\al,x\be;q)_\infty}{\thq(x\ga, x\de)}} \cdot
(-x)^{1-\rr}\,  \frac{1}{\left( \beta x, \dfrac{\al\be q^{2\rr-2}}{\ga\de}; q \right)_\infty} \\
\times \Biggl[ \frac{\left(\dfrac{\be q^{\rr-1}}{\ga},\dfrac{\al q^{\rr-1}}{\ga} ; q\right)_\infty\thq\left(x\de
q^{1-\rr}\right)}{\left(\dfrac{\de}{\ga}; q\right)_\infty}
\,_2\phi_1\left(\begin{matrix}
\dfrac{\be q^{\rr-1}}{\de}, \, \dfrac{\ga q^{2-\rr}}{\al}\\
\dfrac{\ga q}{\de}\end{matrix}\Bigg|\alpha x\right)\\
+  \{  \text{same expression after the swap } \ga \leftrightarrow \de \} \Biggr].
\end{multline*}

In the formula above, do the change of variables $x \mapsto q^M x$ (later we send $M\to\infty$).
The quasi-periodicity of the theta function implies
$$
\thq(A q^M x)= (-Ax)^{-M} q^{-M(M-1)/2} \thq(Ax),
$$
thus
\begin{multline*}
\F_\rr(q^Mx)= \sqrt{|q^M x| \dfrac{(q^Mx\al,q^Mx\be;q)_\infty}{q^{-M(M-1)}(x^2\ga\de)^{-M}\thq(x\ga, x\de)}}
\cdot
(-x q^M)^{1-\rr}\,  \frac{1}{\left( \beta q^Mx, \dfrac{\al\be q^{2\rr-2}}{\ga\de}; q \right)_\infty} \cdot \,  \\
\times \Biggl[ \frac{\left(\dfrac{\be q^{\rr-1}}{\ga},\dfrac{\al q^{\rr-1}}{\ga}; q \right)_\infty \thq\left(x\de
q^{1-\rr}\right)}{\left(\dfrac{\de}{\ga} ; q\right)_\infty} q^{-M(M-1)/2}(-x\de
q^{1-\rr})^{-M} \,_2\phi_1\left(\begin{matrix}
\dfrac{\be q^{\rr-1}}{\de}, \, \dfrac{\ga q^{2-\rr}}{\al}\\
\dfrac{\ga q}{\de}\end{matrix}\Bigg|\alpha q^Mx\right)\\
+ \{  \text{same expression after the swap } \ga \leftrightarrow \de \} \Biggr]
\\
= (-\sgn(x))^M \sqrt{|q^M x| \dfrac{ (q^Mx\al,q^Mx\be; q)_\infty}{\thq(x\ga, x\de)}} \cdot
(-x)^{1-\rr}\,  \frac{1}{\left( \be q^Mx, \dfrac{\al\be q^{2\rr-2}}{\ga\de}; q \right)_\infty} \cdot \,  \\
\times \Biggl[ \frac{\left(\dfrac{\be q^{\rr-1}}{\ga},\dfrac{\al q^{\rr-1}}{\ga}; q\right)_\infty\thq\left(x\de
q^{1-\rr}\right)}{\left(\dfrac{\de}{\ga}; q\right)_\infty} \left(\frac{\ga}{\de}
\right)^{M/2} \,_2\phi_1\left(\begin{matrix}
\dfrac{\be q^{\rr-1}}{\de}, \, \dfrac{\ga q^{2-\rr}}{\al}\\
\dfrac{\ga}{\de}q\end{matrix}\Bigg|\alpha q^Mx\right)
\\
+ \{  \text{same expression after the swap } \ga \leftrightarrow \de \} \Biggr].
\end{multline*}
For the correlation kernel, we only need $\F_0$ and $\F_1$. Let us analyze them more
carefully, keeping only the first term of the $M\to\infty$ asymptotics (for $\F_0$, we also need $\thq(x\ga q) = (-x\ga)^{-1}\thq(x\ga)$ and $\thq(x\de q) = (-x\de)^{-1}\thq(x\de)$):
\begin{multline*}
\F_0(q^Mx)=  (-\sgn(x))^M \sqrt{|q^M x| \dfrac{(q^Mx\al, q^Mx\be;q)_\infty}{\thq(x\ga, x\de)}} \cdot
\frac{1}{\left( \be q^Mx, \dfrac{\al\be}{q^2\ga\de}; q \right)_\infty} \cdot \,  \\
\times \Biggl[ \de^{-1} \left( 1 - \frac{\be q^{-1}}{\ga} \right) \left(1 - \frac{\al q^{-1}}{\ga}\right)\cdot
\frac{\left(\dfrac{\be}{\ga},\dfrac{\al}{\ga} ; q\right)_\infty \thq\left(x\de
\right)}{\left(\dfrac{\de}{\ga}; q\right)_\infty} \left(\frac{\ga}{\de} \right)^{M/2}
\left(1+O(q^{M})\right)\\
+ \{  \text{same expression after the swap } \ga \leftrightarrow \de \} \Biggr],
\end{multline*}

\begin{multline*}
\F_1(q^Mx)= (-\sgn(x))^M \sqrt{|q^M x| \dfrac{(q^Mx\al,q^Mx\be;q)_\infty}{\thq(x\ga, x\de)}} \cdot
  \frac{1}{\left( \be q^Mx, \dfrac{\al\be}{\ga\de}; q \right)_\infty} \cdot \,  \\
\times \Biggl[ \frac{\left(\dfrac{\be}{\ga},\dfrac{\al}{\ga}; q\right)_\infty
\thq\left(x\de \right)}{\left(\dfrac{\de}{\ga}; q\right)_\infty}
\left(\frac{\ga}{\de} \right)^{M/2} \left(1+O(q^{M})\right)
+ \{  \text{same expression after the swap } \ga \leftrightarrow \de \} \Biggr].
\end{multline*}

Then
\begin{gather*}
K^{\al, \be, \ga, \de}(q^Mx,q^My) = \mathfrak{C}(\al, \be, \ga, \de) \cdot \frac{\F_1(q^Mx)\F_0(q^My)-\F_1(q^My)\F_0(q^Mx)}{q^Mx-q^My}\\
= \mathfrak{C}(\al, \be, \ga, \de) \cdot \frac{\left( \dfrac{\al}{\ga}, \dfrac{\al}{\de}, \dfrac{\be}{\ga}, \dfrac{\be}{\de}; q\right)_\infty}
{\left(\dfrac{\al\be}{\ga\de}, \dfrac{\al\be}{q^2\ga\de}, \dfrac{\de}{\ga},\dfrac{\ga}{\de}; q\right)_\infty} \cdot (\sgn(x)\sgn(y))^M\\
\times \sqrt{|x| \dfrac{(q^Mx\al,q^Mx\be;q)_\infty}{\thq(x\ga, x\de)}} \cdot
\frac{1}{(\beta q^Mx; q)_\infty}\sqrt{|y| \dfrac{(q^My\al,q^My\be;q)_\infty}{\thq(y\ga, y\de)}} \cdot
\frac{1}{(\beta q^My; q)_\infty}\\
\times \Biggl[ \left( \ga^{-1} \left( 1 - \frac{\be q^{-1}}{\de} \right) \left( 1 - \frac{\al q^{-1}}{\de} \right) -
\de^{-1} \left( 1 - \frac{\be q^{-1}}{\ga} \right) \left( 1 - \frac{\al q^{-1}}{\ga} \right)  \right)
\frac{\thq(x\de)\thq(y\ga) - \thq(x\ga)\thq(y\de)}{x - y}\\
+ O\left( q^M(\gamma/\delta)^{M/2} + q^M(\delta/\gamma)^{M/2} \right) \Biggr]\\
\sim \mathfrak{C}(\al, \be, \ga, \de)\cdot
 \frac{\left( \dfrac{\al}{\ga}, \dfrac{\al}{\de}, \dfrac{\be}{\ga}, \dfrac{\be}{\de}; q\right)_\infty}
{\left(\dfrac{\al\be}{\ga\de}, \dfrac{\al\be}{q^2\ga\de}, \dfrac{\de}{\ga},\dfrac{\ga}{\de}; q\right)_\infty}
\cdot \frac{(\de - \ga)(\ga\de q^2 - \al\be)}{\ga^2\de^2q^2}\\
\times (\sgn(x)\sgn(y))^M
 \sqrt{\dfrac{|x|}{\thq(x\ga, x\de)}}
  \sqrt{\dfrac{|y|}{\thq(y\ga, y\de)}}
\cdot \frac{\thq(x\de)\thq(y\ga)-\thq(x\ga)\thq(y\de)}{x - y},
\end{gather*}
where we denoted $A \sim B$ to mean $\lim_{q \to 1}{|A - B|} = 0$.
Note that we used $O(q^M(\ga/\de)^{M/2} + q^M(\de/\ga)^{M/2}) = o(1)$, as $M \to \infty$, which is a consequence of the fact that if $(\ga, \de)$ is an admissible pair, then $|q^2\ga/\de|,\ |q^2\de/\ga| \in (0, 1)$.

Plugging \eqref{Cabcd} into the estimate above gives
\begin{multline*}
(\sgn(x)\sgn(y))^M K^{\al, \be, \ga, \de}(q^Mx,q^My) \sim \frac{\thq(\ga\zm, \ga\zp, \de\zm, \de\zp)}{\zp\cdot \thq\left( \dfrac{\zm}{\zp}, \ga\de\zm\zp \right)\cdot (q; q)_{\infty}^2} \cdot \frac{1}{\left( \dfrac{\de}{\ga}, \dfrac{\ga}{\de}; q \right)_{\infty}}
\cdot \frac{(\de - \ga)}{\ga\de}\\
\times \sqrt{\dfrac{|x|}{\thq(x\ga, x\de)}}
  \sqrt{\dfrac{|y|}{\thq(y\ga, y\de)}}
\cdot \frac{\thq(x\de)\thq(y\ga)-\thq(x\ga)\thq(y\de)}{x - y},
\end{multline*}
and the expression above is exactly the right hand side of \eqref{thetakernel}.
Thus we have shown the desired limit for any $x \neq y$ in $\L$.
Recall that both kernels $K^{\al, \be, \ga, \de}(x, y)$ and $K^{\ga, \de}(x, y)$ admit analytic continuations to $\mathcal{D}^2$, where $\D = \C \setminus (\ga^{-1}q^{\Z} \cup \de^{-1}q^{\Z} \cup \{ \Re z = 0\})$.
Such analytic continuations are given by \eqref{basickernel} and \eqref{thetakernel} when $x \neq y$.
Then the estimates above actually show
\begin{equation*}
\lim_{M\to\infty} (\sgn(x)\sgn(y))^M K^{\al, \be, \ga, \de}(q^Mx,q^My) = K^{\ga, \de}(x, y),
\end{equation*}
for any $x \neq y$ in $\D$, where $\sgn(x) := \sgn(\Re x)$.
Moreover, the limit is uniform for $(x, y)$ varying over compact subsets of $\D^2 \setminus \{ (z, z) : z\in\D \}$.
Then the integral representations \eqref{basicdiagonal} and \eqref{ellipticdiagonal} imply that the desired limit also holds for $x = y$.
\end{proof}

\section{An application: The absence of atoms in the $q$--$zw$ measures}\label{sec:diffusity}

\subsection{A dichotomy for determinantal measures}

Let $\X$ be a countable set and $\Om$ be the set of all subsets of $\X$.
Observe that any $\om\in\Om$ can be interpreted as a $\{0, 1\}$-valued function on $\X$ if we identify a subset with its indicator function. As a result, $\Om$ is in bijection with $\{0, 1\}^{\X}$ and we can equip it with the product topology, so that it becomes a compact space.
Let $\mathcal P(\Om)$ denote the space of Borel probability measures on $\Om$. Any measure $M\in\mathcal P(\Om)$ is uniquely determined by its \emph{correlation functions} $\rho_1,\rho_2,\dots$. Here
$$
\rho_n(x_1,\dots,x_n):=M(\{\om\in\Om: \{x_1, \ldots, x_n\} \subseteq \om \}), \qquad n=1,2,\dots,
$$
where $x_1,\dots,x_n$ are pairwise distinct points of $\X$. $M$ is said to be a \emph{determinantal measure} if there exists a  complex-valued function $K(x,y)$ on $\X\times\X$ such that
$$
\rho_n(x_1,\dots,x_n)=\det[K(x_i,x_j)]_{i,j=1}^n, \qquad n=1,2,\dots\,.
$$
In particular, $\rho_1(x)=K(x,x)$. Any such function is called a \emph{correlation kernel} of $M$.

Let $\ell^2(\X)$ be the complex Hilbert space formed by complex-valued, square-summable functions on $\X$ and $\{e_x: x\in\X\}$ be its natural orthonormal basis formed by the delta functions. Given a bounded operator $K$ on $\ell^2(\X)$, we set
$$
K(x,y):=(Ke_y,e_x), \qquad x,y\in\X,
$$
and call $K(x,y)$ the kernel of $K$.

Suppose that $K$ is a \emph{positive contraction}, meaning that $K=K^*$ and $0\le K\le1$. Then $K$ gives rise to a determinantal measure $M^K\in\mathcal P(\Om)$, see e.g. \cite[Sec. 8]{Lyons}, \cite[Thm. 2.1]{ST}. Namely, the kernel of $K$  serves as a correlation kernel for  $M^K$.

Recall that a measure is said to be \emph{diffuse} if it has no atoms.

\begin{theorem}\label{thm:generaldiffuse}
Let $M\in\mathcal P(\Om)$ be a determinantal measure defined by a positive contraction on $\ell^2(\X)$. Let $\rho_1(x)$ be the first correlation function of $M$ and set
$$
\rho^*_1(x):=\min(\rho_1(x), 1-\rho_1(x)).
$$

Then the following dichotomy holds: $M$ is either diffuse or purely atomic, depending on whether the series $
\sum_{x\in\X}\rho^*_1(x)$ diverges or converges, respectively.
\end{theorem}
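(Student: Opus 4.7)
The dichotomy decomposes into two one-way implications:
(i) if $\sum_{x \in \X} \rho_1^*(x) < \infty$ then $M$ is purely atomic, and
(ii) if $\sum_{x \in \X} \rho_1^*(x) = \infty$ then $M$ is diffuse.
Implication (i) is soft and does not use the determinantal structure beyond the values of $\rho_1$; implication (ii) does require the determinantal structure, through a finite-marginal Hadamard-type estimate.

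For (i), I would partition $\X = \X_+ \sqcup \X_-$ with $\X_+ := \{x : \rho_1(x) > 1/2\}$, and apply the first Borel--Cantelli lemma (which requires no independence) to the events $\{x \in \om\}$ for $x \in \X_-$ and $\{x \notin \om\}$ for $x \in \X_+$. In each case the probability of the event equals $\rho_1^*(x)$, and the total sum is bounded by $\sum_{x \in \X} \rho_1^*(x) < \infty$. Consequently $\om \triangle \X_+$ is $M$-almost surely finite; the set of such $\om$ is countable, so $M$ is purely atomic.

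For (ii), the key estimate is that for every finite $F \subset \X$ and every $A \subseteq F$,
\begin{equation*}
M(\om \cap F = A)^2 \le \prod_{x \in A} K(x, x) \prod_{x \in F \setminus A}\bigl(1 - K(x, x)\bigr).
\end{equation*}
To prove it, observe first that the push-forward of $M$ under $\om \mapsto \om \cap F$ is a determinantal measure on the subsets of $F$ with kernel $K_F := K|_F$, still a positive contraction on $\ell^2(F)$. A short inclusion--exclusion combined with multilinearity of the determinant identifies $M(\om \cap F = A)$ with $\det T_A$, where $T_A$ is the $|F| \times |F|$ matrix whose $x$-th column equals $K_F e_x$ for $x \in A$ and $(I - K_F) e_x$ for $x \in F \setminus A$. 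The Gram form of Hadamard's inequality $|\det T_A|^2 \le \prod_{x \in F} \|T_A e_x\|^2$, together with the bounds $\|K_F e_x\|^2 = (K_F^2 e_x, e_x) \le K(x, x)$ and $\|(I-K_F) e_x\|^2 \le 1 - K(x, x)$ (both following from the operator inequalities $K^2 \le K$ and $(I-K)^2 \le I - K$, which are immediate from $0 \le K \le I$), yields the claim. Using $\rho_1(x),\, 1-\rho_1(x) \le 1 - \rho_1^*(x)$ it reduces to $M(\om \cap F = A)^2 \le \prod_{x \in F}(1 - \rho_1^*(x))$; for any $\om_0 \in \Om$ and any exhausting sequence $F_n \uparrow \X$, continuity of $M$ gives $M(\{\om_0\}) = \lim_n M(\om \cap F_n = \om_0 \cap F_n)$, and the divergence $\sum \rho_1^*(x) = \infty$ forces this limit to vanish, so $M$ is diffuse.

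The main obstacle is pinpointing the correct form of the key estimate. The naive product bound $M(\om \cap F = A) \le \prod_{x \in A} \rho_1(x) \prod_{x \in F \setminus A}(1-\rho_1(x))$ actually \emph{fails} for determinantal measures; for instance, on the two-point set $\{1, 2\}$ with $K(i, j) = 1/2$ for all $i, j$ (a rank-one projection), one has $M(\om = \{1\}) = 1/2$ while the right-hand side equals $1/4$. The square-root version displayed above is the correct replacement, and the crucial input is the spectral inequality $K^2 \le K$ (with its ``hole'' counterpart $(I-K)^2 \le I - K$), which permits each column norm of $T_A$ to be estimated by the corresponding diagonal entry of $K$ or $I - K$.
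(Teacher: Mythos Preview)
Your proof is correct. Part (i) matches the paper's argument exactly (Borel--Cantelli applied to the events $\{x\in\om\}$ or $\{x\notin\om\}$ according to whether $\rho_1(x)\le 1/2$ or not). Part (ii) reaches the same conclusion by a genuinely different route.

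The paper does not bound $M(\om\cap F=A)$ directly. Instead it first treats the special case $\om=\X$: there $M(\{\X\})\le\rho_n(x_1,\dots,x_n)=\det[K(x_i,x_j)]$, and the elementary inequality $\det\AA\le\prod_i\AA(i,i)$ for $\AA\ge 0$ gives $M(\{\X\})\le\prod_{x\in X}\rho_1(x)\le\prod_{x\in X}(1-\rho_1^*(x))$ for any finite $X$. For a general $\om$, the paper splits $\X=\om\sqcup(\X\setminus\om)$, observes that at least one piece carries a divergent sum of $\rho_1^*$, restricts the process to that piece (restriction preserves the positive-contraction hypothesis), and---if the piece is $\X\setminus\om$---applies the particle-hole involution $K\mapsto I-K$ to reduce to the case $\om=\X$ already handled. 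So the paper uses the Hermitian Hadamard bound on the \emph{correlation} determinant plus two structural operations (restriction, complementation), whereas you use the column-norm Hadamard bound on the \emph{event-probability} determinant $\det T_A$ together with the operator inequalities $K^2\le K$ and $(I-K)^2\le I-K$. Your approach handles particles and holes in a single stroke and avoids the case split; the paper's approach is slightly more modular and never needs the explicit formula $M(\om\cap F=A)=\det T_A$. Your closing remark that the naive (unsquared) product bound fails is a nice observation that the paper does not make; it explains why the paper's reduction to $\om=\X$ (where only particles, or after complementation only holes, are present) is not a mere convenience but a genuine simplification.
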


We give the proof after a little preparation. We need the following elementary lemma.

\begin{lemma}\label{boundproducts}
If $\AA = [\AA(i,j)]$ is a matrix of finite size, $\AA=\AA^*\ge0$, then
$$
\det \AA\le \prod_i \AA(i,i).
$$
\end{lemma}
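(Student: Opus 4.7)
The statement is the classical Hadamard inequality for positive semidefinite matrices, so I would present the simplest induction-with-Schur-complement argument.

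First I would dispose of a degenerate case: if some diagonal entry $\AA(i,i)$ vanishes, then the positivity $\AA \ge 0$ forces the entire $i$-th row and column to vanish (otherwise one produces a vector on which $\AA$ is negative via the $2\times2$ principal minor on rows/columns $i$ and $j$). Hence $\det\AA = 0$ and the inequality is trivial. So I may assume all $\AA(i,i) > 0$.

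Next I would proceed by induction on the matrix size $n$, the base case $n=1$ being vacuous. For the inductive step, write $\AA$ in block form with top-left entry $a := \AA(1,1) > 0$, first column (below the diagonal) $v$, and $(n-1)\times(n-1)$ trailing principal block $\AA'$. The Schur complement identity gives
\begin{equation*}
\det \AA = a \cdot \det\bigl(\AA' - a^{-1} v v^*\bigr).
\end{equation*}
The Schur complement $S := \AA' - a^{-1} v v^*$ is again Hermitian and positive semidefinite (this is the standard block-diagonalization $\AA = L\,\mathrm{diag}(a, S)\,L^*$ for a suitable unit lower-triangular $L$), and its diagonal entries are $S(i,i) = \AA(i,i) - a^{-1}|v_i|^2 \le \AA(i,i)$ for $i = 2,\dots,n$. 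By the induction hypothesis applied to $S$,
\begin{equation*}
\det S \le \prod_{i=2}^{n} S(i,i) \le \prod_{i=2}^{n} \AA(i,i),
\end{equation*}
and multiplying by $a = \AA(1,1)$ completes the step.

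There is no real obstacle here: the only thing to verify carefully is that the Schur complement of a PSD matrix is PSD, which follows immediately from the congruence decomposition above. An alternative route, equally short, is to normalize by conjugating with the diagonal matrix $D = \operatorname{diag}(\sqrt{\AA(i,i)})$ so that $D^{-1}\AA D^{-1}$ has unit diagonal and trace $n$, and then invoke AM--GM on its (nonnegative) eigenvalues to obtain $\det(D^{-1}\AA D^{-1}) \le 1$; I would mention this as a remark but adopt the Schur-complement induction as the primary argument since it is entirely self-contained.
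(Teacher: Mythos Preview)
Your proof is correct and follows essentially the same induction-via-Schur-complement route as the paper. The only minor variations are that the paper peels off the \emph{last} row/column (so the scalar Schur complement $D - B^*A^{-1}B \le D$ appears and induction is applied directly to the principal submatrix $A$, which is automatically PSD), and the paper handles the singular case by an $\varepsilon$-perturbation rather than by disposing of zero diagonal entries up front.
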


\begin{proof}[Proof of the lemma]
Use induction on $N$, the size of $\AA$. Write $\AA$ in the block form corresponding to the partition $N = (N-1) + 1$:
$$
\AA=\begin{bmatrix}  A & B \\ B^* & D  \end{bmatrix}.
$$

Assume first that $A$ is nonsingular. Then $A>0$, $A$ is invertible, and we may write
$$
\det \AA=\det A \cdot (D - B^* A^{-1}B).
$$

Observe that $D-B^* A^{-1}B\le D$, since $A^{-1}$ is positive-definite.
Hence $\det\AA\le \det A \cdot D$, and we may apply induction.

If $A$ is singular, then we apply the above argument to $\AA+\epsi 1$ with small $\epsi>0$ and pass to the limit as $\epsi\to0$.
\end{proof}

We will also use the particle-hole involution $\om\mapsto \om^\circ$, where $\om^{\circ}:=\X\setminus\om$. It induces an involutive transformation $M\mapsto M^\circ$ of the space $\mathcal P(\Om)$. Note that if $M=M^K$, where $K$ is a positive contraction, then $1-K$ is a positive contraction too, and we have $M^\circ=M^{1-K}$, see \cite[Appendix \S A.3]{BOO}.

A trivial but important observation is that the particle-hole involution leaves the function $\rho_1^*(x)$ invariant.

Finally, suppose that $\X'$ is a subset of $\X$ and let $\Om'$ be the set of subsets of $\X'$ equipped with the product topology (after identifying $\Om'$ with $\{0, 1\}^{\X'}$).
The correspondence $\om\mapsto\om\cap\X'$ defines a projection $\Om\to \Om'$ and hence a map $\mathcal P(\Om)\to\mathcal P(\Om')$. If $M\in\mathcal P(\Om)$ is a determinantal measure, then its pushforward $M'$ under that map is a determinantal measure too, and if $M=M^K$ for a positive contraction, then $M'$ has a similar form, with the positive contraction $K'$ on $\ell^2(\X')$ whose kernel is the restriction of the kernel $K(x,y)$ to $\X'\times\X'$.  In particular, the function $\rho^*_1$ is simply restricted to $\X'$.

\begin{proof}[Proof of Theorem \ref{thm:generaldiffuse}]
1. We assume that $\sum_{x\in\X}\rho^*_1(x)=\infty$ and prove that $M$ is diffuse.
This means that $M$ assigns mass $0$ to any singleton $\{\om\}$.
We will first prove this for $\omega = \X$.
For any $n$-point subset $X=\{x_1,\dots,x_n\}\subset\X$ we have
$$
M(\{\X\})\le\rho_n(x_1,\dots,x_n).
$$
By Lemma \ref{boundproducts},
$$
\rho_n(x_1,\dots,x_n)\le\prod_{i=1}^n{\rho_1(x_i)}=\prod_{x\in X}\rho_1(x).
$$
Therefore, for any finite subset $X \subset \X$,
$$
M(\{\X\})\le\prod_{x\in X}\rho_1(x).
$$

On the other hand, for any $x\in\X$,
$$
\rho_1(x)\le 1-\rho^*_1(x).
$$
It follows that
$$
M(\{\X\})\le\prod_{x\in X}(1-\rho^*_1(x))
$$
for any finite $X$. Since $\sum_{x\in\X}\rho^*_1(x)=\infty$, the right-hand side can be made arbitrarily small with an appropriate choice of $X$. We conclude that $M(\{\X\})=0$, as desired.

Now let us consider the general case $\om\in\Om$.
Set
$$
\X^0:=\{x\in\X: x\notin\om\}, \qquad \X^1:=\{x\in\X: x\in\om\}.
$$
As $\X = \X^0 \sqcup \X^1$ and $\sum_{x\in\X}{\rho_1^*(x)} = \infty$, then $\sum_{x\in\X^0}\rho^*_1(x)=\infty$, or $\sum_{x\in\X^1}\rho^*_1(x)=\infty$ (or both).
Examine first the case when $\sum_{x\in\X^1}\rho^*_1(x)=\infty$. Then let $\X':=\X^1$ and $M'$ be the pushforward of $M$ under the projection $\Om\to\Om'$ defined above. We have $M(\{\om\})\le M'(\{\om'\})$, where $\om' = \X'$, so it suffices to prove $M'(\{\om'\}) = 0$.
This reduces the statement to the case $\om = \X$ above.

Now examine the case when $\sum_{x\in\X^0}\rho^*_1(x) = \infty$. Let $\X':=\X^0$ and, again, let $M'$ be the pushforward of $M$ under the projection $\Om\to\Om'$, we are now reduced to the case $\om=\emptyset$.
Then we can perform the particle-hole involution and use the invariance of $\rho^*_1$ to reduce the desired statement to the known case $\om = \X$.

2. Next, we assume that $\sum_{x\in\X}\rho^*_1(x)<\infty$ and prove that $M$ is purely atomic. We have $\X=\X_0\sqcup\X_1$, where
$$
\X_0:=\{x\in\X: \rho_1(x)\le\tfrac12\}, \qquad \X_1:=\{x\in\X: \rho_1(x)>\tfrac12\}.
$$
For $\om\in\Om$, let $\om\triangle\X_1$ denote the symmetric difference of $\om$ and $\X_1$. We set
$$
\Om^*:=\{\om\in\Om: |\om\triangle \X_1|<\infty\}
$$
and note that $\Om^*$ is a countable subset of $\Om$. We are going to show that $M$ is concentrated on $\Om^*$, which will imply that $M$ is purely atomic.
To do this, we treat $\om\in\Om$ as the random element distributed according to $M$. For $x\in\X_0$, let $E_x$ be the event that $x\in\om$, whereas if $x\in\X_1$, let $E_x$ be the event that $x\notin\om$. Then $\om\in\Om^*$ precisely means that only finitely many events $E_x$ occur.

On the other hand, the probability of $E_x$ is $\rho^*_1(x)$. Thus, the sum of all these probabilities is finite. Applying the Borel--Cantelli lemma, we obtain that $\om\in\Om^*$ with probability $1$, which is equivalent to the desired claim.
\end{proof}

\subsection{The $q$--$zw$ measures are diffuse}

\begin{theorem}\label{thm:diffuse}
Let $(\al, \be, \ga, \de) \in \C^4$ be any quadruple of admissible parameters.
The corresponding $q$--$zw$ measure is diffuse.
\end{theorem}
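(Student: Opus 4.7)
The plan is to apply the dichotomy of Theorem \ref{thm:generaldiffuse} to $\X = \L$ and $M = M^{\al,\be,\ga,\de}$. Since $\L$ is countable and the $q$--$zw$ measure is determinantal with bounded kernel $K^{\al,\be,\ga,\de}$, the Macchi--Soshnikov correspondence ensures that $K^{\al,\be,\ga,\de}$ arises from a positive contraction on $\ell^2(\L)$, so the hypothesis of Theorem \ref{thm:generaldiffuse} is satisfied. It therefore suffices to prove that
$$
\sum_{x \in \L} \min\bigl( K^{\al,\be,\ga,\de}(x,x),\; 1 - K^{\al,\be,\ga,\de}(x,x) \bigr) = +\infty.
$$

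To control the summand near the accumulation point $0$, I would invoke Theorem \ref{thm:mainlimit} along the diagonal: since $(\sgn(x))^{2k} = 1$, one has for each fixed $x \in \L$
$$
\lim_{k \to +\infty} K^{\al,\be,\ga,\de}(q^k x,\; q^k x) = K^{\ga,\de}(x,x).
$$
By Proposition \ref{Kperiodic} (equivalently, by Lemma \ref{lem:zero}), the right-hand side depends only on the sign of $x$: it equals $K^{\ga,\de}(\zp,\zp)$ for $x > 0$ and $K^{\ga,\de}(\zm,\zm)$ for $x < 0$. Consequently, if at least one of these two constants lies strictly in $(0,1)$, say $K^{\ga,\de}(\zp,\zp) \in (0,1)$, then $\min\bigl( K^{\al,\be,\ga,\de}(q^k\zp, q^k\zp),\, 1 - K^{\al,\be,\ga,\de}(q^k\zp, q^k\zp) \bigr)$ is bounded below by a positive constant for all sufficiently large $k$; summation over $k \ge 0$ then forces divergence.

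The main obstacle is therefore to rule out the degenerate values $0$ and $1$ for the diagonal constants $K^{\ga,\de}(\zp,\zp)$ and $K^{\ga,\de}(\zm,\zm)$. Both lie in $[0,1]$ because $K^{\ga,\de}$ is the correlation kernel of the probability measure $M^{\ga,\de}$. To exclude the endpoints, the key tool is the nonnegativity of $2\times 2$ minors,
$$
K^{\ga,\de}(x,x)\, K^{\ga,\de}(y,y) - K^{\ga,\de}(x,y)^{2} \ge 0, \qquad x, y \in \L,
$$
which follows from the nonnegativity of the two-point correlation function. If $K^{\ga,\de}(\zp,\zp)$ were $0$, this would force $K^{\ga,\de}(\zp, y) = 0$ for every $y \in \L$. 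The defining formula \eqref{thetakernel} and the non-vanishing of $P, Q$ on $\L$ (guaranteed by admissibility) would then force $P/Q$ to be constant on $\L$, which contradicts $\ga \ne \de$ in view of the quasi-periodicity $\thq(qz) = -z^{-1}\thq(z)$ applied to $P/Q = \thq(\cdot\,\de)/\thq(\cdot\,\ga)$. If instead $K^{\ga,\de}(\zp,\zp) = 1$, then by the $q$-translation invariance of $K^{\ga,\de}$ along the diagonal (Proposition \ref{Kperiodic}) the value $1$ is attained at every $q^n\zp$, and the inclusion--exclusion identity
$$
\text{Prob}\bigl( q^m\zp \notin \omega,\; q^n\zp \notin \omega \bigr) = -\, K^{\ga,\de}(q^m\zp,\, q^n\zp)^{2} \ge 0
$$
forces $K^{\ga,\de}(\zp, q\zp) = 0$, in direct contradiction with Lemma \ref{ppsimplifying}, which expresses this quantity as a nonzero scalar multiple of $\sqrt{\de/\ga} - \sqrt{\ga/\de}$. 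The entirely analogous argument disposes of $K^{\ga,\de}(\zm,\zm)$, and the proof is complete.
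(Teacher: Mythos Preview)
Your argument is essentially the paper's own: reduce to Theorem~\ref{thm:generaldiffuse} via the diagonal limit of Theorem~\ref{thm:mainlimit}, then exclude the endpoint values $0$ and $1$ for $K^{\ga,\de}(\z_\pm,\z_\pm)$ using the explicit off-diagonal formulas of Lemma~\ref{ppsimplifying} together with $2\times 2$ positivity. The only point to flag is your appeal to ``the Macchi--Soshnikov correspondence'': that theorem goes from positive contractions to determinantal measures, not the other way, so it does not by itself justify the hypothesis of Theorem~\ref{thm:generaldiffuse}; the paper likewise invokes Theorem~\ref{thm:generaldiffuse} without spelling this out, relying implicitly on the construction of $M^{\al,\be,\ga,\de}$ in \cite{GO}.
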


\begin{proof}

We claim that $0 < K^{\ga, \de}(\zp, \zp) < 1$, and similarly for $K^{\ga, \de}(\zm, \zm)$.
Before proving the claim, let us deduce Theorem \ref{thm:diffuse} from it.
Theorem \ref{thm:mainlimit} shows $K^{\al, \be, \ga, \de}(x,x) \to K^{\ga, \de}(\z_{\pm}, \z_{\pm})$, as $x \to 0^{\pm}$ in $\L$, thus the density $\rho_1^{\al, \be, \ga, \de}(x) = K^{\al, \be, \ga, \de}(x, x)$, when $|x|$ is small, is uniformly bounded away from $0$ and $1$.
Consequently, Theorem \ref{thm:generaldiffuse} shows that $M^{\al, \be, \ga, \de}$ is diffuse.

Next let us prove the claim.
Since $K^{\ga, \de}(\zp, \zp)$ and $K^{\ga, \de}(\zm, \zm)$ are probabilities, then the claim would be contradicted if and only if $\{ K^{\ga, \de}(\zp, \zp), K^{\ga, \de}(\zm, \zm) \} \cap \{0, 1\} \neq \emptyset$.
Let us focus on proving $K^{\ga, \de}(\zp, \zp) \neq 1$, as the proof of the other three statements is the same.
Assume $K^{\ga, \de}(\zp, \zp) = 1$.
Then $M^{\ga, \de}$--almost surely, a random point configuration contains $\zp$.
Since $M^{\ga, \de}$ is stationary, a random configuration contains all points to the right of the origin and so all the corresponding correlation functions equal $1$, in particular
$$\det\left( {\begin{array}{cc}  K^{\ga, \de}(\zp q^m, \zp q^m) & K^{\ga, \de}(\zp q^m, \zp q^n) \\[0.3ex]
K^{\ga, \de}(\zp q^n, \zp q^m) & K^{\ga, \de}(\zp q^n, \zp q^n) \\  \end{array} } \right) = 1, \text{ whenever }m \neq n.$$
The matrix above is symmetric and its diagonal entries are equal to $1$, therefore
$$K^{\ga, \de}(\zp q^m, \zp q^n) = K^{\ga, \de}(\zp q^n, \zp q^m) = 0.$$
Apply the previous equation to $(m, n) = (2, 0)$: Lemma \ref{ppsimplifying} gives
\begin{equation}\label{final_zero}
K^{\ga, \de}(\zp q^2, \zp) = C(\ga, \de) \times \frac{\ga/\de - \de/\ga}{q - q^{-1}} = 0.
\end{equation}
One verifies that for any admissible pair $(\ga, \de)$ with $\ga \neq \de$, one has $C(\ga, \de) \neq 0$ and $\ga/\de - \de/\ga \neq 0$.
Thus we have reached a contradiction.
In the special case $\ga = \de$, we need the formulas at the end of Section \ref{app:elliptic} (see \eqref{gaequalde}); then equation \eqref{final_zero} becomes
$$K^{\ga, \ga}(\zp q^2, \zp) = \frac{2}{\zp\ga (q - q^{-1})} \cdot \frac{\thq(\ga\zm, \ga\zp)^2}{\thq(\zm/\zp, \ga^2\zm\zp)\cdot (q; q)_{\infty}^4} = 0.$$
This implies that $\ga\in\zm^{-1}q^{\Z}$ or $\ga\in\zp^{-1}q^{\Z}$, which again is impossible.
\end{proof}

\section{The projection property of the elliptic tail kernel}\label{sec:projection}

\subsection{The projection property}

Let $(\ga, \de)$ be an admissible pair and let $K^{\ga, \de}$ be the operator on $\ell^2(\L)$ with kernel $K^{\ga, \de}(x, y)$, that is,
$$(K^{\ga, \de}f)(x) := \sum_{y\in\L}{K^{\ga, \de}(x, y)f(y)}, \qquad f\in\ell^2(\L), \quad x\in\L.$$
The main theorem of this section is the following.

\begin{theorem}\label{thm:projection}
The operator $K^{\ga, \de}$ on $\ell^2(\L)$ is a projection operator, i.e., $K^{\ga, \de} = (K^{\ga, \de})^* = (K^{\ga, \de})^2$.
\end{theorem}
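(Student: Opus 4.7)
I would start by identifying $\L$ with $\Z\sqcup\Z$ through $(1,m)\mapsto \zp q^m$ and $(2,n)\mapsto \zm q^n$, and replace $K^{\ga,\de}$ by its gauge-equivalent kernel $\tilK^{\ga,\de}$ of Proposition \ref{Kperiodic}. Since the multiplication operator by the $\{\pm1\}$-valued function $\epsilon$ is a unitary involution on $\ell^2(\L)$, the operators $K^{\ga,\de}$ and $\tilK^{\ga,\de}$ are unitarily conjugate, so the projection property for one is equivalent to the projection property for the other. Under the identification above, $q$-translation-invariance of $\tilK^{\ga,\de}$ becomes ordinary $\Z$-translation-invariance of a $2\times 2$ matrix-valued kernel $\mathbf{K}(m-n)$ on $\Z$, whose four blocks are described explicitly by Lemmas \ref{ppsimplifying}, \ref{pmsimplifying}, \ref{lem:zero}.

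\textbf{Step 2 (Reduction to a pointwise problem on the circle).} The discrete Fourier isomorphism $\ell^2(\Z;\C^2)\xrightarrow{\sim} L^2(\mathbb T;\C^2)$ conjugates the translation-invariant operator $\tilK^{\ga,\de}$ with multiplication by the matrix-valued symbol
\[
\widehat{\mathbf{K}}(\zeta)\;=\;\sum_{k\in\Z}\mathbf{K}(k)\,\zeta^{-k},\qquad \zeta\in\mathbb T.
\]
Consequently $(K^{\ga,\de})^2=K^{\ga,\de}=(K^{\ga,\de})^*$ is equivalent to the pointwise matrix identity that $\widehat{\mathbf K}(\zeta)$ is an orthogonal projection in $\C^2$ for almost every $\zeta$. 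For a self-adjoint $2\times 2$ matrix this reduces to the two scalar conditions $\operatorname{tr}\widehat{\mathbf K}(\zeta)\in\{0,1\}$ and $\det\widehat{\mathbf K}(\zeta)=0$.

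\textbf{Step 3 (Computation of the symbol).} Using the closed forms from Lemmas \ref{ppsimplifying} and \ref{pmsimplifying}, the four blocks of $\mathbf{K}(k)$ are sums of geometric ratios of $\ga/\de$ divided by $q^{k/2}-q^{-k/2}$-type denominators. The Fourier series in $\zeta$ can then be summed in closed form as ratios of theta functions by invoking the Jacobi triple product identity (together with the quasi-periodicity \eqref{quasiperiodicity}), producing an explicit expression for $\widehat{\mathbf K}(\zeta)$ whose entries are quotients of $\thq$-functions in $\zeta$ with parameters built from $\ga,\de,\zp,\zm$.

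\textbf{Step 4 (Verification of the projection identity).} Self-adjointness of $\widehat{\mathbf K}(\zeta)$ follows from the symmetry $\tilK^{\ga,\de}(x,y)=\tilK^{\ga,\de}(y,x)$ together with the reality check $\overline{\tilK^{\ga,\de}(x,y)}=\tilK^{\ga,\de}(x,y)$ (trivial for the complementary series; in the principal series $\de=\bar\ga$ it comes from the pairing of conjugate theta factors in \eqref{Cconst}--\eqref{PQdef}). The nontrivial computations are the evaluations of $\operatorname{tr}\widehat{\mathbf K}(\zeta)$ and $\det\widehat{\mathbf K}(\zeta)$, each of which reduces to an identity among products of four theta functions in $\zeta$. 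The main tool for these identities is the three-term Weierstrass relation (Lemma \ref{lem:qelliptic}): it is exactly the algebraic rewriting that collapses the cross terms in $\det\widehat{\mathbf K}(\zeta)$, reflecting the integrable (Its--Izergin--Korepin--Slavnov) structure $K^{\ga,\de}(x,y)=C\,(P(x)Q(y)-Q(x)P(y))/(x-y)$ of the kernel, which forces the symbol to be of rank one.

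\textbf{Main obstacle.} Step 4 is where the real work lies. Computing $\widehat{\mathbf K}(\zeta)$ is a bookkeeping exercise, but proving $\det\widehat{\mathbf K}(\zeta)=0$ and $\operatorname{tr}\widehat{\mathbf K}(\zeta)=1$ amounts to nontrivial elliptic identities; one expects these to follow from iterated applications of Lemma \ref{lem:qelliptic} to products of four theta functions whose arguments are linear combinations of $\log\zeta,\log\ga,\log\de,\log\zp,\log\zm$, but correctly packaging the six resulting three-term relations so that all unwanted terms cancel will require substantial elliptic-function bookkeeping. This is presumably what the authors describe in the introduction as ``a lot of laborious computations with elliptic functions.''
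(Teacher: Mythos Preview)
Your outline matches the paper's proof almost exactly: pass to the gauge-equivalent $q$-translation-invariant kernel, view it as a $2\times2$ matrix-valued convolution operator on $\Z$, Fourier-transform to a multiplication operator by a matrix symbol $\hatK(\eta)$, and verify that $\hatK(\eta)$ is Hermitian with $\det\hatK(\eta)=0$ and $\operatorname{tr}\hatK(\eta)=1$.

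Two small points where the paper's execution differs from what you anticipate. First, in Step~3 the bilateral Fourier sums are not handled by the Jacobi triple product alone; the paper invokes Ramanujan's ${}_1\psi_1$ summation (their Lemma~\ref{technical1}) for the off-diagonal blocks and a derivative of it (Lemma~\ref{technical2}) for the diagonal blocks. Second, the elliptic work is distributed a bit differently than you predict: once the symbol is brought into the clean factored form of Proposition~\ref{prop:fourier}, the identity $\det\hatK(\eta)=0$ is immediate by inspection (each entry is a common prefactor times $\thq(A)\thq(B)$, and the four numerators visibly satisfy $\hatK_{++}\hatK_{--}=\hatK_{+-}\hatK_{-+}$), so the Weierstrass three-term relation is used only for $\operatorname{tr}\hatK(\eta)=1$ and, earlier, to simplify the raw Fourier sums to that factored form. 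For the diagonal entries that simplification is done not by Weierstrass but by a Liouville-type argument (match periods, poles, and residues, then evaluate at one point).
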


In this section, it will be more convenient to use the gauge-transformed kernel $\tilK^{\ga, \de}(x, y) = \epsilon(x)\epsilon(y)K^{\ga, \de}(x, y)$ defined in \eqref{dfgauge} because of the $q$-translation-invariance property $\tilK^{\ga, \de}(qx, qy) = \tilK^{\ga, \de}(x, y)$ of Proposition \ref{Kperiodic}.
Clearly, the corresponding operator $\tilK^{\ga, \de}$ is a projection operator if and only if $K^{\ga, \de}$ is a projection operator.

Let us consider the $2\times 2$ matrix-valued kernel $\K(m, n) = \K^{\ga, \de}(m, n)$ on $\Z$, given by
\begin{equation*}
\K(m, n) := \begin{pmatrix} \tilK^{\ga, \de}(\zp q^m, \zp q^n) & \tilK^{\ga, \de}(\zp q^m, \zm q^n) \\[0.5ex]
\tilK^{\ga, \de}(\zm q^m, \zp q^n) & \tilK^{\ga, \de}(\zm q^m, \zm q^n) \end{pmatrix},\quad m, n\in\Z.
\end{equation*}
The space $\ell^2(\Z; \C^2)$ (Hilbert space of $\C^2$-valued, square-summable sequences) can be naturally identified with $\ell^2(\L)$ --- under this identification, $\tilK^{\ga, \de}$ becomes the operator with kernel $\K(m, n)$.
We go a step further.
Proposition \ref{Kperiodic} implies the translation--invariance property: $\K(m, n) = \K(m+1, n+1)$.
This suggests to look at the \emph{Fourier transform} $\hatK = \hatK^{\ga, \de}$ of the function $\K(m, 0)$, $m\in\Z$.
By definition, $\hatK$ is a $2\pi$-periodic function on $\R$, which is $2\times 2$ matrix-valued and given by
\begin{equation}\label{matrixFourier}
\hatK(\eta) := \begin{pmatrix} \hatK_{+, +}(\eta) & \hatK_{+, -}(\eta) \\[0.5ex]
\hatK_{-, +}(\eta) & \hatK_{-, -}(\eta) \end{pmatrix}, \quad \eta\in\R,
\end{equation}
\begin{equation}\label{defFourier}
\hatK_{\epsilon_1, \epsilon_2}(\eta)
:=\sum_{m\in\Z}{e^{i \eta m} \tilK^{\ga, \de}(\zeta_{\epsilon_1}q^m, \zeta_{\epsilon_2})}, \quad \epsilon_1, \epsilon_2 \in \{+, -\}.
\end{equation}

The important point for us is the following lemma, whose proof essentially follows by definition of the Fourier transform.

\begin{lemma}\label{proj_transform}
A translation--invariant operator $\K$ on $\ell^2(\Z; \C^2)$ is a projection operator if and only if its Fourier transform $\hatK(\eta)$ is a projection matrix, for any $\eta\in\R$.
\end{lemma}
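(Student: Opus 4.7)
The plan is to use the Plancherel isomorphism $\mathcal{F}\colon \ell^2(\Z;\C^2)\to L^2(\mathbb{T};\C^2)$, where $\mathbb{T}=\R/2\pi\Z$ is endowed with normalized Lebesgue measure $d\eta/(2\pi)$ and $\mathcal{F}$ is defined (on finitely supported sequences and extended by continuity) by
\[
(\mathcal{F} f)(\eta) \;:=\; \sum_{m\in\Z} e^{im\eta}\, f(m), \qquad f\in\ell^2(\Z;\C^2).
\]
First, translation invariance of $\K$ is equivalent to $\K(m,n)=\K(m-n,0)$ for all $m,n\in\Z$, so that $\K$ acts as the convolution $(\K f)(m)=\sum_{n\in\Z}\K(m-n,0)f(n)$. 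A direct reindexing shows that $\mathcal{F}\K\mathcal{F}^{-1}$ is the multiplication operator $M_{\hatK}$ on $L^2(\mathbb{T};\C^2)$, where $\hatK(\eta)$ is exactly the $2\times 2$ matrix-valued function defined in \eqref{matrixFourier}--\eqref{defFourier}. Since $\mathcal{F}$ is a unitary isomorphism, $\K$ is a projection operator on $\ell^2(\Z;\C^2)$ if and only if $M_{\hatK}$ is a projection operator on $L^2(\mathbb{T};\C^2)$.

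Second, I would verify the standard fact that a bounded multiplication operator $M_A$ by a matrix-valued function $A(\eta)$ is a projection (i.e.\ $M_A^{*}=M_A=M_A^2$) if and only if $A(\eta)$ is Hermitian and idempotent for almost every $\eta\in\mathbb{T}$. Self-adjointness of $M_A$ is equivalent to $A(\eta)=A(\eta)^{*}$ a.e.\ by testing $\langle M_A f,g\rangle=\langle f,M_A g\rangle$ against constant vector-valued test functions multiplied by characteristic functions of small arcs; idempotency of $M_A$ is equivalent to $A(\eta)^2=A(\eta)$ a.e.\ by the same localization argument applied to $(M_A^2-M_A)f=0$.

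Finally, to upgrade ``a.e.'' to ``for every $\eta\in\R$'', I would observe that the entries of $\K(m,0)$ decay geometrically as $|m|\to\infty$: this follows from the explicit formulas of Lemmas \ref{ppsimplifying} and \ref{pmsimplifying}, where the denominators $q^{(m-n)/2}-q^{(n-m)/2}$ (resp.\ $|\zp/\zm|^{1/2}q^{(m-n)/2}+|\zm/\zp|^{1/2}q^{(n-m)/2}$) grow like $q^{-|m|/2}$ while the numerators are bounded by $|\ga/\de|^{|m|/2}+|\de/\ga|^{|m|/2}$, and admissibility of $(\ga,\de)$ ensures $|q^2\ga/\de|,|q^2\de/\ga|\in(0,1)$. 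Consequently, the Fourier series defining each entry of $\hatK(\eta)$ converges absolutely and uniformly, so $\hatK$ is continuous on $\mathbb{T}$, and the a.e.\ projection condition becomes the pointwise one.

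The argument is entirely standard Fourier-analytic, so there is no real obstacle; the only point requiring a moment of care is confirming that $\K$ defines a bounded operator and that its Fourier series converges in a strong enough sense to identify $\mathcal{F}\K\mathcal{F}^{-1}$ with $M_{\hatK}$, but both are immediate consequences of the geometric decay of the kernel entries noted above.
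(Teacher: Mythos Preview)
Your proposal is correct and is exactly the standard Fourier-analytic argument the paper alludes to; the paper gives no details beyond the remark that the lemma ``essentially follows by definition of the Fourier transform.'' One minor slip: the geometric decay you invoke requires $|q\gamma/\delta|,\,|q\delta/\gamma|<1$ rather than the weaker $|q^2\gamma/\delta|,\,|q^2\delta/\gamma|<1$ you cite, but this stronger inequality also holds for admissible pairs (cf.\ the proof of Lemma~\ref{lem:fourier}).
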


In view of Lemma \ref{proj_transform}, a proof of Theorem \ref{thm:projection} will be furnished by the verification that $\hatK(\eta)$ is a projection matrix, for any $\eta\in\R$.
The latter will be a consequence of the following proposition, whose proof is given in Section \ref{sec:prooffourier}, after some preparations.

\begin{proposition}\label{prop:fourier}
For any $\eta\in\R$, the Fourier transform $\hatK(\eta)$, defined by \eqref{matrixFourier}--\eqref{defFourier}, is given by
\begingroup
\addtolength{\jot}{0.5em}
\begin{align}
\hatK_{+, +}(\eta) &= \frac{q\cdot\thq(\ga\zm, \de\zm)}{\ga\de\zp^2\cdot\thq(\zm/\zp, \ga\de\zm\zp)} \frac{\thq(-e^{i\eta}\zp\sqrt{\ga\de/q}, -e^{-i\eta}\zp\sqrt{\ga\de/q})}{\thq(-e^{i\eta}\sqrt{q\ga\de}/\ga, -e^{i\eta}\sqrt{q\ga\de}/\de)},\label{ppthm}\\
\hatK_{+, -}(\eta) &= -\frac{q\sqrt{\thq(\ga\zm, \de\zm, \ga\zp, \de\zp)}}{\ga\de\zp\sqrt{|\zm\zp|}\cdot\thq(\zm/\zp, \ga\de\zm\zp)} \frac{\thq(-e^{i\eta}\zp\sqrt{\ga\de/q}, -e^{-i\eta}\zm\sqrt{\ga\de/q})}{\thq(-e^{i\eta}\sqrt{q\ga\de}/\ga, -e^{i\eta}\sqrt{q\ga\de}/\de)},\label{pmthm}
\end{align}
\begin{align}
\hatK_{-, +}(\eta) &= -\frac{q\sqrt{\thq(\ga\zm, \de\zm, \ga\zp, \de\zp)}}{\ga\de\zp\sqrt{|\zm\zp|}\cdot\thq(\zm/\zp, \ga\de\zm\zp)} \frac{\thq(-e^{i\eta}\zm\sqrt{\ga\de/q}, -e^{-i\eta}\zp\sqrt{\ga\de/q})}{\thq(-e^{i\eta}\sqrt{q\ga\de}/\ga, -e^{i\eta}\sqrt{q\ga\de}/\de)},\label{mpthm}\\
\hatK_{-, -}(\eta) &= \frac{q\cdot\thq(\ga\zp, \de\zp)}{\ga\de|\zm\zp|\cdot\thq(\zm/\zp, \ga\de\zm\zp)} \frac{\thq(-e^{i\eta}\zm\sqrt{\ga\de/q}, -e^{-i\eta}\zm\sqrt{\ga\de/q})}{\thq(-e^{i\eta}\sqrt{q\ga\de}/\ga, -e^{i\eta}\sqrt{q\ga\de}/\de)}.\label{mmthm}
\end{align}
\endgroup
\end{proposition}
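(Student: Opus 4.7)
The plan is to compute each Fourier coefficient $\hatK_{\epsilon_1,\epsilon_2}(\eta)$ directly from the closed forms of Lemmas~\ref{ppsimplifying}, \ref{pmsimplifying}, and \ref{lem:zero}. Setting $u=e^{i\eta}$ and introducing the shorthand $\alpha:=-\sqrt{q\ga/\de}$, $\beta:=-\sqrt{q\de/\ga}$ (so $\alpha\beta=q$ and $\thq(-e^{i\eta}\sqrt{q\ga\de}/\ga)\thq(-e^{i\eta}\sqrt{q\ga\de}/\de)=\thq(\beta u)\thq(\alpha u)$), a short computation that accounts for the gauge factor $\epsilon(\cdot)$ rewrites the kernel entries in the form
$$
\tilK^{\ga,\de}(\zp q^m,\zp)=\frac{C(\alpha^m-\beta^m)}{q^m-1}\quad(m\neq 0),
$$
$$
\tilK^{\ga,\de}(\zp q^m,\zm)=\frac{C\sqrt{|\zp/\zm|}}{\sqrt{\thq(\ga\zm,\de\zm,\ga\zp,\de\zp)}}\cdot\frac{\alpha^m\thq(\zm\ga,\zp\de)-\beta^m\thq(\zm\de,\zp\ga)}{1-(\zp/\zm)q^m},
$$
and analogously for the two remaining entries. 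In every case the Fourier transform reduces to summing series of the type $\sum_{m\in\Z}z^m/(1-cq^m)$.

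The main analytic input I would use is the Kronecker--Ramanujan identity
$$
\sum_{m\in\Z}\frac{z^m}{1-cq^m}=\frac{(q;q)_\infty^{2}\,\thq(cz)}{\thq(c)\thq(z)},\qquad q<|z|<1,\ c\notin q^{\Z},
$$
which can be derived from Ramanujan's ${}_1\psi_1$ summation or by shifting a contour. For the off-diagonal entries $\hatK_{+,-}$ and $\hatK_{-,+}$, applying this identity with $c=\zp/\zm$ and $z=\alpha u$ or $\beta u$ produces a difference of two theta ratios sharing the common denominator $\thq(\alpha u)\thq(\beta u)$, which I would then collapse into the single theta ratio in \eqref{pmthm} by one invocation of the Weierstrass three-term relation (Lemma~\ref{lem:qelliptic}) with parameters $X,Y,Z,W$ chosen so that $\{qYZ,\,Z/Y,\,qXW,\,W/X\}$ matches the four theta arguments $\{\zm\ga,\zp\de,\beta u,(\zp/\zm)u\alpha\}$ of one of the summands. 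For the diagonal entries $\hatK_{+,+}$ and $\hatK_{-,-}$, the $m=0$ term of the geometric form is indeterminate ($0/0$), so I would instead use the confluent version
$$
\sum_{m\neq 0}\frac{w^m}{q^m-1}=\frac{w\,\thq'(w)}{\thq(w)},\qquad q<|w|<1
$$
(obtained from the previous identity by letting $c\to1$ and subtracting the simple pole), together with the genuine $m=0$ value $\tilK^{\ga,\de}(\zeta_\pm,\zeta_\pm)$ from Lemma~\ref{lem:zero}. This yields a linear combination of four logarithmic derivatives $z\,\thq'(z)/\thq(z)$ at the arguments $\alpha u,\beta u,\ga\zp,\de\zp$ (respectively $\ga\zm,\de\zm$).

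The hard part will be identifying this logarithmic-derivative combination with the theta-function ratio on the right-hand side of \eqref{ppthm} (resp.\ \eqref{mmthm}). I would attack it by an elliptic-function argument: both sides, viewed as meromorphic functions of $u\in\C^*$, are $q$-quasi-periodic with the same multiplier under $u\mapsto qu$ (which follows from \eqref{quasiperiodicity} and $\alpha\beta=q$), and their only potential simple poles in a fundamental annulus lie at $u=\alpha^{-1}$ and $u=\beta^{-1}$; a direct residue comparison then shows that the difference is holomorphic and quasi-periodic with trivial multiplier, hence constant, and this constant is pinned down by evaluating at a convenient point such as a zero of a numerator theta factor. The main obstacle throughout is bookkeeping: the argument juggles the signs $(-1)^m$ and $\epsilon(\cdot)$, the square roots $\sqrt{\ga/\de}$ and $\sqrt{|\zp/\zm|}$, and several different quasi-periodicity prefactors, and Lemma~\ref{lem:qelliptic} must be applied with very carefully chosen parameters (and occasionally combined with the reflection $\thq(z)=\thq(q/z)$) to bring all four entries \eqref{ppthm}--\eqref{mmthm} to the symmetric form stated in the proposition.
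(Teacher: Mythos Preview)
Your proposal is correct and follows essentially the same route as the paper: the paper first uses the summation identities in Lemmas~\ref{technical1} and~\ref{technical2} (your Kronecker--Ramanujan identity and its confluent version) to obtain the intermediate expressions recorded in Lemma~\ref{lem:fourier}, then identifies the off-diagonal entries with \eqref{pmthm}--\eqref{mpthm} via the Weierstrass three-term relation (Lemma~\ref{lem:qelliptic}) and the diagonal entries with \eqref{ppthm}, \eqref{mmthm} via the elliptic-function argument (periodicity in $z=e^{i\eta}$, residue matching at the simple poles, and evaluation at a special point). One small clarification: the paper checks that both sides are genuinely $q$-\emph{periodic} in $z$ (multiplier equal to $1$), which is what makes the Liouville step go through; your phrasing ``quasi-periodic with the same multiplier'' followed by ``difference has trivial multiplier'' only works because that common multiplier is in fact $1$, so you should verify this explicitly.
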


\begin{remark}
For an admissible pair $(\ga, \de)$, we have that $\ga\de$, $\thq(\ga\zm, \de\zm)$ and $\thq(\ga\zp, \de\zp)$ are all positive.
For the formulas above, $\sqrt{\thq(\ga\zm, \de\zm, \ga\zp, \de\zp)}$ and $\sqrt{\ga\de}$ are the positive square roots.
\end{remark}

\begin{proof}[Proof of Theorem \ref{thm:projection}]
We show that $\hatK(\eta)$ is a rank $1$ projection matrix.
For that, we prove three statements: (1) $\hatK(\eta)$ is Hermitian, (2) $\det \hatK(\eta) = 0$, and (3) $\text{tr}\ \hatK(\eta) = 1$.

The first statement is equivalent to the equalities
\begin{equation}\label{Khermitian}
\overline{\hatK_{+, +}(\eta)} \stackrel{?}{=} \hatK_{+, +}(\eta), \quad \overline{\hatK_{-, -}(\eta)} \stackrel{?}{=} \hatK_{-, -}(\eta), \quad \overline{\hatK_{+, -}(\eta)} \stackrel{?}{=} \hatK_{-, +}(\eta).
\end{equation}

If $(\ga, \de)$ belongs to the principal series, then $(\overline{\ga}, \overline{\de}) = (\de, \ga)$, whereas if $(\ga, \de)$ belongs to the complementary series, then $(\overline{\ga}, \overline{\de}) = (\ga, \de)$.
Together with the obvious $\overline{\thq(x)} = \thq(\overline{x})$, one can easily verify all three identities in \eqref{Khermitian} for pairs in both the principal and complementary series.

The second statement follows from \eqref{ppthm}--\eqref{mmthm} in a straightforward manner.

For the third statement, we need to show
\begin{multline*}
\zp\thq(\ga\zp, \de\zp, -e^{i\eta}\zm\sqrt{\ga\de/q}, -e^{-i\eta}\zm\sqrt{\ga\de/q})\\
- \zm\thq(\ga\zm, \de\zm, -e^{i\eta}\zp\sqrt{\ga\de/q}, -e^{-i\eta}\zp\sqrt{\ga\de/q})\\
\stackrel{?}{=} -\frac{\ga\de\zm\zp^2}{q} \cdot \thq(\zm/\zp, \ga\de\zm\zp, -e^{i\eta}\sqrt{q\ga\de}/\ga, -e^{i\eta}\sqrt{q\de\ga}/\de).
\end{multline*}
By the quasi-periodicity $(-\ga\de\zm\zp/q)\cdot\thq(\ga\de\zm\zp) = \thq(\ga\de\zm\zp/q)$, the equality above becomes
\begin{multline}\label{sumthetas_special}
\thq(\zm/\zp, \ga\de\zm\zp/q, -e^{i\eta}\sqrt{q\ga\de}/\ga, -e^{i\eta}\sqrt{q\ga\de}/\de)\\
- \thq(\ga\zp, \de\zp, -e^{i\eta}\zm\sqrt{\ga\de/q}, -e^{-i\eta}\zm\sqrt{\ga\de/q})\\
\stackrel{?}{=} -\frac{\zm}{\zp}\cdot\thq(\ga\zm, \de\zm, -e^{i\eta}\zp\sqrt{\ga\de/q}, -e^{-i\eta}\zp\sqrt{\ga\de/q}).
\end{multline}
This is a special case of Lemma \ref{lem:qelliptic}: upon setting $X = e^{i\eta}/\sqrt{q}$, $Y = -\zp\sqrt{\ga\de}/q$, $Z = -\zm\sqrt{\ga\de}/q$, and $W = -\sqrt{\ga\de}/\ga$, equation \eqref{sumthetas} gives \eqref{sumthetas_special}.
\end{proof}

\subsection{Factorization of two-sided sums}

The proof of Proposition \ref{prop:fourier} relies on two summation identities which we now present.
In this section, $p$ is a real number in $(0, 1)$.

\begin{lemma}\label{technical1}
Let $a\in\C$ be such that $p < |a| < p^{-1}$.
Then
\begin{equation}\label{Fpafactored}
\sum_{m=-\infty}^{+\infty}{\frac{a^m}{zp^m + z^{-1}p^{-m}}} = -z \cdot\frac{\theta_{p^2}(-apz^2)\theta_{p^2}'(1)}{\theta_{p^2}(-z^2)\theta_{p^2}(ap)}, \quad z\in\C^*.
\end{equation}
\end{lemma}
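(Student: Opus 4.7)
My plan is to interpret both sides of \eqref{Fpafactored} as meromorphic functions of $z$ on $\C\setminus\{0\}$ (with $a,p$ fixed) and match them via pole structure together with a quasi-periodicity uniqueness argument. The preliminary step is to clear the denominator by multiplying numerator and denominator of each summand by $zp^m$, rewriting the LHS as
\[
F(z):=\sum_{m\in\Z}\frac{z(ap)^m}{1+z^2p^{2m}}.
\]
Since the $m$-th term is $O((ap)^m)$ as $m\to+\infty$ and $O((a/p)^m)$ as $m\to-\infty$, the hypothesis $p<|a|<p^{-1}$ gives absolute and locally uniform convergence on $\C\setminus\{0\}$ away from the singular set $\{\pm ip^{-m}:m\in\Z\}$ where $1+z^2p^{2m}$ vanishes, so $F$ is meromorphic with simple poles precisely there. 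Denoting the RHS by $G(z)$, its poles are the zeros of $\theta_{p^2}(-z^2)$, which occur at exactly the same locations.

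Next I would verify that both $F$ and $G$ satisfy the quasi-periodicity $f(pz)=a^{-1}f(z)$. For $F$ this is immediate by the reindexing $m\mapsto m+1$, and for $G$ it follows by applying the basic quasi-periodicity $\theta_{p^2}(p^{2}w)=-w^{-1}\theta_{p^2}(w)$ to each of the theta factors $\theta_{p^2}(-apz^2)$ and $\theta_{p^2}(-z^2)$ appearing in $G(pz)$.

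The core step is to match residues at $z=ip^{-m}$; the case $z=-ip^{-m}$ follows by the oddness of both $F$ and $G$ in $z$. For $F$, only the $m$-th summand contributes, and the direct computation yields residue $a^m/(2p^m)$. For $G$, one uses the iterated quasi-periodicity $\theta_{p^2}(p^{2n}w)=(-1)^n p^{-n(n-1)}w^{-n}\theta_{p^2}(w)$ from \eqref{quasiperiodicity}, once with $n=-m$ and $w=ap$ to rewrite $\theta_{p^2}(-apz^2)$ at the pole, and in differentiated form at $w=1$ (using $\theta_{p^2}(1)=0$) to produce $\theta_{p^2}'(p^{-2m})=(-1)^m p^{m(1-m)}\theta_{p^2}'(1)$. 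The signs $(-1)^m$, the factor $\theta_{p^2}'(1)$, and $\theta_{p^2}(ap)$ all cancel, and one again recovers $a^m/(2p^m)$.

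It then remains to conclude $F\equiv G$. Setting $h:=F-G$, the function $h$ is holomorphic on $\C\setminus\{0\}$ and satisfies $h(pz)=a^{-1}h(z)$. Writing the Laurent series $h(z)=\sum_{n\in\Z}c_n z^n$, each coefficient satisfies $c_n(p^n-a^{-1})=0$. Since $p^n>0$, the equation $p^n=a^{-1}$ has no integer solution when $a\notin p^{\Z}$, forcing $h\equiv 0$. In the range $p<|a|<p^{-1}$ the only value in $p^{\Z}$ is $a=1$, which is handled by continuity of both sides in $a$. The main obstacle is the residue computation for $G$ --- the bookkeeping of theta-function signs, powers of $p$, and the logarithmic derivative $\theta_{p^2}'(p^{-2m})$ through the quasi-periodicity --- after which the uniqueness step is short.
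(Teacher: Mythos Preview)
Your proof is correct. The residue computations check out (in particular, both $F$ and $G$ are odd, and odd functions have equal --- not opposite --- residues at $\pm z_0$, so your ``follows by oddness'' step is fine even if the sign remark could be phrased more carefully), the quasi-periodicity $f(pz)=a^{-1}f(z)$ holds for both sides as you claim, and the Laurent-coefficient argument kills $h=F-G$ for $a\ne 1$ with the continuity patch at $a=1$.

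The paper, however, does not prove this lemma at all: it simply cites Ramanujan's ${}_1\psi_1$ summation via \cite[Rem.~2.4]{B_2007} and an alternative argument in \cite[Sec.~4]{BB}. Your route is therefore genuinely different --- a self-contained elliptic-function argument (match poles, residues, and quasi-periodicity, then invoke uniqueness) rather than a specialization of a known bilateral basic hypergeometric identity. The citation approach is shorter and situates the identity in the classical $q$-series literature; your approach avoids any external input beyond the quasi-periodicity \eqref{quasiperiodicity} and is in the same spirit as the method the paper itself uses later in Section~\ref{sec:prooffourier} to verify theta-function identities.
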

\begin{proof}
The equality can be deduced from Ramanujan's $_1\psi_1$--identity, as explained in \cite[Rem. 2.4]{B_2007}.
Yet another proof is given in \cite[Sec. 4]{BB}.
\end{proof}

\begin{lemma}\label{technical2}
\begin{equation}\label{Hp}
\sum_{m \in \Z \setminus \{0\}}{\frac{z^m}{p^{-m} - p^{m}}} = -pz \cdot\frac{\theta_{p^2}'(pz)}{\theta_{p^2}(pz)}, \quad p < |z| < p^{-1}.
\end{equation}
\end{lemma}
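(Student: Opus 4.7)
The plan is to obtain the identity by direct computation: take the logarithmic derivative of $\theta_{p^2}(pz)$ from its product form, expand the resulting rational expressions as geometric series in $z$ and $z^{-1}$, and recognize the resummed series as the target two-sided sum. An alternative route would be to specialize $z = is$ in Lemma \ref{technical1}, subtract the divergent $m=0$ contribution $1/(s-s^{-1})$, and pass to the limit $s \to 1$ using the Taylor expansion $\theta_{p^2}(s^2) = \theta_{p^2}'(1)(s^2-1) + O((s^2-1)^2)$; I prefer the direct approach since it avoids that limit entirely.

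The first step uses the product formula $\theta_{p^2}(u) = (u;p^2)_\infty\,(p^2/u;p^2)_\infty$: a term-by-term logarithmic differentiation, evaluation at $u = pz$, and multiplication by $-pz$ will produce
\[
-pz\cdot\frac{\theta_{p^2}'(pz)}{\theta_{p^2}(pz)} = \sum_{j \geq 0}\frac{zp^{2j+1}}{1 - zp^{2j+1}} - \sum_{j \geq 0} \frac{p^{2j+1}/z}{1 - p^{2j+1}/z}.
\]
The second step exploits the hypothesis $p < |z| < p^{-1}$, which guarantees $|zp^{2j+1}| < 1$ and $|p^{2j+1}/z| < 1$ for every $j \geq 0$. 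Expanding each summand via $c/(1-c) = \sum_{k \geq 1} c^k$, swapping the two summations (justified by absolute convergence on any compact subannulus), and summing the inner geometric series in $p$ to obtain the factor $p^k/(1-p^{2k}) = 1/(p^{-k}-p^k)$ yields
\[
-pz\cdot\frac{\theta_{p^2}'(pz)}{\theta_{p^2}(pz)} = \sum_{k \geq 1}\frac{z^k - z^{-k}}{p^{-k} - p^k}.
\]
Finally, re-indexing the $z^{-k}$ piece by $m = -k$ (which flips the sign of the denominator since $p^{-(-m)} - p^{-m} = -(p^{-m}-p^m)$) collapses both halves into the single two-sided sum $\sum_{m \in \Z \setminus \{0\}} z^m/(p^{-m} - p^m)$, which is the stated identity.

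No conceptual obstacle is anticipated; the identity is essentially the partial-fraction/Mittag-Leffler expansion of the logarithmic derivative of a theta function. The only point requiring care is the justification of the interchange of summations and of term-by-term logarithmic differentiation of the infinite product, both of which follow from uniform absolute convergence on compact subannuli of $\{p < |z| < p^{-1}\}$.
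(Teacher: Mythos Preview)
Your proof is correct and more elementary than the paper's. The paper derives this lemma from Lemma~\ref{technical1} (itself resting on Ramanujan's $_1\psi_1$ summation) by writing $H_p(z) = -i\bigl(F_{p,z}(y) - \tfrac{1}{y+y^{-1}}\bigr)\big|_{y=i}$ and then carrying out a somewhat delicate limit $y\to 1$ in which the singularity of $F_{p,z}$ at $y=1$ cancels against the pole of $\tfrac{1}{1-y^2}$, extracting the derivative $\theta_{p^2}'(pz)$ in the process. Your route---logarithmic differentiation of the product $\theta_{p^2}(pz)=\prod_{j\ge0}(1-zp^{2j+1})(1-p^{2j+1}/z)$, followed by geometric expansion and the Lambert-type resummation $\sum_{j\ge0}p^{(2j+1)k}=1/(p^{-k}-p^k)$---is self-contained, avoids both Lemma~\ref{technical1} and the limit computation, and is essentially the classical partial-fraction expansion of the logarithmic derivative of a theta function. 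The paper's approach has the virtue of exhibiting Lemma~\ref{technical2} as a degenerate case of Lemma~\ref{technical1}; yours has the virtue of being shorter and independent of any prior summation identity. Amusingly, the alternative you sketch and set aside is precisely what the paper does.
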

\begin{proof}
When $m \gg 0$, the $m$-th term in the sum is $\sim (zp)^m$;
when $m \ll 0$, the $m$-th term is $\sim -(z^{-1}p)^{-m}$.
Thus the sum is absolutely convergent and defines an analytic function on the domain $\{ z\in\C : |zp|, |z^{-1}p| < 1 \} = \{ z\in\C : p < |z| < p^{-1} \}$.

Let $H_p(z)$ and $F_{p, a}(z)$ denote the left-hand sides of \eqref{Hp} and \eqref{Fpafactored}, respectively.
Notice that
$$H_p(z) = -i \left.\left\{ F_{p, z}(y) - \frac{1}{y + y^{-1}} \right\}\right|_{y = i} = -i \left.\left\{ F_{p, z}(i y) - i \frac{y}{1 - y^2} \right\}\right|_{y = 1}.$$
From Lemma \ref{technical1}, we have
\begin{equation}\label{simplifyHp1}
H_p(z) = \left. -y \left\{  \frac{\theta_{p^2}(pzy^2)\theta_{p^2}'(1)}{\theta_{p^2}(y^2)\theta_{p^2}(pz)} + \frac{1}{1 - y^2}  \right\} \right|_{y = 1}
= \left. \left\{  \frac{\theta_{p^2}(pzy)(p^2; p^2)_{\infty}^2}{\theta_{p^2}(y)\theta_{p^2}(pz)} - \frac{1}{1 - y}  \right\} \right|_{y = 1}.
\end{equation}

Note that $f(y) := \theta_{p^2}(y)/(1 - y) = (p^2y, p^2/y; p^2)_{\infty}$ is analytic on $\C^*$.
Also let $g(y) := \theta_{p^2}(pzy)$, so \eqref{simplifyHp1} becomes
\begin{equation}\label{simplifyHp2}
H_p(z) = \left. \frac{1}{y - 1}\left\{  - \frac{g(y)(p^2; p^2)_{\infty}^2}{f(y)\theta_{p^2}(pz)} + 1 \right\} \right|_{y = 1}
= \frac{(f'(1)g(1) - g'(1)f(1))(p^2; p^2)_{\infty}^2}{f(1)^2\theta_{p^2}(pz)}.
\end{equation}
Take derivatives to $f(y) = f(1/y)$ to obtain $f'(y) = -y^{-2}f'(1/y)$, in particular $f'(1) = 0$.
Furthermore, $f(1) = (p^2; p^2)_{\infty}^2$ and $g'(1) = pz\theta_{p^2}'(pz)$.
Plugging these values into \eqref{simplifyHp2} yields \eqref{Hp}.
\end{proof}

\subsection{Fourier transform of the elliptic tail kernel: proof of Proposition \ref{prop:fourier}}\label{sec:prooffourier}

Recall the constant $C = C(\ga, \de)$ defined in \eqref{Cconst}.

\begin{lemma}\label{lem:fourier}
For any $\eta\in\R$, the Fourier transform $\hatK(\eta)$ is given by
\begin{align}
\hatK_{+, +}(\eta) &= C \left\{ \de\zp\frac{\thq'(\de\zp)}{\thq(\de\zp)} - \ga\zp\frac{\thq'(\ga\zp)}{\thq(\ga\zp)}\right.\label{hatpp}\\
&\left.\ +\ e^{i\eta}\frac{\sqrt{q\ga\de}}{\ga}\cdot\frac{\thq'(-e^{i\eta}\sqrt{q\ga\de}/\ga)}{\thq(-e^{i\eta}\sqrt{q\ga\de}/\ga)} - e^{i\eta}\frac{\sqrt{q\ga\de}}{\de}\cdot\frac{\thq'(-e^{i\eta}\sqrt{q\ga\de}/\de)}{\thq(-e^{i\eta}\sqrt{q\ga\de}/\de)}\right\},\nonumber\\
\hatK_{+, -}(\eta) &= \frac{C\sqrt{|\zp/\zm|}}{\sqrt{\thq(\ga\zp, \de\zp, \ga\zm, \de\zm)}} \frac{\thq'(1)}{\thq(\zp/\zm)}\label{hatpm}\\
&\times\left\{ \frac{\thq(\ga\zp, \de\zm, e^{i\eta}|\zp/\zm|\sqrt{q\ga\de}/\ga)}{\thq(-e^{i\eta}\sqrt{q\ga\de}/\ga)}
- \frac{\thq(\de\zp, \ga\zm, e^{i\eta}|\zp/\zm|\sqrt{q\ga\de}/\de)}{\thq(-e^{i\eta}\sqrt{q\ga\de}/\de)} \right\},\nonumber\\
\hatK_{-, +}(\eta) &= \frac{C\sqrt{|\zm/\zp|}}{\sqrt{\thq(\ga\zp, \de\zp, \ga\zm, \de\zm)}} \frac{\thq'(1)}{\thq(\zm/\zp)}\label{hatmp}\\
&\times\left\{ \frac{\thq(\ga\zp, \de\zm, e^{i\eta}|\zm/\zp|\sqrt{q\ga\de}/\de)}{\thq(-e^{i\eta}\sqrt{q\ga\de}/\de)} -
\frac{\thq(\de\zp, \ga\zm, e^{i\eta}|\zm/\zp|\sqrt{q\ga\de}/\ga)}{\thq(-e^{i\eta}\sqrt{q\ga\de}/\ga)} \right\},\nonumber\\
\hatK_{-, -}(\eta) &= C \left\{ \ga\zm\frac{\thq'(\ga\zm)}{\thq(\ga\zm)} - \de\zm\frac{\thq'(\de\zm)}{\thq(\de\zm)} \right.\label{hatmm}\\
&\left.\ +\ e^{i\eta}\frac{\sqrt{q\ga\de}}{\de}\cdot\frac{\thq'(-e^{i\eta}\sqrt{q\ga\de}/\de)}{\thq(-e^{i\eta}\sqrt{q\ga\de}/\de)}- e^{i\eta}\frac{\sqrt{q\ga\de}}{\ga}\cdot\frac{\thq'(-e^{i\eta}\sqrt{q\ga\de}/\ga)}{\thq(-e^{i\eta}\sqrt{q\de/\ga})} \right\}.\nonumber
\end{align}
\end{lemma}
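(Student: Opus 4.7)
The plan is a direct computation: substitute the closed-form expressions from Lemmas \ref{ppsimplifying}, \ref{pmsimplifying}, and \ref{lem:zero} into the definition \eqref{defFourier} of $\hatK_{\epsilon_1,\epsilon_2}(\eta)$, account for the gauge factors ($\epsilon(\z_+ q^m)=1$ and $\epsilon(\z_- q^m)=(-1)^m$), and recognize the resulting two-sided $m$-series as instances of the closed-form identities Lemmas \ref{technical1} and \ref{technical2}. First I would check absolute convergence: admissibility of $(\ga,\de)$ forces either $|\ga/\de|=1$ (principal series) or $\ga/\de\in(q,q^{-1})$ (complementary series), so every parameter that appears below lies in the strip $\sqrt q<|\cdot|<1/\sqrt q$ required by those lemmas.

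For the off-diagonal coefficients $\hatK_{+,-}$ and $\hatK_{-,+}$, I would write the denominator in Lemma \ref{pmsimplifying} as $z p^m+z^{-1}p^{-m}$ with $p=\sqrt q$ and $z=|\z_+/\z_-|^{1/2}$ (respectively $|\z_-/\z_+|^{1/2}$). The two terms in the numerator, multiplied by $e^{i\eta m}$ and by the appropriate gauge sign, then produce series of the form $\sum_m a^m/(zp^m+z^{-1}p^{-m})$ with $a=-e^{i\eta}\ga/\sqrt{\ga\de}$ or $a=-e^{i\eta}\de/\sqrt{\ga\de}$. Applying Lemma \ref{technical1} and then the identifications $\sqrt q\,\ga/\sqrt{\ga\de}=\sqrt{q\ga\de}/\de$ and $-z^2=\z_+/\z_-$ (respectively $\z_-/\z_+$), the output $\thq(-apz^2)$ turns into $\thq(e^{i\eta}|\z_+/\z_-|\sqrt{q\ga\de}/\de)$ and its analogue, matching precisely the numerators in \eqref{hatpm} and \eqref{hatmp}.

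For the diagonal coefficients $\hatK_{+,+}$ and $\hatK_{-,-}$, I would isolate the $m=0$ term separately: Lemma \ref{lem:zero} directly produces the logarithmic-derivative terms at $x=\z_\pm$ on the first line of \eqref{hatpp} and \eqref{hatmm}. The remaining $m\ne 0$ piece, obtained from Lemma \ref{ppsimplifying}, decomposes into two sums of the shape $\sum_{m\ne 0}z^m/(p^{-m}-p^m)$ with $z=-e^{i\eta}\ga/\sqrt{\ga\de}$ and $z=-e^{i\eta}\de/\sqrt{\ga\de}$, after using $q^{m/2}-q^{-m/2}=-(p^{-m}-p^m)$ to match signs. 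Lemma \ref{technical2} evaluates each to $-pz\,\thq'(pz)/\thq(pz)$; substituting $pz=-e^{i\eta}\sqrt{q\ga\de}/\de$ or $-e^{i\eta}\sqrt{q\ga\de}/\ga$ then gives the logarithmic derivatives on the second line of \eqref{hatpp} and \eqref{hatmm}.

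The main obstacle is not analytic but combinatorial: three independent sources of signs must be combined simultaneously --- the explicit $(-1)^{m+n}$ in the $\z_+$ case of Lemma \ref{ppsimplifying} versus its absence in the $\z_-$ case, the gauge sign $(-1)^m$ whenever a $\z_-$-argument is involved, and the sign in $-z^2=\z_\pm/\z_\mp<0$ when specializing Lemma \ref{technical1}. A careful table of these signs, together with the quasi-periodicity \eqref{quasiperiodicity} needed to identify intermediate theta-values, is what collapses the bookkeeping into the compact ratios of theta functions on the right-hand sides of the four formulas \eqref{hatpp}--\eqref{hatmm}.
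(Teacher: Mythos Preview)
Your proposal is correct and follows essentially the same route as the paper's own proof: substitute the explicit kernel values from Lemmas \ref{ppsimplifying}, \ref{pmsimplifying}, \ref{lem:zero} into the Fourier sum \eqref{defFourier}, verify via admissibility that the parameters lie in the convergence strip $\sqrt q<|\cdot|<1/\sqrt q$, and then evaluate the off-diagonal sums with Lemma \ref{technical1} and the diagonal sums (after isolating the $m=0$ term) with Lemma \ref{technical2}, both specialized to $p=\sqrt q$. The only point worth double-checking in your write-up is the pairing of the two choices of $z$ with the $\ga$- and $\de$-terms (note $\ga/\sqrt{\ga\de}=\sqrt{\ga\de}/\de$), but this is pure bookkeeping and does not affect the argument.
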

\begin{proof}
For $x \in \Z \setminus \{0\}$, Lemma \ref{ppsimplifying} gives
\begin{align}
\K_{+, +}(x, 0) &= C (-1)^x \cdot \frac{(\sqrt{\ga\de}/\ga)^x - (\sqrt{\ga\de}/\de)^x}{q^{-x/2} - q^{x/2}},\label{pp}\\
\K_{-, -}(x, 0) &= C (-1)^x\cdot \frac{(\sqrt{\ga\de}/\de)^x - (\sqrt{\ga\de}/\ga)^x}{q^{-x/2} - q^{x/2}}.\label{mm}
\end{align}
From Lemma \ref{lem:zero} and \eqref{pp},
$$\hatK_{+, +}(\eta) = C\zp \left\{ \de\frac{\thq'(\de\zp)}{\thq(\de\zp)} - \ga\frac{\thq'(\ga\zp)}{\thq(\ga\zp)} \right\} + C \cdot \sum_{x\in\Z\setminus\{0\}}{ e^{i\eta x} \cdot \frac{(\sqrt{\ga\de}/\ga)^x - (\sqrt{\ga\de}/\de)^x}{q^{-x/2} - q^{x/2}} }.$$
Then \eqref{hatpp} follows from Lemma \ref{technical2} with $p = \sqrt{q}$.
Note that we need the following inequalities to apply Lemma \ref{technical2}:
$$q^{1/2} < \left|\frac{\sqrt{\ga\de}}{\ga}\right|,\ \left|\frac{\sqrt{\ga\de}}{\de}\right| < q^{-1/2}.$$
They are equivalent to $|q\ga/\de|, |q\de/\ga| < 1$, and these follow from the fact that $(\ga, \de)$ is an admissible pair.
Similarly, by using \eqref{mm}, we obtain \eqref{hatmm}.

On the other hand, for any $x\in\Z$, Lemma \ref{pmsimplifying} shows
\begin{align}
\displaystyle\K_{+, -}(x, 0) &= \frac{C (-1)^x}{\sqrt{\thq(\ga\zm, \ga\zp, \de\zm, \de\zp)}} \cdot \frac{\thq(\de\zp, \ga\zm) (\sqrt{\ga\de}/\de)^{x} - \thq(\de\zm, \ga\zp) (\sqrt{\ga\de}/\ga)^{x}}{|\zp/\zm|^{1/2} q^{x/2} + |\zm/\zp|^{1/2} q^{-x/2}},\label{pm}\\
\K_{-, +}(x, 0) &= \frac{C (-1)^x}{\sqrt{\thq(\ga\zm, \ga\zp, \de\zm, \de\zp)}} \cdot \frac{ \thq(\de\zp, \ga\zm) (\sqrt{\ga\de}/\ga)^x - \thq(\de\zm, \ga\zp) (\sqrt{\ga\de}/\de)^x }{|\zm/\zp|^{1/2} q^{x/2} + |\zp/\zm|^{1/2} q^{-x/2}}.\label{mp}
\end{align}
Then \eqref{hatpm} follows from \eqref{pm} and Lemma \ref{technical1} applied to $p = \sqrt{q}$ (the restrictions of Lemma \ref{technical1} are satisfied because $(\ga, \de)$ is an admissible pair).
Similarly, \eqref{mp} gives \eqref{hatmp}.
\end{proof}

A proof of Proposition \ref{prop:fourier} will be furnished by the verification that the formulas in \eqref{hatpp}--\eqref{hatmm} are equal to the formulas in \eqref{ppthm}--\eqref{mmthm}.

We'll need the equality
\begin{equation*}
\thq'(1) = \lim_{x \to 1}{(\thq(x) - \thq(1))/(x - 1)} = \lim_{x \to 1}{\thq(x)/(x - 1)} = \lim_{x \to 1}{-(qx, q/x; q)_{\infty}}
= -(q; q)_{\infty}^2.
\end{equation*}

Together with \eqref{Cconst}, it follows that \eqref{hatpm} $\stackrel{?}{=}$ \eqref{pmthm} is equivalent to
\begin{multline*}
\thq(\ga\zp, \de\zm, e^{i\eta}|\zp/\zm|\sqrt{q\ga\de}/\ga, -e^{i\eta}\sqrt{q\ga\de}/\de)
- \thq(\ga\zm, \de\zp, e^{i\eta}|\zp/\zm|\sqrt{q\ga\de}/\de, -e^{i\eta}\sqrt{q\ga\de}/\ga)\\
\stackrel{?}{=} \frac{q}{\de|\zm|}\cdot\thq(q\zp/\zm, q\ga/\de, -e^{i\eta}\zp\sqrt{\ga\de/q}, -e^{-i\eta}\zm\sqrt{\ga\de/q}).
\end{multline*}
This identity is a particular case of Lemma \ref{lem:qelliptic} when we specialize the variables as follows:
$$X = i e^{i\eta}\sqrt{|\zp/q\zm|}, \ Y = i \sqrt{\ga\de |\zm\zp|}/q, \ Z = -i \sqrt{|\zp/\zm|} \sqrt{\ga\de}/\de, \ W = -i\sqrt{|\zp/\zm|} \sqrt{\ga\de}/\ga.$$
One similarly shows \eqref{hatmp} = \eqref{mpthm}.

It remains to prove \eqref{hatpp} $\stackrel{?}{=}$ \eqref{ppthm} and \eqref{hatmm} $\stackrel{?}{=}$ \eqref{mmthm}.
Both proofs are similar to many proofs in the literature on identities between elliptic functions (see e.g. \cite{R}, \cite[Sec. 15]{Ba} and references therein), so let us only give a proof sketch of the former equality in the remainder of this section.

In both sides of the identity to prove, replace $e^{i\eta}, \ga, \de$, and $\sqrt{\ga\de}$ by $z, c^2, d^2$ and $cd$, respectively.
Denote the formula coming from \eqref{hatpp} by $f(z, c, d)$ and the one coming from \eqref{ppthm} by $g(z, c, d)$.
The advantage is that both $f$ and $g$ are now meromorphic functions on $(z, c, d)\in(\C^*)^3$.
We shall actually prove $f(z, c, d) = g(z, c, d)$, for all values $(z, c, d)\in (\C^*)^3$ for which both sides are defined and not only in the case that $(c^2, d^2)$ is an admissible pair.

From the quasi-periodicity of the theta function, one verifies that both $f$ and $g$, as functions of $z$, are (multiplicatively) periodic with period $q$, i.e., $f(qz, c, d) = f(z, c, d)$ and $g(qz, c, d) = g(z, c, d)$.
One can also check that both $f$ and $g$ have only simple poles at the points of the form $-\frac{c}{d}\cdot q^{m+\frac{1}{2}}$ or $-\frac{d}{c}\cdot q^{m+\frac{1}{2}}$, for some $m\in\Z$ (in the special case $c = d$, minor changes are needed in the argument).
Moreover, their residues at these poles are the same, for example
\begin{multline*}
\Res_{z = -\frac{c\sqrt{q}}{d}}{f(z, c, d)} = \Res_{z = -\frac{c\sqrt{q}}{d}}{g(z, c, d)}\\
= -\frac{c\sqrt{q}}{d} \cdot \frac{\thq(c^2\zm, d^2\zm, c^2\zp, d^2\zp)}{\zp\cdot\thq(\zm/\zp, c^2d^2\zm\zp)}
\cdot \frac{(d^2 - c^2)}{c^2d^2 (c^2/d^2, d^2/c^2, q, q; q)_{\infty}}.
\end{multline*}
It follows that the difference $f - g$ is analytic, as a function of $z$, on $\C^*$.
Since it is also periodic, then $f - g$ is bounded on $\C^*$.
Liouville's theorem implies that $f- g$ is independent of $z$, so now it suffices to prove $f(-1, c, d) \stackrel{?}{=} g(-1, c, d)$.

From the formula \eqref{Cconst} for $C(c^2, d^2)$, the equality $f(-1, c, d) \stackrel{?}{=} g(-1, c, d)$ is equivalent to
\begin{multline*}
d^2\zp\frac{\thq'(d^2\zp)}{\thq(d^2\zp)} - c^2\zp\frac{\thq'(c^2\zp)}{\thq(c^2\zp)}
+ \frac{\sqrt{q}c}{d}\frac{\thq'(\sqrt{q}c/d)}{\thq(\sqrt{q}c/d)}
- \frac{\sqrt{q}d}{c}\frac{\thq'(\sqrt{q}d/c)}{\thq(\sqrt{q}d/c)}\\
\stackrel{?}{=} \frac{q \cdot (q; q)_{\infty}^2}{\zp d^2} \frac{\thq(d^2/c^2) \thq(\zp cd/\sqrt{q})^2}{\thq(c^2\zp, d^2\zp)\thq(\sqrt{q}d/c)^2}.
\end{multline*}
From the definition of theta function, we deduce $\thq(z^2) = \thq(z, -z, \sqrt{q}z, -\sqrt{q}z)$ for $z\in\C^*$.
Then $\thq(d^2/c^2) = \thq(d/c, -d/c, \sqrt{q}d/c, -\sqrt{q}d/c)$, so the desired identity becomes
\begin{multline}\label{simple_elliptic}
 d^2\zp\frac{\thq'(d^2\zp)}{\thq(d^2\zp)} - c^2\zp\frac{\thq'(c^2\zp)}{\thq(c^2\zp)}
+ \frac{\sqrt{q}c}{d}\frac{\thq'(\sqrt{q}c/d)}{\thq(\sqrt{q}c/d)}
- \frac{\sqrt{q}d}{c}\frac{\thq'(\sqrt{q}d/c)}{\thq(\sqrt{q}d/c)} \\
\stackrel{?}{=} \frac{q \cdot (q; q)_{\infty}^2}{\zp d^2} \cdot \frac{\thq(d/c, -d/c, -\sqrt{q}d/c) \thq(\zp cd/\sqrt{q})^2}{\thq(c^2\zp, d^2\zp, \sqrt{q}d/c)}.
\end{multline}
Let $\ell(c, d)$ and $r(c, d)$ be the left hand side and right hand side of \eqref{simple_elliptic}, respectively.
As before, one verifies $\ell(qc, d) = \ell(c, d)$ and $r(qc, d) = r(c, d)$.
Moreover $\ell$ and $r$ are meromorphic functions of $c$, with only simple poles exactly at points of the form $\zp^{-1/2}q^{m/2}$, $-\zp^{-1/2}q^{m/2}$, $dq^{m+\frac{1}{2}}$, for some $m\in\Z$.
Also, the residues of both sides coincide at all the poles; for example, one verifies
\begin{equation*}
\Res_{c = d\sqrt{q}}{\ \ell(c, d)} = \Res_{c = d\sqrt{q}}{\ r(c, d)} = 2d\sqrt{q}.
\end{equation*}
Therefore, the difference $\ell(c, d) - r(c, d)$ is a constant independent of $c$, meaning that it will suffice to prove $\ell(c, d) = r(c, d)$ for some value of $c$.
Finally, verify $\ell(\sqrt{q}/(\zp d), d) = r(\sqrt{q}/(\zp d), d) = 0$.

\section{Degeneration to the matrix trigonometric kernel}\label{sec:continuous}

In this section and the next we often use the variable
$$r = r(q) := -\ln{q} > 0,$$
so that $r \to 0^+$ as $q \to 1^-$.
We also use the material in Appendix \ref{app:jacobi} on estimates for theta functions.

\subsection{The matrix trigonometric kernel}\label{sec:trigokernel}

Let $\bfc, \bfd \in \C$ be such that $\bfd = \overline{\bfc} \in \C\setminus \R$ or $m < \bfc, \bfd < m+1$, for some $m\in\Z$.
Let $\Y := \R \sqcup \R$ and, given $u\in\R$, denote the corresponding elements of $\Y$ by $u^{(1)}$ or $u^{(2)}$ (depending on the copy of the real line to which $u$ belongs).
The kernel $K_{q \to 1}^{\mathfrak{c}, \mathfrak{d}}$ on $\Y$ is defined by
\begin{multline*}
K_{q \to 1}^{\mathfrak{c}, \mathfrak{d}}(u^{(i)}, v^{(j)}) :=\\
\begin{cases}
    \displaystyle \frac{\sin(\pi\bfc)\sin(\pi\bfd)}{\pi \sin(\pi(\bfc - \bfd))} \cdot \frac{\sinh\left( \frac{(\bfc - \bfd)(u - v)}{2} \right) }{\sinh\left( \frac{u - v}{2} \right) }, \hfill\text{ if } (i, j) = (1, 1) \text{ or }(2, 2),\\[15pt]
    \displaystyle \frac{\sqrt{\sin(\pi\bfc)\sin(\pi\bfd)}}{\pi \sin(\pi(\bfc - \bfd))} \cdot \frac{\sin(\pi\bfc) \exp\left( \frac{(\bfc - \bfd)(u - v)}{2} \right) - \sin(\pi\bfd) \exp\left( \frac{(\bfc - \bfd)(v - u)}{2} \right)}{\exp\left( \frac{u - v}{2} \right) + \exp\left( \frac{v - u}{2} \right)},\hfill\text{ if } (i, j) = (1, 2),\\[15pt]
    \displaystyle \frac{\sqrt{\sin(\pi\bfc)\sin(\pi\bfd)}}{\pi \sin(\pi(\bfc - \bfd))} \cdot \frac{\sin(\pi\bfd) \exp\left( \frac{(\bfc - \bfd)(u - v)}{2} \right) - \sin(\pi\bfc) \exp\left( \frac{(\bfc - \bfd)(v - u)}{2} \right)}{\exp\left( \frac{u - v}{2} \right) + \exp\left( \frac{v - u}{2} \right)},\hfill\text{ if } (i, j) = (2, 1).
\end{cases}
\end{multline*}
When $(i, j) \in \{ (1, 1), (2, 2) \}$ and $u = v$, we define the kernel by continuity, namely
$$\left. \frac{\sinh\left( \frac{(\bfc - \bfd)(u - v)}{2} \right) }{\sinh\left( \frac{u - v}{2} \right) } \right|_{u = v} = \bfc - \bfd.$$
Note that the case $\bfc = \bfd \in (m, m+1)$, for some $m\in\Z$, is allowed, so one needs to correct the definition of $K_{q \to 1}^{\mathfrak{c}, \mathfrak{d}}$ because it is given by the indeterminate ratio $0/0$  in that case.
The correction is done by using L'H\^opital's rule; see \cite[Sec. 6]{BO_2005} for more details.

The kernel $K_{q \to 1}^{\mathfrak{c}, \mathfrak{d}}$ will be called the \emph{matrix trigonometric kernel}.
It has appeared previously in the literature, e.g. it is called the \emph{tail kernel} in \cite{BO_2005}; it is shown there that it arises as a limit of both the \emph{discrete hypergeometric kernel} and the \emph{Gamma kernel}\footnote{The parameters $z, z'$ in \cite{BO_2005} are exactly $\bfc, \bfd$ in our notation.}.

To obtain $K_{q \to 1}^{\mathfrak{c}, \mathfrak{d}}$ as a limit of the elliptic tail kernel, we have to modify $K^{\ga, \de}$.
Let $\nu: \L \rightarrow \{-1, +1\}$ be $\nu(\zm q^m) = \nu(\zp q^m) := (-1)^m$, and $\mathbf{K}^{\ga, \de}(x, y) := \nu(x)\nu(y)^{-1}K^{\ga, \de}(x, y)$. Then
$$
\hatk^{\ga, \de}(x, y) := \begin{cases}
	\displaystyle \delta_{x, y} - \mathbf{K}^{\ga, \de}(x, y), &\text{ if } x = \zp q^m, \ y = \zp q^n,\\
	\displaystyle -\mathbf{K}^{\ga, \de}(x, y), &\text{ if } x = \zm q^m, \ y = \zp q^n,\\
	\displaystyle \mathbf{K}^{\ga, \de}(x, y), &\text{ if } y = \zm q^n.
\end{cases}
$$

This construction has a simple probabilistic meaning.
Both kernels $\mathbf{K}^{\ga, \de}$ and $K^{\ga, \de}$ differ by a gauge transformation, so they define the same point process $\mathcal{P}$ on the two-sided $q$-lattice $\L = \zp q^{\Z} \sqcup \zm q^{\Z}$.
The kernel $\hatk^{\ga, \de}$ can also be shown to define a point process $\widehat{\mathcal{P}}$ on $\L$.
The processes $\mathcal{P}$ and $\widehat{\mathcal{P}}$ are related by the particle-hole involution on the positive part of the lattice $\zp q^{\Z}$:
$$\text{if }X \text{ is }\mathcal{P}\text{--distributed, then }X \triangle\zp q^{\Z} \text{ is }\widehat{\mathcal{P}}\text{--distributed.}$$
For a proof, see \cite[Appendix \S A.3]{BOO}.

\subsection{Limit to the matrix trigonometric kernel}

\begin{theorem}\label{thmlimitII}
Assume that $\bfc, \bfd \in \C^2$ satisfy either $\bfd = \overline{\bfc} \in \C \setminus \R$ or $m < \bfc, \bfd < m+1$, for some $m\in\Z$; also, $\bfz_-, \bfz_+\in\R$ are arbitrary.
Then $\hatk^{\ga, \de}$ degenerates to the matrix trigonometric kernel $K_{q \to 1}^{\mathfrak{c}, \mathfrak{d}}$ in the following limit regime:
\begin{equation}\label{eqn:regimeII}
\begin{gathered}
m = \lfloor (-\ln{q})^{-1}u \rfloor,\quad n = \lfloor (-\ln{q})^{-1}v \rfloor,\\
\zm = -q^{\bfz_-},\quad \zp = q^{\bfz_+},\quad \ga = q^{\bfc - \bfz_+},\quad \de = q^{\bfd - \bfz_+},\quad q \to 1^-.
\end{gathered}
\end{equation}

In other words, identify $\L$ with $\X=\Z\sqcup\Z$ via $\zp q^k \mapsto k^{(1)}$, $\zm q^{\ell} \mapsto {\ell}^{(2)}$, as before, so that $\hatk^{\ga, \de}$ becomes a function on $\X^2$. Then, in the regime \eqref{eqn:regimeII}, we have the pointwise limit
\begin{equation}\label{tail_limit}
 (-\ln{q})^{-1} \hatk^{\ga, \de} \left(\lfloor (-\ln{q})^{-1}u \rfloor^{(i)}, \lfloor (-\ln{q})^{-1}v \rfloor^{(j)} \right) \rightarrow K_{q \to 1}^{\mathfrak{c}, \mathfrak{d}}(u^{(i)}, v^{(j)}).
\end{equation}
\end{theorem}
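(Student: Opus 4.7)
The strategy is to combine the explicit formulas from Section \ref{app:elliptic} with the Jacobi-imaginary-transformation asymptotics gathered in Appendix \ref{app:jacobi}, and then to treat the four blocks $(i,j)\in\{1,2\}^2$ of $\hatk^{\ga,\de}$ one at a time. The parametrization $\zm=-q^{\bfz_-}$, $\zp=q^{\bfz_+}$, $\ga=q^{\bfc-\bfz_+}$, $\de=q^{\bfd-\bfz_+}$, together with $rm\to u$, $rn\to v$ (where $r=-\ln q$), converts $q$-powers into real exponentials: for instance $(\sqrt{\ga\de})^{-(m+n)}\ga^m\de^n=q^{(\bfc-\bfd)(m-n)/2}\to e^{-(\bfc-\bfd)(u-v)/2}$, while $q^{(m-n)/2}-q^{(n-m)/2}\to-2\sinh\!\big(\tfrac{u-v}{2}\big)$ and $|\zp/\zm|^{1/2}q^{(m-n)/2}+|\zm/\zp|^{1/2}q^{(n-m)/2}\to 2\cosh\!\big(\tfrac{u-v}{2}\big)$. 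The key asymptotic inputs from Appendix \ref{app:jacobi} are $(q;q)_\infty\sim\sqrt{2\pi/r}\,e^{-\pi^2/(6r)}$, $(q^a;q)_\infty\sim(q;q)_\infty\, r^{1-a}/\Gamma(a)$, and the derived identity $\thq(q^a)\sim(q;q)_\infty^2\cdot r\sin(\pi a)/\pi$; similar estimates are available for $\thq(-q^a)$ and can be pulled through the constants.

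For the blocks $(1,2)$, $(2,1)$ and $(2,2)$ (with $m\ne n$) the computation is direct. I insert the above expansions into Lemmas \ref{pmsimplifying} and \ref{ppsimplifying}, and in parallel I expand $C(\ga,\de)$ from \eqref{Cconst}: the factor $(\de-\ga)$ contributes $r(\bfc-\bfd)(1+o(1))$, and the reciprocal of $(\de/\ga,\ga/\de,q,q;q)_\infty$ is computed via the $q$-Gamma asymptotic to produce $\bigl[\pi/((\bfc-\bfd)\sin(\pi(\bfc-\bfd))\cdot(q;q)_\infty^4\cdot r^{2})\bigr](1+o(1))$ (using $\Gamma(\bfc-\bfd)\Gamma(\bfd-\bfc)=-\pi/[(\bfc-\bfd)\sin(\pi(\bfc-\bfd))]$). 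All four theta functions $\thq(\ga\zm,\ga\zp,\de\zm,\de\zp)$ in the numerator of $C$, together with the $\thq(\zm/\zp,\ga\de\zm\zp)$ in its denominator, carry the exponentially-small factors $e^{-c/r}$; under Jacobi's imaginary transformation these cancel precisely against $(q;q)_\infty^4$, leaving finite trigonometric constants. The residual $\sin(\pi\bfc)\sin(\pi\bfd)$ comes from $\thq(\ga\zp)\thq(\de\zp)\to 4\sin(\pi\bfc)\sin(\pi\bfd)e^{-\pi^2/(3r)}$ and its partners. Tracking everything shows $r^{-1}C\cdot(\text{kernel ratio})$ converges to the announced trigonometric expressions.

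The $(1,1)$ block is the genuine obstacle. For $m\ne n$ one has $\hatk^{\ga,\de}(m^{(1)},n^{(1)})=-\mathbf{K}^{\ga,\de}(\zp q^m,\zp q^n)$, and the computation mirrors the $(2,2)$ case: the ratio in Lemma \ref{ppsimplifying} again produces $\sinh((\bfc-\bfd)(u-v)/2)/\sinh((u-v)/2)$, and the global sign matches the identity $K^{\bfc,\bfd}_{q\to1}(u^{(1)},v^{(1)})=K^{\bfc,\bfd}_{q\to1}(u^{(2)},v^{(2)})$ demanded by the particle/hole involution. The diagonal $m=n$ is the point that requires new work: by $q$-translation invariance,
\[
\hatk^{\ga,\de}(m^{(1)},m^{(1)})=1-K^{\ga,\de}(\zp,\zp)=1-C\zp\!\left\{\de\frac{\thq'(\de\zp)}{\thq(\de\zp)}-\ga\frac{\thq'(\ga\zp)}{\thq(\ga\zp)}\right\},
\]
and one must extract the $O(r)$ correction to this quantity to match $r\cdot K^{\bfc,\bfd}_{q\to1}(u^{(1)},u^{(1)})=r(\bfc-\bfd)\sin(\pi\bfc)\sin(\pi\bfd)/[\pi\sin(\pi(\bfc-\bfd))]$. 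The expected route is to differentiate the product formula $\thq(z)=(z,q/z;q)_\infty$, giving $z\thq'(z)/\thq(z)=\sum_{k\ge0}[-zq^k/(1-zq^k)+qz^{-1}q^k/(1-qz^{-1}q^k)]$, and to invoke the imaginary transformation to convert this slowly-convergent series (at $z=q^{\bfc}$, $q\to1$) into a rapidly-convergent one whose leading term is $\cot(\pi\bfc)$, with the $O(r)$ correction visible explicitly. A helpful sanity check is that the $\bfz_\pm$-dependence must drop out at leading order, consistent with the independence of the limit from those parameters.

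Assembling the four blocks yields the stated pointwise limit. Beyond the diagonal $(1,1)$ expansion, the only residual check is that each case produces the prefactor $\sin(\pi\bfc)\sin(\pi\bfd)/[\pi\sin(\pi(\bfc-\bfd))]$ (or its signed/symmetric variant); this follows from a careful bookkeeping of the Jacobi-transformed exponentials and gamma factors described above. The anticipated hard step, and the one I would write out last, is the subleading expansion of the logarithmic derivatives $\thq'(q^a)/\thq(q^a)$: once that is in hand, the remaining verifications are routine substitutions into the formulas of Section \ref{app:elliptic}.
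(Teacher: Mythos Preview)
Your proposal is correct and follows essentially the same architecture as the paper's proof: reduce to the simplified formulas of Lemmas \ref{ppsimplifying}--\ref{lem:zero}, estimate the constant $C(\ga,\de)$ via Jacobi's imaginary transformation, handle the off-diagonal blocks directly, and isolate the $(1,1)$ diagonal as the delicate case requiring the asymptotics of $z\thq'(z)/\thq(z)$ at $z=q^{\bfc}$.

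Two minor differences are worth noting. First, you package the theta asymptotics through the $q$-Gamma estimate $(q^a;q)_\infty\sim(q;q)_\infty\,r^{1-a}/\Gamma(a)$ and the reflection formula, whereas the paper works directly with the formulas of Lemmas \ref{thetacomplex0} and \ref{thetacomplex}; these are equivalent, but the paper's route avoids introducing $\Gamma$ and keeps all computations in terms of the exponentials $e^{\pm\pi^2/r}$ and $\sin(\pi\bfc)$, which makes the cancellation of the singular pieces slightly more transparent. Second, for the $(1,1)$ diagonal you propose to differentiate the product formula for $\thq$ and then apply the imaginary transformation to the resulting series; the paper instead differentiates the already-transformed asymptotic of Lemma \ref{thetacomplex0} (writing $f_\ga(x)=\ln\thq(\ga\zp x)$ and reading off $f_\ga'(1)$ from the explicit exponent), which is somewhat cleaner and immediately exhibits the $-(\pi/r)\cot(\pi\bfc)$ leading term together with the $O(1)$ correction $\bfc-\bfd$ you need. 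Either route works; the paper's is shorter once Lemma \ref{thetacomplex0} is in hand.
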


\begin{remark}
In the limit regime \eqref{eqn:regimeII}, note that $(\ga, \de)$ is an admissible pair and $\ga, \de \to 1$, as $q \to 1^-$.
A similar result holds in the case $\ga = -q^{\bfc - \bfz_-}$, $\de = -q^{\bfd - \bfz_-}$; note that $\ga, \de \to -1$, as $q \to 1^-$, in that case.
\end{remark}

\begin{proof}[Proof of Theorem \ref{thmlimitII}]
We analyze $(-\ln{q})^{-1}\hatk^{\ga, \de}(m^{(i)}, n^{(j)})$, for $i, j\in\{1, 2\}$, $m := \lfloor (-\ln{q})^{-1} u \rfloor$, $n := \lfloor (-\ln{q})^{-1} v \rfloor$, using Lemmas \ref{ppsimplifying} and \ref{lem:zero}.
Throughout the proof, the notation $A \sim B$ means $\lim_{q \to 1^{-}}{A/B} = 1$.

\smallskip

\emph{\textbf{Step 1.}}
First, estimate the constant $C = C(\ga, \de)$. Write it as
\begin{equation}\label{Cthetas}
C = \frac{1}{\zp \ga}\times\frac{\thq(\ga\zm, \ga\zp, \de\zm, \de\zp)}{\thq(\zm/\zp, \ga\de\zm\zp, \de/\ga)}\times \frac{1}{(q; q)_{\infty}^2}.
\end{equation}
From Lemma \ref{thetacomplex}, we deduce
\begin{equation*}
\thq(\zm \ga) \sim e^{\frac{\pi^2}{6r}}, \quad
\thq(\zm \de) \sim e^{\frac{\pi^2}{6r}}.
\end{equation*}
On the other hand, from Lemma \ref{thetacomplex0}, we have
\begin{align*}
\thq(\zp \ga) &\sim -i e^{-\frac{\pi^2}{3r}}\left( e^{\pi\bfc i} - e^{-\pi\bfc i} \right)
= 2 e^{-\frac{\pi^2}{3r}} \sin(\pi\bfc),\\
\thq(\zp \de) &\sim -i e^{-\frac{\pi^2}{3r}}\left( e^{\pi\bfd i} - e^{-\pi\bfd i}\right)
= 2 e^{-\frac{\pi^2}{3r}} \sin(\pi\bfd).
\end{align*}
An estimate for $(q; q)_{\infty}$ is in Lemma \ref{thetapositive}.
From Lemma \ref{thetacomplex} again, we have
\begin{equation*}
\thq(\zm/\zp) \sim e^{\frac{\pi^2}{6r}},\quad \thq(\ga\de\zm\zp) \sim e^{\frac{\pi^2}{6r}},\quad \thq(\de/\ga) \sim -i e^{-\frac{\pi^2}{3r} + \pi i(\bfd - \bfc)}.
\end{equation*}
Finally, plugging all these estimates into \eqref{Cthetas}, we obtain
\begin{equation}\label{Cprincipal}
C \sim \frac{r}{\pi} \cdot \frac{\sin(\pi\bfc)\sin(\pi\bfd)}{\sin(\pi(\bfd - \bfc))}.
\end{equation}

\emph{\textbf{Step 2.}}
We now estimate $\hatk^{\ga, \de}(m^{(1)}, n^{(1)})$ and $\hatk^{\ga, \de}(m^{(2)}, n^{(2)})$, for $u\neq v$.

When $u \neq v$, it is not hard to verify that, in our desired limit regime, we have
\begin{equation}\label{++tail}
\frac{\frac{\ga^m \de^n}{(\sqrt{\ga\de})^{m+n}} - \frac{\ga^n \de^m}{(\sqrt{\ga\de})^{m+n}}}{q^{(m-n)/2} - q^{(n-m)/2}}
\sim \frac{\exp\left( \frac{(\bfc - \bfd)(v - u)}{2} \right) - \exp\left( \frac{(\bfc - \bfd)(u - v)}{2} \right)}{\exp\left( \frac{v - u}{2} \right) - \exp\left( \frac{u - v}{2} \right)}
= \frac{\sinh\left( \frac{(\bfc - \bfd)(u - v)}{2} \right)}{\sinh\left( \frac{u - v}{2} \right)}.
\end{equation}

Use \eqref{Cprincipal}, \eqref{++tail}, Lemma \ref{ppsimplifying} and the definition of $\hatk^{\ga, \de}$ to obtain
$$\hatk^{\ga, \de} \left(\lfloor (-\ln{q})^{-1}u \rfloor^{(1)}, \lfloor (-\ln{q})^{-1}v \rfloor^{(1)} \right)
\sim \frac{r\sin(\pi\bfc)\sin(\pi\bfd)}{\pi \sin(\pi(\bfc - \bfd))}\cdot
\frac{\sinh\left( \frac{(\bfc - \bfd)(u - v)}{2} \right)}{\sinh\left( \frac{u - v}{2} \right)}.$$
Multiplying the above estimate by $(-\ln{q})^{-1} = r^{-1}$ proves \eqref{tail_limit} for $i = j = 1$ and $u \neq v$.
Similarly, one can show \eqref{tail_limit} for $i = j = 2$ and $u \neq v$.

\emph{\textbf{Step 3.}}
Next we estimate $\hatk^{\ga, \de}(m^{(1)}, m^{(1)})$ and $\hatk^{\ga, \de}(m^{(2)}, m^{(2)})$, for any $u$.
We need to estimate
$$\zp \left\{ \de\frac{\thq'(\de\zp)}{\thq(\de\zp)} - \ga\frac{\thq'(\ga\zp)}{\thq(\ga\zp)} \right\} = f_{\de}'(1) - f_{\ga}'(1),$$
where $f_{\de}(x) := \ln{\thq(\de\zp x)}$, $f_{\ga}(x) := \ln{\thq(\ga\zp x)}$.

From Lemma \ref{thetacomplex0}, applied to $z = \zp\ga x$, we have
$$\thq(z) \sim -i e^{-\frac{\pi^2}{3r}-\frac{2\pi^2u^2}{r} + \frac{2\pi^2u}{r} + i\pi u}(1 - e^{-\frac{4\pi^2 u}{r}}), \text{ where }u = - \frac{\bfc r}{2\pi i} + \frac{\ln{x}}{2\pi i}.$$
This leads to
$$f'_{\ga}(1) = \left.\frac{d}{dx}\ln{\thq(z)}\right|_{x = 1} \sim -\bfc - \frac{\pi i}{r} + \frac{1}{2} - \frac{2\pi i}{r}\cdot \frac{e^{-\pi \bfc i}}{e^{\pi\bfc i} - e^{-\pi\bfc i}}.$$
Similarly,
$$f'_{\de}(1) \sim -\bfd + \frac{\pi i}{r} + \frac{1}{2} + \frac{2\pi i}{r}\cdot \frac{e^{\pi\bfd i}}{e^{-\pi\bfd i} - e^{\pi\bfd i}}.$$
Therefore
\begin{multline}\label{ppests}
\zp \left\{ \de\frac{\thq'(\de\zp)}{\thq(\de\zp)} - \ga\frac{\thq'(\ga\zp)}{\thq(\ga\zp)} \right\}
= f'_{\de}(1) - f'_{\ga}(1)\\
\sim \bfc - \bfd + \frac{2\pi i}{r} + \frac{2\pi i}{r} \left\{ \frac{e^{-\pi\bfc i}}{e^{\pi\bfc i} - e^{-\pi\bfc i}} + \frac{e^{\pi\bfd i}}{e^{-\pi\bfd i} - e^{\pi\bfd i}} \right\} = \bfc - \bfd - \frac{\pi}{r}\frac{\sin(\pi(\bfc - \bfd))}{\sin(\pi\bfc)\sin(\pi\bfd)}.
\end{multline}
Use \eqref{Cprincipal}, \eqref{ppests}, Lemma \ref{lem:zero} and the definition of $\hatk^{\ga, \de}$ to get
\begin{align*}
\hatk^{\ga, \de} \left(\lfloor (-\ln{q})^{-1}u \rfloor^{(1)}, \lfloor (-\ln{q})^{-1}u \rfloor^{(1)} \right) &\sim 1 + \frac{r\sin(\pi\bfc)\sin(\pi\bfd)}{\pi \sin(\pi(\bfc - \bfd))}\left( \bfc - \bfd - \frac{\pi\sin(\pi(\bfc - \bfd))}{r\sin(\pi\bfc)\sin(\pi\bfd)} \right)\\
&= \frac{r\sin(\pi\bfc)\sin(\pi\bfd)}{\pi \sin(\pi(\bfc - \bfd))} \cdot (\bfc - \bfd).
\end{align*}
Multiplying the above estimate by $(-\ln{q})^{-1} = r^{-1}$ proves \eqref{tail_limit} for $i = j = 1$ (and any $u$).
One similarly shows \eqref{tail_limit} for $i = j = 2$, but using Lemma \ref{thetacomplex} rather than Lemma \ref{thetacomplex0}.

\emph{\textbf{Step 4.}}
Finally, we estimate $\hatk^{\ga, \de}(m^{(1)}, n^{(2)})$ and $\hatk^{\ga, \de}(m^{(2)}, n^{(1)})$.
We need the asymptotics of
$$\frac{\thq(\zm\ga, \zp\de)}{\sqrt{\thq(\zm\ga, \zm\de, \zp\ga, \zp\de)}} \text{ and } \frac{\thq(\zm\de, \zp\ga)}{\sqrt{\thq(\zm\ga, \zm\de, \zp\ga, \zp\de)}}.$$
From the estimates of Step 1, we have
\begin{equation*}
\frac{\thq(\zm\de, \zp\ga)}{\sqrt{\thq(\zm\ga, \zm\de, \zp\ga, \zp\de)}} \sim \frac{\sin(\pi\bfc)}{\sqrt{\sin(\pi\bfc)\sin(\pi\bfd)}},\quad \frac{\thq(\zm\ga, \zp\de)}{\sqrt{\thq(\zm\ga, \zm\de, \zp\ga, \zp\de)}} \sim \frac{\sin(\pi\bfd)}{\sqrt{\sin(\pi\bfc)\sin(\pi\bfd)}}.
\end{equation*}
Finally, from the estimates above, together with \eqref{Cprincipal} and Lemma \ref{pmsimplifying}, the desired \eqref{tail_limit} is proved for $i = 1$, $j = 2$.
The case $i = 2$, $j = 1$ is handled similarly.
\end{proof}

\section{Degeneration to the discrete sine kernel}\label{sec:discrete}

\subsection{The discrete sine kernel}

Let $\phi \in (0, \pi)$ be arbitrary.
The discrete sine kernel (associated to $\phi$) on $\Z$ is
\begin{equation*}
K_{\textrm{sine}}^{\phi}(m, n) :=
\begin{cases}
	\displaystyle \frac{\sin(\phi(m - n))}{\pi(m - n)}, &\text{ if }m \neq n,\\
	\displaystyle \frac{\phi}{\pi}, &\text{ if }m = n.
\end{cases}
\end{equation*}
The sine kernel is translation--invariant.
It is the correlation kernel for the \emph{discrete sine process} on the lattice of integers, see e.g. \cite{BOO}, \cite{BG}.

\subsection{Limit to the discrete sine kernel}

Recall the gauge-transformed elliptic tail kernel $\tilK^{\ga, \de}(x, y)$ defined in \eqref{dfgauge}.

\begin{theorem}\label{thmlimitI}
Let $\varphi\in (0, \pi)$ and $s > 0$ be fixed.
The admissible pair $(\ga, \de)$ may vary, but always satisfying $\ga = \overline{\de} \in \C\setminus \R$.
In the limit regime
\begin{equation}\label{eqn:regimeI}
\begin{gathered}
\frac{\ln{\de} - \ln{\ga}}{2i} = \varphi,\\
|m|, |n| \to \infty,\ q \to 1^-,\text{ in such a way that } m - n \text{ is fixed, and } q^m, q^n \to s,
\end{gathered}
\end{equation}
one has the pointwise limits
\begin{align*}
\tilK^{\ga, \de}(\zp q^{m}, \zp q^{n}) &\to K^{\pi - \varphi}_{\text{sine}}(m, n),\\
\tilK^{\ga, \de}(\zm q^{m}, \zm q^{n}) &\to K^{\varphi}_{\text{sine}}(m, n).
\end{align*}
\end{theorem}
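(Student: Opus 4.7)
The plan is to follow the blueprint of Theorem~\ref{thmlimitII}: substitute the simplified formulas of Lemma~\ref{ppsimplifying} (off-diagonal) and Lemma~\ref{lem:zero} (diagonal) for the elliptic tail kernel, and then establish the $q \to 1^-$ asymptotics of the constant $C = C(\ga,\de)$ and of the remaining theta-function factors via the Jacobi-transformation estimates collected in Appendix~\ref{app:jacobi}. The feature distinguishing the present regime from that of Theorem~\ref{thmlimitII} is that $(\ga,\de)$ stays in the principal series with $|\ga| = |\de|$ bounded and bounded away from zero, so that every argument of the theta functions appearing in $C$ stays in a compact set avoiding $q^{\Z}$ and the off-critical estimate of the appendix (Lemma~\ref{thetacomplex}) applies uniformly.

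First I would perform the explicit reduction. Writing $\ga = \rho e^{-i\varphi}$ and $\de = \rho e^{i\varphi}$ (consistent with $(\ln\de - \ln\ga)/(2i) = \varphi$), one gets $\sqrt{\ga\de} = \rho$ together with $\ga/\sqrt{\ga\de} = e^{-i\varphi}$ and $\de/\sqrt{\ga\de} = e^{i\varphi}$. The numerators in Lemma~\ref{ppsimplifying} therefore collapse to $\mp 2i\sin(\varphi(m-n))$, and with $r := -\ln q$ we have $q^{(m-n)/2} - q^{(n-m)/2} = -2\sinh(r(m-n)/2) \sim -r(m-n)$. Both off-diagonal claims are then reduced to proving the single asymptotic $C(\ga,\de) \sim ir/(2\pi)$ as $q \to 1^-$; on the $\zp$-side one additionally invokes the trigonometric identity $\sin((\pi-\varphi)k) = -(-1)^k\sin(\varphi k)$ (with $k = m - n$) to convert sines of $\varphi k$ into sines of $(\pi-\varphi)k$, noting that the $(-1)^{m+n}$ factor carried by Lemma~\ref{ppsimplifying} equals $(-1)^k$ since $m+n$ and $m-n$ have the same parity.

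Next I would establish the asymptotics of $C$: this is the main step. In the regime~\eqref{eqn:regimeI}, each of the arguments $\ga\z_{\pm}$, $\de\z_{\pm}$, $\zm/\zp$, $\ga\de\zm\zp$ in the first fraction of~\eqref{Cconst} lies in a compact subset of $\C \setminus \{0\}$ bounded away from $q^{\Z}$, so Lemma~\ref{thetacomplex} supplies $\thq(w) \sim e^{\pi^2/(6r)}\cdot g(w)$ with an explicit multiplier $g$. Likewise $(\de/\ga, \ga/\de; q)_\infty = 4\sin^2(\varphi)\cdot (qe^{2i\varphi}, qe^{-2i\varphi}; q)_\infty$ is covered by the same estimates, and $(q;q)_\infty^{2} \sim (2\pi/r)\,e^{-\pi^2/(3r)}$ by Lemma~\ref{thetapositive}. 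The many $e^{c/r}$ exponentials must cancel wholesale across the numerator and denominator of $C$ (which is consistent with the bounded-density fact that $K^{\ga,\de}(\z_{\pm},\z_{\pm}) \in (0,1)$), and collecting the polynomial-in-$r$ residues yields $C \sim ir/(2\pi)$, independently of $\rho$, $\zm$, $\zp$, and $s$.

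Finally I would handle the diagonal $m = n$ case. By Proposition~\ref{Kperiodic} (and its $\zm$-side analog), $\tilK^{\ga,\de}(\z_{\pm} q^m, \z_{\pm} q^m) = K^{\ga,\de}(\z_{\pm}, \z_{\pm})$ independently of $m$, and by Lemma~\ref{lem:zero} these values are expressed through logarithmic derivatives $\thq'(w)/\thq(w)$ at $w = \ga\z_{\pm}, \de\z_{\pm}$. Differentiating Jacobi's transformation produces $1/r$-leading asymptotics for those derivatives; combined with $C \sim ir/(2\pi)$ from the previous step they return the expected diagonal values $(\pi-\varphi)/\pi$ on the $\zp$-side and $\varphi/\pi$ on the $\zm$-side, matching $K^{\pi-\varphi}_{\textrm{sine}}(m,m)$ and $K^\varphi_{\textrm{sine}}(m,m)$. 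The bulk of the labor lies in the asymptotic analysis of $C$; everything else is mechanical once that asymptotic is in hand.
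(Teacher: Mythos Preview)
Your proposal is correct and follows essentially the same approach as the paper: reduce the off-diagonal case via Lemma~\ref{ppsimplifying} to the single asymptotic $C(\ga,\de)\sim ir/(2\pi)$, establish that asymptotic using Lemmas~\ref{thetapositive} and~\ref{thetacomplex}, and handle the diagonal via Lemma~\ref{lem:zero} together with differentiated theta-function asymptotics. The only differences are cosmetic (order of steps, your explicit parametrization $\ga=\rho e^{-i\varphi}$ versus the paper's $c=\ln\ga/(2\pi i)$, and your invocation of Proposition~\ref{Kperiodic} where the paper just cites Lemma~\ref{lem:zero} directly).
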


\begin{remark}
When $q \to 1^-$, the lattice $\L$ approximates any point in the real line.
Theorem \ref{thmlimitI} is saying that near any point $a\in\R\setminus\{0\}$, the point processes $M^{\ga, \de}$ associated to the kernels $\tilK^{\ga, \de}$ (or equivalently, associated to the elliptic tail kernels $K^{\ga, \de}$) weakly converge to a discrete sine process $\mathcal{P}_a$.
Moreover, the parameter of $\mathcal{P}_a$ depends only on the sign of $a$: if $a > 0$, the parameter is $\pi - \varphi$, whereas if $a < 0$, the parameter is $\varphi$.
Observe that the discrete sine process associated to $\pi - \varphi$ is obtained from the one with parameter $\varphi$ by the particle-hole involution on $\Z$.
\end{remark}

\begin{remark}
One can show that, for any pairwise distinct $a_1, \dots, a_k\in\R\setminus \{0\}$, the discrete sine processes $\mathcal{P}_{a_1}, \dots, \mathcal{P}_{a_k}$ (obtained as weak limits of the measures $M^{\ga, \de}$) are independent.
\end{remark}

\begin{proof}[Proof of Theorem \ref{thmlimitI}]
We analyze $\tilK^{\ga, \de}(\zp q^{m}, \zp q^{n})$ and $\tilK^{\ga, \de}(\zm q^{m}, \zm q^{n})$ using Lemmas \ref{ppsimplifying} and \ref{lem:zero}.
Throughout the proof, the notation $A \sim B$ means $\lim_{q \to 1^{-}}{A/B} = 1$.

\smallskip

\emph{\textbf{Step 1.}}
We first analyze $C(\ga, \de)$.

From Lemma \ref{thetapositive}, $(q; q)_{\infty}^{-2} \sim \frac{r e^{\pi^2/3r}}{2\pi}$.
From Lemma \ref{thetacomplex}, we obtain
\begin{gather*}
\thq(\ga\zm) \sim e^{\frac{\pi^2}{6r}-\frac{2\pi^2}{r}(c + z_-)^2 + i\pi(c + z_-)},\quad \thq(\ga\zp) \sim i e^{-\frac{\pi^2}{3r}-\frac{2\pi^2}{r}(c + z_+)^2 - \frac{2\pi^2}{r}(c+z_+) + i\pi(c+z_+)},\\
\thq(\de\zm) \sim e^{\frac{\pi^2}{6r}-\frac{2\pi^2}{r}(d + z_-)^2 + i\pi(d + z_-)},\quad \thq(\de\zp) \sim -i e^{-\frac{\pi^2}{3r}-\frac{2\pi^2}{r}(d + z_+)^2 + \frac{2\pi^2}{r}(d+z_+) + i\pi(d+z_+)},\\
\thq(\zm/\zp) \sim e^{\frac{\pi^2}{6r}-\frac{2\pi^2}{r}(z_- - z_+)^2 + i\pi(z_- - z_+)},\quad \thq(\ga\de\zm\zp) \sim e^{\frac{\pi^2}{6r}-\frac{2\pi^2}{r}(c + d + z_- + z_+)^2 + i\pi(c + d + z_- + z_+)},\\
\thq(\de/\ga) \sim -i e^{-\frac{\pi^2}{3r}-\frac{2\pi^2}{r}(d - c)^2 + \frac{2\pi^2}{r}(d - c) + i\pi(d - c)}.
\end{gather*}

Putting everything together in the formula \eqref{Cthetas} for $C = C(\ga, \de)$ yields
\begin{equation}\label{Cestimate}
C \sim \frac{ri}{2\pi}.
\end{equation}

\emph{\textbf{Step 2.}}
From the assumption \eqref{eqn:regimeI}, we have
\begin{multline}\label{kernel++}
(-1)^{m+n} \times \frac{\frac{\ga^m\de^n}{(\sqrt{\ga\de})^{m+n}} - \frac{\ga^n\de^m}{(\sqrt{\ga\de})^{m+n}}}{q^{(m - n)/2} - q^{(n - m)/2}}
= (-1)^{m+n} \cdot \frac{\exp(\varphi(n - m)i) - \exp(-\varphi(n - m)i)}{r(n - m)}\\
= (-1)^{n-m}\cdot\frac{2i \sin(\varphi(n - m))}{r(n - m)} = \frac{2i \sin((\varphi - \pi)(m - n))}{r(m - n)}.
\end{multline}
Combining this equality with the estimate \eqref{Cestimate} and Lemma \ref{ppsimplifying}, we have
$$\tilK^{\ga, \de}(\zp q^{m}, \zp q^{n}) \sim -\frac{\sin((\varphi - \pi)(m - n))}{\pi(m - n)} = \frac{\sin((\pi - \varphi)(m - n))}{\pi(m - n)},\text{ for }m \neq n.$$
Similarly,
$$\tilK^{\ga, \de}(\zm q^{m}, \zm q^{n}) \sim \frac{\sin(\varphi(m - n))}{\pi(m - n)},\text{ for }m \neq n.$$

\emph{\textbf{Step 3.}}
We still need to study the case $m = n$. Begin with the equality
$$\zp \left\{ \de\frac{\thq'(\de\zp)}{\thq(\de\zp)} - \ga\frac{\thq'(\ga\zp)}{\thq(\ga\zp)} \right\} = f_{\de}'(1) - f_{\ga}'(1),$$
where we denoted $f_{\ga}(w) := \ln{\thq(\ga\zp w)}$, $f_{\de}(w) := \ln{\thq(\de\zp w)}$.
From Lemma \ref{thetacomplex}, we have
$$\ln{\thq(x)} = \left( \frac{\pi^2}{6r} - \frac{2\pi^2u^2}{r} + i\pi u \right)\cdot (1 + o(1)), \text{ where } u = u(x) = \frac{\ln(-x)}{2\pi i}, \ |\Re{u}| < \frac{1}{2}.$$
It will be convenient to use the notation
\begin{equation}\label{notationfracs}
c := \frac{\ln{\ga}}{2\pi i},\quad d := \frac{\ln{\de}}{2\pi i},\quad z_- := \frac{\ln{|\zm|}}{2\pi i},\quad z_+ := \frac{\ln{\zp}}{2\pi i}.
\end{equation}
Use the previous estimate for $x = \ga\zp w$ and $u = \frac{\ln(-\ga\zp w)}{2\pi i} = c + z_+ + \frac{\ln{w}}{2\pi i} + \frac{1}{2}$; it yields
$$
f'_{\ga}(w) \sim -\frac{4\pi^2u}{r}\frac{du}{dw} + i\pi\frac{du}{dw}
= -\frac{4\pi^2}{r}\left( c + z_+ + \frac{\ln{w}}{2\pi i} + \frac{1}{2} \right) \left( \frac{1}{2\pi i w} \right) + i\pi \left( \frac{1}{2\pi i w} \right).
$$
Setting $w = 1$, we can simplify the formula to
$$f'_{\ga}(1) \sim \frac{2\pi i}{r}\left( c + z_+ + \frac{1}{2} \right) + \frac{1}{2}.$$
Similarly,
$$f'_{\de}(1) \sim \frac{2\pi i}{r}\left( d + z_+ - \frac{1}{2} \right) + \frac{1}{2},$$
and therefore
\begin{equation*}
\zp \left\{ \de\frac{\thq'(\de\zp)}{\thq(\de\zp)} - \ga\frac{\thq'(\ga\zp)}{\thq(\ga\zp)} \right\} \sim \frac{2\pi i}{r}( d - c - 1 ) = \frac{2\pi i}{r}\left( \frac{\ln{\de}}{2\pi i} - \frac{\ln{\ga}}{2\pi i} - 1 \right)
= \frac{2i (\varphi - \pi)}{r}.
\end{equation*}
Combining this estimate with \eqref{Cestimate} and Lemma \ref{lem:zero}, we have
$$\tilK^{\ga, \de}(\zp q^{m}, \zp q^{m}) \sim \frac{\pi - \varphi}{\pi},\text{ for any }m.$$
Similarly, we obtain
$$\tilK^{\ga, \de}(\zm q^{m}, \zm q^{m}) \sim \frac{\varphi}{\pi},\text{ for any }m.$$
\end{proof}

\begin{appendix}

\section{Jacobi's Imaginary Transformation}\label{app:jacobi}

The third Jacobi theta function (\cite{WW}) is the analytic function on $\C^*$ defined by \footnote{In contrast with the usual definition, we use the parameter $q^{1/2}$ and not $q$.}
$$\theta_3(z; q) := \sum_{n\in\Z}{z^n q^{n^2/2}} = (q, -\sqrt{q}z, -\sqrt{q}/z; q)_{\infty} = (q; q)_{\infty}\cdot\theta_q(-\sqrt{q}z).$$
The second equality is Jacobi's triple product identity (see \cite{GR}).

Let $r = r(q) := -\ln{q} > 0$. For $z\in\C^*$, let $u = \frac{\ln{z}}{2\pi i}$. Then \emph{Jacobi's Imaginary Transformation} is
\begin{equation}\label{JIT}
\theta_3(z; q) = \left( \frac{2 \pi}{r} \right)^{\frac{1}{2}} e^{-\frac{2\pi^2u^2}{r}} \cdot \theta_3\left( e^{\frac{4\pi^2u}{r}};\ e^{-\frac{4\pi^2}{r}} \right).
\end{equation}

As $q \to 1^-$, then $r \to 0^+$, and so $e^{-\frac{4\pi^2}{r}} \to 0^+$.
Thus, in principle, limits of the Jacobi theta function $\theta_3(z; q)$ when $q \to 1^-$ are related to the limits when $q \to 0^+$.
This relation allows us to prove the following estimates for $q$-Pochhammer symbols.

\begin{lemma}\label{thetapositive}
Set $q = e^{-r}$, then
$$(q; q)_{\infty} = \left( \frac{2\pi}{r} \right)^{\frac{1}{2}}e^{-\frac{\pi^2}{6r}}(1 + o(1)), \textrm{ as } r \to 0^+.$$
\end{lemma}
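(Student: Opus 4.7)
The strategy is to apply JIT \eqref{JIT} at a common simple zero of both sides and differentiate in $z$, obtaining an \emph{exact} identity for $(q;q)_\infty^3$ from which the claimed asymptotic follows immediately. The theta function $\theta_3(z;q)=(q,-\sqrt q z,-\sqrt q/z;q)_\infty$ has a simple zero at $z=-\sqrt q$ (coming from the factor $(-\sqrt q/z;q)_\infty$), corresponding to the value $u_0=\tfrac12+\tfrac{ir}{4\pi}$ of the auxiliary variable $u=(\log z)/(2\pi i)$; correspondingly $w_0:=e^{4\pi^2u_0/r}=-\tilde q^{-1/2}$ (with $\tilde q=e^{-4\pi^2/r}$) is a simple zero of $\theta_3(\cdot;\tilde q)$. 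I will differentiate both sides of \eqref{JIT} with respect to $z$ and evaluate at $z=-\sqrt q$.

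On the left, the triple product gives $\partial_z\theta_3(z;q)\big|_{z=-\sqrt q}=-(q;q)_\infty^3/\sqrt q$, and by the analogous computation $\partial_w\theta_3(w;\tilde q)\big|_{w=w_0}=\tilde q^{1/2}(\tilde q;\tilde q)_\infty^3$. On the right side of \eqref{JIT} only the $z$-derivative of the $\theta_3(e^{4\pi^2u/r};\tilde q)$ factor survives (the other factor vanishes there); combining with the chain-rule Jacobian $dw/dz\big|_{z=-\sqrt q}=-2\pi i e^{2\pi^2/r}/(r\sqrt q)$ and the prefactor $\sqrt{2\pi/r}\,e^{-2\pi^2u_0^2/r}=-i(2\pi/r)^{1/2}e^{-\pi^2/(2r)+r/8}$, all phases $\pm i$ cancel and one obtains the exact identity
\[
(q;q)_\infty^3 \;=\; (2\pi/r)^{3/2}\,e^{r/8-\pi^2/(2r)}\,(\tilde q;\tilde q)_\infty^3.
\]

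Taking positive real cube roots of both sides (both are manifestly positive for $q,\tilde q\in(0,1)$) yields $(q;q)_\infty=(2\pi/r)^{1/2}e^{r/24-\pi^2/(6r)}(\tilde q;\tilde q)_\infty$. As $r\to 0^+$ we have $\tilde q=e^{-4\pi^2/r}\to 0^+$, so $(\tilde q;\tilde q)_\infty=\prod_{n\geq 1}(1-\tilde q^n)=1+O(\tilde q)\to 1$ and $e^{r/24}\to 1$, which proves the lemma. The main obstacle is the careful tracking of signs and complex phases in the chain rule: the factor $e^{-2\pi^2u_0^2/r}$ contributes $e^{-i\pi/2}$ and must conspire with the $-i$ in the Jacobian so that the final result is real and positive, as forced by the left-hand side. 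A slicker but essentially equivalent viewpoint is that the exact identity above \emph{is} the Dedekind eta modular transformation $\eta(-1/\tau)=\sqrt{-i\tau}\,\eta(\tau)$ applied to $\tau=ir/(2\pi)$, and one may alternatively derive it by citing this classical transformation, which in turn follows from \eqref{JIT} by a standard normalization argument.
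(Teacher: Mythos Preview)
Your proof is correct and follows essentially the same approach as the paper: both extract the exact identity $(q;q)_\infty^3=(2\pi/r)^{3/2}e^{r/8-\pi^2/(2r)}(\tilde q;\tilde q)_\infty^3$ from Jacobi's imaginary transformation at the simple zero $z=-\sqrt{q}$, and then read off the asymptotic. The only cosmetic difference is that the paper regularizes by setting $z=-e^{2\pi i\epsilon}/\sqrt{q}$ and sending $\epsilon\to 0$ (so the derivative appears as the ratio $(1-e^{4\pi^2\epsilon/r})/(1-e^{2\pi i\epsilon})$), whereas you differentiate directly; the underlying computation and the resulting identity are identical.
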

\begin{proof}
The definition of $\theta_3(z; q)$ gives $(1 + \sqrt{q}z)^{-1}\theta_3(z; q) = (q, -q\sqrt{q}z, -\sqrt{q}/z; q)_{\infty}$.
Then set $z = -\frac{e^{2\pi i \epsilon}}{\sqrt{q}}$; Jacobi's imaginary transformation yields
\begin{multline*}
(q, qe^{2\pi i \epsilon}, qe^{-2\pi i \epsilon}; q)_{\infty} = (1 - e^{2\pi i \epsilon})^{-1}
\left( \frac{2 \pi}{r} \right)^{\frac{1}{2}} e^{-\frac{2\pi^2}{r}(\frac{1}{2} + \epsilon - \frac{ri}{4\pi})^2} \cdot \theta_3\left( -e^{\frac{2\pi^2}{r} + \frac{4\pi^2\epsilon}{r}};\ e^{-\frac{4\pi^2}{r}} \right)\\
= \left( \frac{2 \pi}{r} \right)^{\frac{1}{2}} e^{-\frac{2\pi^2}{r}(\frac{1}{2} + \epsilon - \frac{ri}{4\pi})^2}
\cdot\frac{1 - e^{\frac{4\pi^2 \epsilon}{r}}}{1 - e^{2\pi i \epsilon}}
\times\prod_{n=1}^{\infty}{(1 - e^{-\frac{4\pi^2 n}{r}})(1 - e^{-\frac{4\pi^2 n}{r} + \frac{4\pi^2\epsilon}{r}})(1 - e^{-\frac{4\pi^2 n}{r} - \frac{4\pi^2\epsilon}{r}})}.
\end{multline*}
Take the limit $\epsilon \to 0^+$ to get
\begin{align*}
(q; q)_{\infty}^3 &= -i\left( \frac{2 \pi}{r} \right)^{\frac{3}{2}} e^{-\frac{2\pi^2}{r}(\frac{1}{2} - \frac{ri}{4\pi})^2} \times
\prod_{n=1}^{\infty}{(1 - e^{-\frac{4\pi^2 n}{r}})^3}\\
&= \left( \frac{2 \pi}{r} \right)^{\frac{3}{2}} e^{-\frac{\pi^2}{2r} + \frac{r}{8}} \times
\prod_{n=1}^{\infty}{(1 - e^{-\frac{4\pi^2 n}{r}})^3} = \left( \frac{2 \pi}{r} \right)^{\frac{3}{2}} e^{-\frac{\pi^2}{2r}} (1 + o(1)),\text{ as }r\rightarrow 0^+,
\end{align*}
from which the result follows.
\end{proof}

\begin{lemma}\label{thetacomplex0}
Set $q = e^{-r}$.
For $z \in \C \setminus \R_{\leq 0}$, set $u = u(z) := \frac{\ln{z}}{2\pi i}$.
Then
$$\theta_q(z) = i e^{- \frac{\pi^2}{3r}-\frac{2\pi^2 u^2}{r} - \frac{2\pi^2 u}{r} + i \pi u}(1 - e^{\frac{4\pi^2u}{r}})\cdot (1 + o(1)), \textrm{ as }r \to 0^+.$$
The estimate is uniform for $|\arg{z}| \leq \pi - \epsilon$, where $\epsilon > 0$ is arbitrary.
\end{lemma}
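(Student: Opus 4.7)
The plan is to apply Jacobi's imaginary transformation (JIT), which converts asymptotics as $q \to 1^-$ into asymptotics of a theta series with parameter $\hat q := e^{-4\pi^2/r} \to 0^+$, where only the first two nonzero terms dominate. Concretely, the triple product identity stated before \eqref{JIT} gives $\theta_3(w; q) = (q;q)_\infty \cdot \thq(-\sqrt q\,w)$, so
$$\thq(z) = \frac{\theta_3(-z/\sqrt q;\,q)}{(q;q)_\infty}.$$
I would apply JIT to the numerator with $\tilde u := \frac{\ln(-z/\sqrt q)}{2\pi i}$; any branch works because JIT is invariant under $\tilde u \mapsto \tilde u + 1$ (a short check), and the choice $\ln(-z) = \ln z + i\pi$ gives $\tilde u = u + \tfrac12 - \tfrac{ri}{4\pi}$.

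Next, I would expand everything in terms of $u$ and $r$. A direct expansion yields
$$-\frac{2\pi^2 \tilde u^2}{r} = -\frac{\pi^2}{2r} - \frac{2\pi^2 u^2}{r} - \frac{2\pi^2 u}{r} + i\pi u + \frac{i\pi}{2} + \frac{r}{8},$$
while $e^{4\pi^2 \tilde u/r} = -e^{2\pi^2/r + 4\pi^2 u/r}$, so by the definition of $\theta_3$,
$$\theta_3\bigl(e^{4\pi^2 \tilde u/r};\, \hat q\bigr) = \sum_{n \in \Z}(-1)^n\, e^{(4\pi^2 u/r)n - 2\pi^2 n(n-1)/r}.$$
Since $n(n-1) = 0$ only for $n \in \{0, 1\}$, the dominant contribution is $1 - e^{4\pi^2 u/r}$. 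Combining with Lemma~\ref{thetapositive}, which supplies $(q;q)_\infty^{-1} \sim (r/(2\pi))^{1/2} e^{\pi^2/(6r)}$, the $(2\pi/r)^{1/2}$ factors cancel, the exponents combine via $-\tfrac{\pi^2}{2r} + \tfrac{\pi^2}{6r} = -\tfrac{\pi^2}{3r}$, the residual $e^{r/8}$ is absorbed into $1+o(1)$, and $e^{i\pi/2} = i$ produces the overall $i$, reproducing the asserted formula.

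The main technical obstacle is uniformity of the tail estimate for the $\hat q$-theta series over $|\arg z| \le \pi - \epsilon$, which corresponds to $|\Re u| \le \tfrac12 - \tfrac{\epsilon}{2\pi}$. For each $n \in \Z \setminus \{0, 1\}$, a short case split on the sign of $n$ gives
$$\frac{4\pi^2}{r}\Bigl(n\Re u - \tfrac{n(n-1)}{2}\Bigr) \le -\frac{2\pi\epsilon\,|n|}{r},$$
so summing a geometric series shows the omitted terms contribute $O\bigl(e^{-c(\epsilon)/r}\bigr)$ with $c(\epsilon) > 0$, which is $o(1 - e^{4\pi^2 u/r})$ uniformly in the stated region.
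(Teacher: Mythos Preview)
Your approach is essentially the paper's: both use $\thq(z) = \theta_3(-z/\sqrt q;\,q)/(q;q)_\infty$, Jacobi's imaginary transformation, and Lemma~\ref{thetapositive}, and your algebraic expansions of $\tilde u$, $-2\pi^2\tilde u^2/r$, and the series for $\theta_3(e^{4\pi^2\tilde u/r};\hat q)$ are all correct. The only difference is that the paper expands the transformed $\theta_3$ via its triple product, whereas you use its power series.

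There is, however, a gap in your uniformity step. You bound the tail $\sum_{n\notin\{0,1\}}$ by $O(e^{-c(\epsilon)/r})$ and then assert this is $o(1 - e^{4\pi^2 u/r})$ uniformly on $|\arg z|\le\pi-\epsilon$. That last claim is false: the main term $1 - e^{4\pi^2 u/r}$ vanishes whenever $u = ikr/(2\pi)$, i.e.\ $z = q^k$, and these points (the zeros of $\thq$) all lie inside the region. Near such a point the main term is arbitrarily small, so a bare additive bound $O(e^{-c/r})$ on the tail cannot be converted into a uniform multiplicative $o(1)$. This is exactly why the paper uses the product form on the transformed side: by the triple product, $\theta_3(e^{4\pi^2\tilde u/r};\hat q)$ equals $(\hat q;\hat q)_\infty$ times a product in which $(1 - e^{4\pi^2 u/r})$ appears as an honest factor, and every remaining factor is $1 + O(e^{-c/r})$ uniformly for $|\Re u|\le \tfrac12 - \tfrac{\epsilon}{2\pi}$. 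That gives the $(1+o(1))$ directly. You can repair your argument either by switching to the product form at this point, or by pairing $n\leftrightarrow 1-n$ in your series and factoring out $(e^{4\pi^2 u/r}-1)$ from each pair before estimating; but the bound as you wrote it does not suffice.
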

\begin{proof}
In terms of the Jacobi theta function, we have
\begin{equation}\label{thetasrelation}
\theta_q(z) = \frac{\theta_3(-z/\sqrt{q}; q)}{(q; q)_{\infty}}.
\end{equation}
First assume $-\pi < \arg{z} \leq 0$. Let $v = \frac{\ln{(-z/\sqrt{q})}}{2\pi i} = \frac{\ln{z} - (\ln{q})/2 + \pi i}{2\pi i} = u + \frac{r}{4\pi i} + \frac{1}{2}$, so that $0 \leq \Re u < \frac{1}{2}$.
Then, Jacobi's imaginary transformation and the definition of $\theta_3$ give
\begin{align}
\theta_3(-z/\sqrt{q}; q) &= \left( \frac{2\pi}{r} \right)^{\frac{1}{2}} e^{-\frac{2\pi^2}{r}(u + \frac{r}{4\pi i} + \frac{1}{2})^2}
\prod_{n = 1}^{\infty}{\left( 1 - e^{-\frac{4n\pi^2}{r}} \right)} \left( 1 - e^{\frac{4\pi^2}{r}(u - n)} \right) \left( 1 - e^{\frac{4\pi^2}{r}(u - n + 1)} \right)\nonumber\\
&= \left( \frac{2\pi}{r} \right)^{\frac{1}{2}} e^{-\frac{2\pi^2}{r}(u + \frac{r}{4\pi i} + \frac{1}{2})^2} (1 - e^{\frac{4\pi^2u}{r}}) \cdot (1 + o(1)), \text{ as }r\to 0^+.\label{estimate1}
\end{align}
From \eqref{thetasrelation}, \eqref{estimate1}, and Lemma \ref{thetapositive}, we obtain the desired result.
The case $0 \leq \arg{z} < \pi$ is analogous, and the statement about uniformity is evident.
\end{proof}

\begin{lemma}\label{thetacomplex}
Set $q = e^{-r}$.
For $z \in \C\setminus \R_{\geq 0}$, set $v = v(z) := \frac{\ln(-z)}{2\pi i}$ so that $-\frac{1}{2} < \Re v < \frac{1}{2}$.
Then
$$\theta_q(z) = e^{\frac{\pi^2}{6r}-\frac{2\pi^2v^2}{r} + i \pi v}\cdot (1 + o(1)), \textrm{ as }r \to 0^+.$$
The estimate is uniform for $|\arg z| \geq \epsilon$, where $\epsilon > 0$ is arbitrary.
\end{lemma}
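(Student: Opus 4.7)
The plan is to follow the same strategy as in the proof of Lemma~\ref{thetacomplex0}: write $\theta_q(z)$ as a ratio involving the Jacobi theta function $\theta_3$, apply Jacobi's imaginary transformation \eqref{JIT} to push the analysis into the regime where $e^{-4\pi^2/r}$ is the small parameter, and then read off the leading exponential asymptotics. The main difference with Lemma~\ref{thetacomplex0} is the location of the branch cut: there the hypothesis was $z \in \C \setminus \R_{\leq 0}$ and one parametrised via $u = \ln(z)/(2\pi i)$, whereas here $z \in \C \setminus \R_{\geq 0}$ and it is natural to use $v = \ln(-z)/(2\pi i)$ with $|\Re v| < 1/2$.

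Concretely, I would start from the identity
$$\theta_q(z) = \frac{\theta_3(-z/\sqrt{q};\, q)}{(q;q)_\infty}$$
recalled in the previous proof. Since $-z/\sqrt{q} = \exp(2\pi i\, u)$ for $u := v - ri/(4\pi)$, Jacobi's imaginary transformation gives
$$\theta_3(-z/\sqrt{q};\, q) = \left(\tfrac{2\pi}{r}\right)^{1/2} e^{-2\pi^2 u^2/r}\, \theta_3\!\left(e^{4\pi^2 u/r};\, e^{-4\pi^2/r}\right).$$
A direct computation shows $e^{4\pi^2 u/r} = -e^{4\pi^2 v/r}$, so by the Jacobi triple product,
$$\theta_3\!\left(-e^{4\pi^2 v/r};\, e^{-4\pi^2/r}\right) = \prod_{n\geq 1}\left(1 - e^{-4\pi^2 n/r}\right)\left(1 - e^{-4\pi^2(n-1/2)/r + 4\pi^2 v/r}\right)\left(1 - e^{-4\pi^2(n-1/2)/r - 4\pi^2 v/r}\right).$$

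Next I would verify that this product is $1 + o(1)$ as $r \to 0^+$, uniformly under the hypothesis $|\arg z| \geq \epsilon$. This is the step that requires a little care: the hypothesis translates into $|\Re v| \leq 1/2 - \epsilon/(2\pi)$, which forces every exponent $-4\pi^2(n - 1/2)/r \pm 4\pi^2 v/r$ to be bounded above by $-2\pi\epsilon/r$ for $n \geq 1$. Hence every factor in the product is $1 + O(e^{-2\pi\epsilon/r})$ and the whole product tends to $1$ uniformly. Without this uniform gap, the factor corresponding to $\Re v \to \pm 1/2$ would not be controlled; this is where the excluded directions $\arg z \to 0$ would cause trouble.

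Finally, expanding $u^2 = v^2 - ivr/(2\pi) - r^2/(16\pi^2)$ yields
$$-\frac{2\pi^2 u^2}{r} = -\frac{2\pi^2 v^2}{r} + i\pi v + \frac{r}{8},$$
and combining with Lemma~\ref{thetapositive}, which gives $(q;q)_\infty^{-1} = (2\pi/r)^{-1/2} e^{\pi^2/(6r)} (1 + o(1))$, the $(2\pi/r)^{1/2}$ factors cancel and the $e^{r/8}$ is absorbed into $1 + o(1)$, leaving precisely
$$\theta_q(z) = e^{\pi^2/(6r) - 2\pi^2 v^2/r + i\pi v} \cdot (1 + o(1)),$$
which is the claimed estimate. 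The only non-routine point is the uniformity argument above; the rest is bookkeeping identical in spirit to the proof of Lemma~\ref{thetacomplex0}.
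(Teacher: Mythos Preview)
Your proposal is correct and follows exactly the approach indicated by the paper, which simply states that the proof is similar to that of Lemma~\ref{thetacomplex0}. You have carefully filled in all the details of that analogy, including the correct parametrisation $u = v - ri/(4\pi)$, the triple product expansion, the uniformity argument via $|\Re v| \le 1/2 - \epsilon/(2\pi)$, and the final bookkeeping with Lemma~\ref{thetapositive}.
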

\begin{proof}
The proof is similar to that of Lemma \ref{thetacomplex0}.
\end{proof}

\end{appendix}

\bigskip

Cesar Cuenca: Department of Mathematics, Caltech, Pasadena, CA, USA.

Email address: cesar.a.cuenk@gmail.com

\bigskip

Vadim Gorin: Department of Mathematics, MIT, Cambridge, MA, USA; Institute for Information Transmission Problems, Moscow, Russia.

Email address: vadicgor@gmail.com

\bigskip

Grigori Olshanski: Institute for Information Transmission Problems, Moscow, Russia;
Skolkovo Institute of Science and Technology, Moscow, Russia;
Higher School of Economics, Moscow, Russia.

Email address: olsh2007@gmail.com

\end{document}